\documentclass[11pt]{article}

\usepackage[margin=1in]{geometry}

\widowpenalty=0
\clubpenalty=0

\usepackage{rotating}
\usepackage{makecell}

\usepackage{natbib}

\usepackage{booktabs}
\usepackage{multirow}
\usepackage{tcolorbox}

\newcolumntype{C}[1]{>{\centering\arraybackslash}p{#1}}

\usepackage[multiple]{footmisc}

\usepackage{float}
\usepackage{comment}
\usepackage{soul}
\usepackage{wrapfig}

\usepackage{amsmath,amssymb,amsthm,mathrsfs,amsfonts,dsfont, mathtools}

\usepackage{graphicx}
\usepackage{tikz}
\usetikzlibrary{calc}
\usetikzlibrary{shapes.geometric}
\usepackage{paralist}

\usepackage{wrapfig}

\usepackage{float}
\usepackage{comment}

\usepackage{thmtools}
\usepackage{thm-restate}

\usepackage{caption}
\usepackage[labelformat=simple]{subcaption}

\usepackage{aliascnt}

\usepackage{xcolor}

\usepackage[backref=page]{hyperref}
\hypersetup{colorlinks=true,citecolor=blue,linkcolor=blue}
\hypersetup{colorlinks=true}
\hypersetup{linkcolor=[rgb]{.7,0,0}}
\hypersetup{citecolor=[rgb]{0,.7,0}}
\hypersetup{urlcolor=[rgb]{.7,0,.7}}

\usepackage{cleveref}

\declaretheorem[numberwithin=section]{theorem}
\declaretheorem[sibling=theorem, style=definition]{definition}
\declaretheorem[sibling=theorem]{lemma}

\declaretheorem[sibling=theorem]{corollary}
\declaretheorem[sibling=theorem, style=definition]{remark}
\declaretheorem[sibling=theorem]{proposition}

\newcommand{\R}{\mathbb{R}}

\newcommand{\N}{\mathbb{N}}

\newcommand{\1}[1]{\mathds{1}\left[{#1}\right]}

\newcommand{\type}{\mathcal{T}}

\makeatletter
\newcommand\Ps@textstyle[2]{\mathbb{P}_{#1}\left[{#2}\right]}
\newcommand\Es@textstyle[2]{\mathbb{E}_{#1}\left[{#2}\right]}
\newcommand\Ps[2]{%
  \mathchoice %
  {\underset{{#1}}{\mathbb{P}}\left[{#2}\right]}
  {\Ps@textstyle{#1}{#2}}
  {\Ps@textstyle{#1}{#2}}
  {\Ps@textstyle{#1}{#2}}
}
\newcommand\Es[2]{%
  \mathchoice %
  {\underset{{#1}}{\mathbb{E}}\left[{#2}\right]}
  {\Es@textstyle{#1}{#2}}{\Es@textstyle{#1}{#2}}{\Es@textstyle{#1}{#2}}
}
\makeatother

\hypersetup{pdfauthor={Aadityan Ganesh, Clayton Thomas, and S. Matthew Weinberg}}

\usepackage{comment}

\usepackage{sidecap}

\usepackage{csquotes}
\usepackage{colortbl}

\usepackage{tikz}
\usetikzlibrary{automata, positioning, calc, fit, shapes.misc}
\usetikzlibrary{matrix}
\usetikzlibrary{positioning}
\usetikzlibrary{arrows, decorations.pathmorphing}
\usetikzlibrary{decorations.pathreplacing,calligraphy}

\newcommand{\T}{\mathcal{T}}

\newcommand{\FeasAlloc}{\mathcal{X}}
\newcommand{\Outcomes}{\mathcal{Y}}
\newcommand{\TypeDistr}{\mathcal{T}}

\newcommand{\rev}{\mathsf{Rev}}

\usepackage[shortlabels]{enumitem}

\usepackage{multirow}

\usepackage{pifont}%
\newcommand{\cmark}{\ding{51}}%
\newcommand{\xmark}{\ding{55}}%

\begin{document}

\title{
Revisiting the Primitives of Transaction Fee Mechanism Design\thanks{For helpful discussion and feedback, we thank Jacob Leshno, seminar and talk participants at CMU, Princeton, Microsoft Research, and EC 2024, and the three anonymous reviewers for this paper at EC 2024.}
}
\author{
  Aadityan Ganesh\thanks{Princeton University | \emph{E-mail}: \href{mailto:}{aadityanganesh@princeton.edu}.}
  \and
  Clayton Thomas\thanks{Microsoft Research | \emph{E-mail}: \href{mailto:}{thomas.clay95@gmail.com}.}
  \and 
  S. Matthew Weinberg\thanks{Princeton University | \emph{E-mail}: \href{mailto:}{smweinberg@princeton.edu}. Supported by NSF CAREER Award CCF-1942497.}
}
\date{}

\begin{titlepage}
\maketitle
\begin{abstract}
  
Transaction Fee Mechanism Design studies auctions run by untrusted miners for transaction inclusion in a blockchain. 
Under previously-considered desiderata, an auction is considered `good' if, informally-speaking, each party (i.e., the miner, the users, and coalitions of both miners and users) has no incentive to deviate from the fixed and pre-determined protocol.
In other words, prior works posit that a `good' auction should be `simple for users', `simple for miners', and `resistant to collusion'.

In this paper, we propose a novel desideratum for Transaction Fee Mechanisms.
We say that a TFM is \emph{off-chain influence proof} when the miner cannot achieve additional revenue by running a separate auction off-chain.
While the previously-highlighted EIP-1559
is the gold-standard according to prior desiderata (satisfying simplicity for users and miners along with collusion resistance in the `typical' case where supply is functionally unlimited),
we show that it \emph{does not} satisfy off-chain influence proofness.
Intuitively, this holds because a Bayesian revenue-maximizing miner can strictly increase profits by persuasively threatening to censor any bids that do not transfer a tip directly to the miner off-chain.

On the other hand, we reconsider the Cryptographic (multi-party computation assisted) Second Price Auction mechanism \citep{ShiCW23}, which is technically not `simple for miners' according to previous desiderata (since miners may wish to set a reserve by fabricating bids).
We argue that the space of TFMs can be naturally expanded to solicit an input from the miner, for example by asking them to set the reserve price of the auction. We show that in this model, the Cryptographic Second Price Auction satisfies simplicity for users and miners, and off-chain influence proofness, since it allows a Bayesian miner to maximize their revenue by posting an optimal reserve price. 

Finally, we prove a strong impossibility result: no mechanism satisfies all previously-considered properties along with off-chain influence proofness, even with unlimited supply, and even after soliciting miner input.

\end{abstract}
\thispagestyle{empty}
\end{titlepage}

\begin{titlepage}

\maketitle

\end{titlepage}

\section{Introduction}
\label{sec:intro}

Transaction fee mechanisms (henceforth, TFMs) are auctions used by blockchains to decide which transactions are included in each block, along with a price (``transaction fee'') charged to the user submitting the transaction.
TFMs are an integral part of blockchain infrastructure, and are hotly debated.
In one high-profile example, the Ethereum blockchain switched from the status-quo first-price auction (which is still in use in Bitcoin) to a mechanism known as EIP-1559.
In essence, EIP-1559 is a posted-price mechanism with a variable price exogenously set by market activity, in which all user-paid transaction fees are discarded (``burned'') and the miner\footnote{
    Following \cite{Roughgarden21} and the common terminology used in proof-of-work blockchains, we use the language of ``miner'', even though block creators are now ``stakers'' in Ethereum since the switch to proof-of-stake.
    }
who constructs the block receives only a positive constant revenue (``block reward'').

The redesign of Ethereum's TFM was meant to simplify bidding behavior for users (who may be small-scale retail users valuing ease-of-participation), while avoiding placing trust in the miner running the TFM algorithm (who may be a strategically sophisticated, untrusted, profit-maximizing agent). 
Balancing these two objectives is a challenging task.
For example, in the pre-existing First-Price Auction, users were frustrated that they had to bid strategically,\footnote{
    Anecdotally, ``autobidders'' like MetaMask were not particularly good at optimal bidding either, due to volatility in demand.
}
while under a Second-Price Auction, we would expect the pseudonymous miner to submit a shill bid equal to (just under) the highest submitted bid.
In contrast, a series of works (discussed in detail below) has shown that EIP-1559 provides formal guarantees on agents' incentives: 
participants (users, miners, or even coalitions containing both the miner and some set of users) cannot increase their utility by manipulating their inputs to the protocol.

Consider, however, the following miner strategy in EIP-1559.
The miner publicly announces \emph{off-chain} that they will censor all bids that do not directly transfer the miner an ``entry fee'' $\gamma$ via some off-chain payment platform.
If \emph{any} user transfers the miner this $\gamma$ in order to get their transaction considered, then the miner will strictly profit from this attack.
Thus, despite the fact that the miner cannot profit in EIP-1559 by deviating from the protocol while remaining entirely on-chain, the miner \emph{can} profit by making convincing threats and collecting payments off-chain.

In this paper, we take ``off-chain'' attacks of the above form seriously,
and propose a model in which the miner has the ability to conduct \emph{any off-chain mechanism whatsoever} in order to determine the inputs of the TFM algorithm. 
Using this model, we formalize and highlight the above novel attack on EIP-1559.
Alongside this, we also make the natural suggestion that the miner be \emph{allowed} to strategize on-chain, so long as this does not harm the incentives of the users.
We consider a cryptographic implementation of a second-price auction in which the miner strategizes on-chain by choosing the reserve price.
We show that (while adopting our new desiderata involves a provably-unavoidable trade-off with prior definitions of resistance to user-miner collusion) the Cryptographic Second-Price Auction achieves good incentive properties both on-chain and off-chain under our revised notions.

\paragraph{Prior frameworks on TFM design.} 
Before giving more exposition into our framework, we review past approaches to the study of transaction fee mechanism design.
Drawing on classical auction theory, 
\cite{Roughgarden21} initiated the modern study of TFM design and proposed three elegant desiderata that a transaction fee mechanism should satisfy, with a rapidly developing agenda around them~\citep{ChungS23, WuSC24, ChungRS24, GafniY24}.
At a high level, these desiderata are:
\begin{itemize}
    \item \textbf{Simple for Users.} Users who know their value $v$ for their transaction being included in a block should optimize their expected utility by bidding $v$. This is called User Incentive Compatibility (UIC) in~\cite{Roughgarden21}, and is equivalent to familiar notions of (Dominant Strategy) Incentive Compatibility.
    
    \item \textbf{Simple for Miners.} Miners should maximize their revenue by executing the intended auction, rather than deviating. This is called Myopic Miner Incentive Compatibility (MMIC) in~\cite{Roughgarden21} (which also has conceptual connections to the notion of credibility \citep{AkbarpourL20} in classical auction settings).
    
    \item \textbf{Resistant to Collusion.} No miner together with a collection of users should profit by first writing a binding side-contract and then altering their behavior in the auction.
    \cite{Roughgarden21} terms an auction Off-Chain Agreement-Proof (OCA-proof) if the miner together with all users cannot jointly profitably deviate, and \cite{ChungS23} term an auction $c$-Side Contract Proof ($c$-SCP) if the miner together with a set of any $c$ users cannot jointly profitably deviate.
\end{itemize}

These desiderata have proven very useful for describing the properties of different TFMs.
However, in our paper, we argue that as previously studied, these desiderata fail to capture important strategic aspects of two notable mechanisms---EIP-1559, and the Cryptographic Second Price Auction.

\paragraph{EIP-1559, and Off-Chain Influence Proofness.}
We now discuss the first of our two main motivating mechanisms, EIP-1559.
The ideal version of EIP-1559 is an unlimited supply posted-price mechanism with an exogenous price, where all revenue is burned. That is, any bid exceeding an exogenously set price $p$ is included and pays $p$, but those payments are burned and the miner always receives zero revenue.\footnote{
    Of course, there is not actually unlimited supply of blockspace in Ethereum. EIP-1559 behaves like this in the event that the exogenous price $p$ is sufficiently high so that supply exceeds demand. In the event that demand exceeds supply, EIP-1559 devolves into a first-price auction with reserve $p$, except that $p$ of each bidder's payment is burned and only the excess `tip' is transferred to the miner.
} 
Prior desiderata consider this ideal (i.e., unlimited supply) EIP-1559 as ``the dream TFM''---\cite{Roughgarden21} establishes that it is UIC, MMIC, and OCA-proof.\footnote{
    And moreover, with limited supply, \citet{ChungS23} show no mechanism satisfies all three of UIC, MMIC, and $1$-SCP.
}

However, observe that a miner receives zero revenue under the ideal EIP-1559,\footnote{
    This has been observed several times, and motivates follow-up work such as~\cite{WuSC24} which we discuss in \autoref{sec:related}.
} and MMIC essentially just observes that censoring bids or including fake bids cannot change that (in fact, including fake bids would even harm the miner, as now the miner has to pay the burned bids). 
But consider the miner behavior discussed above: the miner publicly announces \emph{off-chain} that they intend to censor all bids that don't pay a fee of $\gamma$.
This transfer could take place off-chain, or in the actual implementation of EIP-1559, the miner could demand this transfer \emph{as part of the included transaction tip} (a part of the EIP-1559 protocol designed to handle periods of high demand).
This strategy dominates honestly implementing EIP-1559---the worst that can happen is that all bidders ignore the miner (in which case the miner still gets zero), but if \emph{any} user is willing to pay $\gamma$ on top of the exogenous posted price, then the miner gets strictly positive utility.
This deviation is not in violation of MMIC---it is an \emph{off-chain} deviation, and cannot be implemented simply by censoring transactions or including fake bids (the main types of deviations considered by prior desiderata).\footnote{
    Nor is it a violation of OCA-proofness/$c$-SCP -- this miner strategy does not cause the sum of miner and users' on-chain utilities to increase. It just adds a direct transfer from users to miners.
}
We therefore argue that MMIC is insufficient to claim that a transaction fee mechanism is simple for miners.

Inspired by the above, we propose to add a new desiderata, \emph{off-chain influence-proofness}, which informally states that off-chain behavior of the form above is not profitable, and the miner can always optimize her revenue by engaging directly with the on-chain mechanism (see \autoref{sec:Definitions} for a formal definition).

\paragraph{Cryptographic Second-Price Auction, and On-Chain Miner-Set Reserves.}
We now discuss our second main motivating mechanism, the Cryptographic Second-Price Auction (C2PA).
Motivated by the recent push towards heavyweight cryptography in blockchains, we consider a model which abstracts from the actual cryptographic technology used, in which users can submit encrypted bids, and the only role of the miner is again to decide which bids (including fake bids of their own) are given as input to a cryptographic protocol.\footnote{
    That is, in our model, the miner must decide which users' bids to forward to the TFM algorithm without \emph{any} information on the actual contents of the bid.
    Our model and results are thus agnostic to whether Multi-Party Computation (MPC), or Verifiable Delay Functions (VDFs), or some other cryptographic primitives are used in the implementation. 
    \cite{ShiCW23} consider a slightly different model termed MPC-assisted Second-Price Auction, in which the miner cannot even censor bids.
}
From the viewpoint of simplicity for users and miners, the C2PA intuitively seems perfect, with the only drawback being the use of heavyweight cryptography. 
However, it follows from \citet{ChungS23} that C2PA is, in fact, not MMIC. 
Specifically, C2PA does not have a built-in reserve, so technically the miner will always want to add a fake bid to act as a reserve.

While correct, we argue that to some extent, this classification seems to be missing the point. 
The blockchain community initially considered MMIC because we don't trust the miner, and we intuitively conclude that when the TFM is MMIC, we can assume the miner will ``actually'' implement the auction, and when it is additionally UIC, users can bid their true values without worry.
But in C2PA, it is still the case the each user should really go ahead and bid their actual value. In fact, due to our discussion above on off-chain influence proofness, this is perhaps even more true in C2PA than in EIP-1559!
We argue therefore that MMIC in prior models may be too restrictive, because (in mechanism design terminology) it requires the miner to implement exactly a prior-free mechanism, and does not give the miner any flexibility to augment the mechanism based on their belief about the distribution of user values, {even when this flexibility could never impact the users' incentives}.

We instead propose that transaction fee mechanisms {should allow for miner augmentation}.
Specifically, we propose that a transaction fee mechanism should allow for input from the miner (this could take the form of a reserve, but is not restricted to this). 
But importantly, it should hold that these inputs from the miner do not impact the users' incentives.
To accommodate this shift in the model, and to be coherent with our framework more broadly, we rephrase UIC to \emph{on-chain user simplicity}, and MMIC to \emph{on-chain miner simplicity} (see \autoref{sec:Definitions} for the formal definitions).

\paragraph{A Bayesian Formalization.} 
EIP-1559 and C2PA are our main motivating examples for revisiting the primitives of transaction fee mechanism design. 
While doing so, we also propose a formalization that imagines a revenue-maximizing miner in a Bayesian setting,\footnote{
    Throughout the paper, we restrict attention to the i.i.d. setting, where every user has the same distribution over values.
    This seems an especially apt assumption for blockchain settings, since users can freely create new identifier if they wish to avoid the types of price discrimination-like behavior that Myersonian optimal auctions require for non-i.i.d. bidders.
} which is more in line with classical auction design \citep{Myerson81} and also some recent work that considers untrusted auctioneers \citep{AkbarpourL20, FerreiraW20, EssaidiFW22}.
Specifically, we propose to consider a miner working within the independent private values model \citep{Myerson81}, and we examine the properties of different Bayes-Nash equilibria involving users.

Using our model, we formalize the simplicity desiderata and properties satisfied by EIP-1559 and C2PA discussed above. 
We also prove a strong impossibility theorem:
\emph{no} non-trivial auction can be simultaneously on-chain simple (for both users and miners), off-chain influence proof, and strongly collusion proof (a definition which essentially rephrases $1$-SCP in our framework).
This impossibility holds \emph{even in the unlimited supply setting}.
Remember that EIP-1559 is UIC, MMIC, and $c$-SCP for all $c\ge 1$ in the unlimited supply setting. 
Since C2PA abandons strong collusion proofness, but maintains all our other desirable properties, this proves a formal tradeoff: among on-chain simple TFMs, one must choose to adopt an off-chain influence proof auction (such as C2PA) or a strongly collusion proof auction (such as EIP-1559), but cannot satisfy both.

As an aside, we also take the opportunity to make some remarks on different notions of collusion resistance for TFMs. 
We observe that prior definitions (OCA proofness, $1$-SCP, and our strong collusion proofness) mostly seem to capture collusion \emph{between trusted entities who truly profit-share}.
These definitions may make sense in some instances (if, for example, Coinbase is a miner and also paying for its own transactions rather than passing on fees to customers).
But in other instances, this strong level of collusion may be unrealistic (if, for example, a prominent miner and a prominent dApp try to jointly collude). 
In such settings, we argue that the miner would likely establish a concrete contract of transfers to redistribute any joint surplus,
and moreover, \emph{colluding users would likely act strategically} to maximize their returns while taking this contract into consideration.
Thus, following our framework more broadly, the colluding users would play in a Bayes-Nash equilibrium.
Since users will still be playing in a Bayes-Nash equilibrium of the off-chain game in the end, this means that off-chain influence proof TFMs \emph{already} guarantee some form of such ``Bayesian'' collusion-resistance (with some caveats; see \autoref{sec:AlternateCollusion}).
We thus argue that such more permissive collusion-resistance definitions are best suited to capture an untrusted miner colluding with a separate user.
Still, as discussed above, strong collusion proofness better captures situations where miners genuinely have their own transactions, so we believe both criteria are worth considering.

\paragraph{Results and Paper Structure.} After giving preliminaries on (Bayesian) mechanism design in \autoref{sec:prelims}, our paper proceeds as follows:

\begin{itemize}
    \item \autoref{sec:model} introduces our model of TFMs, allowing for both on-chain manipulations as in previous papers, and allowing for the miner to conduct arbitrary off-chain auctions to determine behavior on-chain.
    
    \item \autoref{sec:Definitions} gives our proposed simplicity and incentive-compatibility definitions for TFMs, and compares our desiderata to pre-existing definitions.
    
    \item \autoref{sec:specific-cases} analyzes several commonly-studied auctions in TFM design (namely, EIP-1559, C2PA, and first-price auction) and classifies these auctions by our desiderata. %
    
    \item \autoref{sec:impossibility} proves our impossibility result: no nontrivial mechanism simultaneously satisfies on-chain (user and miner) simplicity, off-chain influence proofness, and strong collusion proofness.
\end{itemize}

In our Appendix, we investigate additional related simplicity notions (\autoref{sec:Zoo}), provide separations between all our main simplicity conditions (\autoref{sec:OtherEg}), and briefly explore cryptographic implementations of the miner-gatekeeper model (\autoref{sec:DRA}).

\subsection{Related work} \label{sec:related}

\paragraph{Transaction Fee Mechanism Design.} \cite{LaviSZ19} are the first to formally view Bitcoin's transaction fees as an auction, and propose alternatives (such as the $k^{th}$ price auction, or the monopolistic-price auction), and similar questions are also considered in~\citet{BasuEOS19} and \citet{Yao18}.~\cite{Buterin20}
propose EIP-1559 for Ethereum, the first significant change to the transaction fee mechanism of a major blockchain, which is analyzed rigorously in~\cite{Roughgarden20}.~\citet{Roughgarden20, Roughgarden21} kicks off the more recent study of transaction fee mechanisms by proposing the desiderata mentioned above and justifying Ethereum's EIP-1559 through this lens. 
In particular,~\cite{Roughgarden21} establishes that without a supply constraint (or when EIP-1559's exogenous price is high enough so that supply exceeds demand), EIP-1559 is UIC, MMIC, and OCA-proof.~\cite{ChungS23} prove a strong impossibility result, that no auction is simultaneously UIC and $c$-SCP with limited supply (for any $c$). 

Further follow-up work of~\cite{GafniY22, GafniY24, ChungRS24} investigate the distinction between OCA-proof and $c$-SCP, while ~\cite{GafniY22} and \citet{ChenSZZ24} also consider a relaxation of UIC to Bayesian-UIC (without miner-set advice as in our paper). 
\cite{ShiCW23} are the first to introduce cryptography into transaction fee mechanism design---they design non-trivial TFMs assuming Multi-Party Computation, but also observe that the C2PA is not MMIC.~\cite{WuSC24} design mechanisms (also leveraging cryptography) that achieve non-zero revenue in a Bayesian setting, under the assumption that some users are honest.

While some of these works indeed tweak the core TFM model of~\cite{Roughgarden21}, they do so in directions that are orthogonal to our revised axioms. For example,~\cite{GafniY22} propose to consider Bayesian-IC instead of DSIC for users.
\cite{ChungS23} propose a model where it is costly for the miner to include fake bids---this approach could be combined with our revised definitions (although we do not investigate this particular avenue ourselves).
To our knowledge, our work is the first proposal to significantly revise the core initial TFM desiderata, which we believe will provide additional insight to future works alongside this agenda (and in a way that is backwards-compatible with existing tweaks to the vanilla model of~\cite{Roughgarden21}). 

\paragraph{Variants of Transaction Fee Mechanism Design.} In line with all previously-cited works, we consider a one-shot Transaction Fee Mechanism played by users who want their transactions to be included in the present block %
and miners who are mypoic (i.e.~because they are a small player in a decentralized ecosystem). While TFMs ``in the wild'' are indeed a repeated game (both because users may be content to have their transaction in a later block, and becuase miners of this block may also be miners of the next block), this one-shot setting is the core upon which later research lies, and it is therefore crucial to nail the core setting. Some recent works have explored the repeated game aspect. For example,~\cite{FerreiraMPS21,LeonardosMRSP21, LeonardosRMP23} study the dynamics of how EIP-1559's exogenous price is updated over time, but do not study further strategic aspects that appear in the one-shot game.~\cite{GafniY22} consider a particular Bayesian model of user values arriving over time and propose a simple BIC and MMIC mechanism (essentially the revelation principle applied to a first-price auction). Our work has little overlap with these, as we prioritize a deeper understanding of the core single-block setting.
\cite{ChitraFK23} study implementations of credible (non TFM) auctions, using a secure blockchain as a primitive.

Since EIP-1559, an orthogonal paradigm shift has taken over transaction fee mechanism design in practice: Proposer-Builder Separation. Specifically, while some transactions are still simple peer-to-peer payments that reward the miner with a direct payment of (say) ETH, many transactions are significantly more complicated and allow a miner to extract profit through so-called Maximal Extractable Value (MEV)~\citep{DaianGKLZBBJ19}. For example, one user's transaction might be a trade of two cryptoassets on a DeFi Exchange, and the miner might profit by inserting `fake' transactions to sandwich them -- their profit comes \emph{not at all by increased revenue in the TFM itself}, but \emph{through external opportunities not at all captured by the TFM} (indeed, this is profitable even in EIP-1559 where the miner always receives zero direct revenue).~\cite{BahraniGR23} are the first to investigate the impact of MEV on Transaction Fee Mechanism design, and largely provide impossibility results. This work is also orthogonal to ours, as we aim for a deeper understanding of the core TFM regime.

\paragraph{Broader Literatures.}
Of course, blockchains relate to a wide variety of economic phenomena. 
The rapidly-evolving economics literature on the economics of blockchain is far too vast to summarize here, but includes \cite{HubermanLM17, arnosti2022bitcoin, budish2023trust, gans2024zero}.
Our work also relies on the large literature on Bayesian mechanism design following \cite{Myerson81}.
One particularly related concept from classical auction theory is that of credible auctions \citep{AkbarpourL20}, which has parallels with MMIC and other notions of incentive-compatibility for miners.
As discussed in \citep{ChungS23}, MMIC differs from credibility (in the private-communication model of \cite{AkbarpourL20}) in that credibility stipulates that the auctioneer can only deviate in ways that individual bidders cannot distinguish from the promised protocol, where MMIC (as well as our desiderata) stipulates that the miner can deviate in any way consistent with the block-building process.
Our framework is also conceptually related to the long line of work in the cryptography literature studying mechanism design---including \cite{HalpernT04,  abraham2006distributed, kol2008cryptography, izmalkov2011perfect, garay2013rational, canetti2023zero}, among many others---though we do not work directly with precise cryptography notions relying on complexity theory, and instead consider a simple idealized model of cryptographic auctions.

\section{Preliminaries}
\label{sec:prelims}

Before we introduce our model of a transaction fee mechanism (\autoref{sec:model}),
we review the basics of (Bayesian) mechanism design for the reader not familiar with these concepts.

In mechanism design terminology, we consider single-parameter quasi-linear Bayesian environments.
In these environments, users' values for being allocated to a binary outcome (e.g., receiving an item or not, being included in a block or not) are drawn from some publicly-known distribution,
and some feasible set of users are allocated  and each charged some price.

Formally, we define an \emph{environment} $(n,\FeasAlloc, \TypeDistr_1 \times \dots \times \TypeDistr_n)$ to consist of a number $n$ of users $\{1,2,\ldots,n\}$ (variously called bidders, players, or agents in the literature), 
a set $\FeasAlloc\subseteq 2^{\{1,\ldots,n\}}$ of feasible allocations (where $X\in\FeasAlloc$ indicates that each user $i\in X$ is allocated, and each $i\notin X$ is not),
and independent type distributions $\TypeDistr_1 \times \dots \times \TypeDistr_n$, which are arbitrary continuous distributions over $\R_{\geq 0}$.
User $i$'s value from being allocated is denoted by $v_i$, and is drawn from the distribution $\TypeDistr_i$, denoted $v_i\sim\TypeDistr_i$.
We typically focus on the i.i.d. case, in which $\TypeDistr_i = \TypeDistr$ is identical for each $i=1,\ldots,n$.

The set of \emph{outcomes} $\Outcomes = \FeasAlloc \times \R^n$ in an environment consists of tuples $(X, P_1,\ldots, P_n)$, where $X\in\FeasAlloc$ is the set of allocated users, and $P_i$ is the payment charged to each user $i$.
Given an outcome $(X, P_1,\ldots,P_n)$, each user $i$ such that $i \in X$ receives utility $v_i - P_i$, and every non-allocated bidder receives a utility $- P_i$.
Users receive a quasi-linear utility, i.e, given a distribution $\mathcal{D}$ over outcomes $(X, P_1, \dots, P_n) \in \Outcomes$, user $i$ with value $v_i$ receives an expected utility $\Es{(X, P_1, \dots, P_n)\sim\mathcal{D}}{v_i \cdot \1{i \in X} - P_i}$, where $\1{\cdot}$ denotes the indicator function.
 An \emph{auction} (or a \emph{mechanism}) over some environment is any mapping $\mathcal{M} : \T^{n} \to \Delta(\Outcomes)$ from bids of each user to a distribution over outcomes.

\subsection{Dominant Strategy Incentive Compatible Auctions}
\label{sec:prelims-dsic}

Consider a mechanism $\mathcal{M}$. For a profile $v = (v_i)_{1 \leq i \leq n}$, let $X_i(v)$ denote the probability that user $i$ gets allocated and $P_i(v)$ be the expected payment charged from $i$ (where expectation is taken over the randomness in the mechanism).
Since users are assumed to be expected utility maximizers, conditioned on the bids $b_{-i}$ placed by the other users, user $i$ bids $b_i$ such that his utility $v_i \cdot X_i(b_i, b_{-i}) - P_i(b_i, b_{-i})$ is maximized.
The mechanism $\mathcal{M}$ is \emph{dominant strategy incentive compatible} (DSIC) if bidding truthfully, i.e, bidding $v_i$, is the optimal strategy for each user $i$ irrespective of the bids $b_{-i}$ placed by the other bidders. 
Formally, $\mathcal{M}$ is DSIC if for all users $i$ and for all profiles $v = (v_1, \dots, v_n)$ of values, and all alternative bids $\widetilde{v}_i$, we have 
\[
 v_i\cdot X_i(v_i, v_{-i}) - P_i(v_i, v_{-i}) 
\ge
v_i\cdot X_i(\widetilde{v}_i, v_{-i}) - P_i(\widetilde{v}_i, v_{-i}).
\]

A standard example of a DSIC mechanism is the $(k+1)\textsuperscript{th}$-price auction with a reserve $\mathsf{r}$.
To sell $k$ items, the auctioneer collects bids from users and allocates the user with the highest $k$ bids (as long as they are larger than the reserve $\mathsf{r}$) and charges the maximum among the $k\textsuperscript{th}$ highest bid and $\mathsf{r}$.
We skip the proof of incentive compatibility here, since we will be revisiting the $(k+1)\textsuperscript{th}$-price auction later in the paper (\autoref{sec:C2PA}).

\subsection{Bayesian Incentive Compatible Auctions} \label{sec:BICAuctions}

Dominant strategy incentive compatibility enforces a strong condition--- irrespective of the bids $b_{-i}$ placed by the other users, user $i$ best responds by bidding his value $v_i$.
However, if user $i$ knows the distribution $\TypeDistr_{-i} = \Pi_{j \neq i} \TypeDistr_j$  of values of other users and believes other users are going to bid truthfully, user $i$ might want to bid $v_i$ if it optimizes his utility in expectation over the values $v_{-i}$ of the other users.

In accordance to the discussion above, we define the interim allocation and payment rules of the mechanism $\mathcal{M}$.

\begin{definition}[Interim Allocation and Payment Rules] \label{def:Interim}
    For a mechanism $\mathcal{M}$ with an allocation and payment rule $X$ and $P$ respectively, and distributions $\Pi_{i = 1}^n \TypeDistr$ of user values, the interim allocation rule $x_i:\TypeDistr_i \xrightarrow{} [0, 1]$ and the interim payment rule $p_i:\TypeDistr_i \xrightarrow{} \R$ are given as follows:
    \[
    x_i(v_i) = \Es{v_{-i} \sim \TypeDistr_{-i}}{X_i(v_i, v_{-i})} \text{\hspace{1cm} and \hspace{1cm}} p_i(v_i) = \Es{v_{-i} \sim \TypeDistr_{-i}}{P_i(v_i, v_{-i})}.
    \]
\end{definition}
Intuitively, the interim allocation and payment rules describe the user's expected allocation and payment assuming all other users bid truthfully when their values are drawn from $\TypeDistr_{-i}$. Writing user $i$'s expected utility in terms of the interim rules is extremely convenient--- when he has a value $v_i$ and submits a bid $b_i$, his expected utility equals $v_i \cdot x_i(b_i) - p_i(b_i)$.\footnote{
  Note that the term ``interim'' refers to the fact that from user $i$'s perspective, his value $v_i$ is already known, but other users' valuations are not.
  For clarity, throughout this paper we use lower-case letters such as $x_i(v_i)$ to denote such interim rules, and use
  upper-case letters such as $X_i(v) = X_i(v_1,\ldots,v_n)$ to denote the ``ex-post'' outcomes conditioned on the bids of all users.
}

An auction $\mathcal{M}$ is \emph{Bayesian incentive compatible} (BIC) if bidding truthfully optimizes the expected utility for every user, taking expectation over the values of all other users (motivating the assumption that $x_i$ and $p_i$ are defined in expectation over the other user's values and not other arbitrary bids).
In other words, $\mathcal{M}$ is BIC if for all $v_i, \widetilde{v}_i \in \R_{\ge 0}$,
\[
v_i\cdot x_i(v_i) - p_i(v_i) \ge v_i\cdot x_i(\widetilde{v}_i) - p_i(\widetilde{v}_i).
\]
Observe that DSIC is a stronger condition than BIC. BIC requires incentive compatibility in expectation while DSIC requires incentive compatibility to hold pointwise over the other users' values.

The Myerson's lemma (a.k.a the payment identity; \citealp{Myerson81}) characterizes all mechanism $\mathcal{M}$ with allocation and payment rules $X$ and $P$ such that $\mathcal{M}$ is BIC.
It requires no user $i$ should expect to get allocated with a lower probability when placing a larger bid, i.e, the interim allocation rule $x_i$ is monotonically non-decreasing in the user's value $v_i$.
Further, given a set of interim allocation rules $(x_i)_{1 \leq i \leq n}$, there exist a unique (up to additive constants) set of interim payment rules $(p_i)_{1 \leq i \leq n}$ such that the resulting mechanism is BIC.\footnote{
  Note however that not every set of monotone interim allocation rules will correspond to a feasible auction which always produces feasible outcomes in $\FeasAlloc$.
}

\begin{theorem}[\citealp{Myerson81}; Payment Identity] \label{thm:payment-identity}
A mechanism $\mathcal{C}$ with interim allocation rules $(x_i)_{1 \leq i \leq n}$ and interim payment rules $(p_i)_{1 \leq i \leq n}$ is BIC if and only if for all users $i$,
\begin{enumerate}
    \item the interim allocation rule $x_i(v_i)$ is monotonically non-decreasing in $v_i$, and
    \item the interim payment rule $p_i(v_i)$ satisfies $p_i(v_i) = \int_o^{v_i} zx_i'(z) \, dz + c$ for some constant $c$.
\end{enumerate}

\end{theorem}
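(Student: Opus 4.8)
The plan is to prove both directions for an arbitrary fixed user $i$; since every condition is stated per-user, I would drop the subscript and write $x, p$ for $x_i, p_i$. The BIC condition says that for every true value $v$ and every report $w$, we have $v\,x(v) - p(v) \ge v\,x(w) - p(w)$.

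\textbf{Forward direction.} First I would extract monotonicity by adding the two IC inequalities obtained from the ordered pairs $(v,w)$ and $(w,v)$: writing $v\,x(v)-p(v)\ge v\,x(w)-p(w)$ together with $w\,x(w)-p(w)\ge w\,x(v)-p(v)$ and summing cancels the payment terms and leaves $(v-w)\big(x(v)-x(w)\big)\ge 0$, which is precisely that $x$ is non-decreasing. For the payment formula, I would pass to the interim utility $u(v) := v\,x(v)-p(v)$. The IC constraints say $u(v) = \sup_{w}\big(v\,x(w) - p(w)\big)$, i.e.\ $u$ is the pointwise supremum of the affine-in-$v$ functions $v\mapsto v\,x(w)-p(w)$; hence $u$ is convex, therefore absolutely continuous and differentiable almost everywhere. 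An envelope argument (the maximizer at $v$ is $w=v$, giving slope $x(v)$) yields $u'(v) = x(v)$ wherever $u$ is differentiable, so $u(v) = u(0) + \int_0^v x(z)\,dz$. Substituting $p(v) = v\,x(v) - u(v)$ and integrating by parts on $\int_0^v x(z)\,dz$ gives $p(v) = \int_0^v z\,x'(z)\,dz - u(0)$, which is the claimed identity with $c = -u(0)$.

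\textbf{Reverse direction.} Assuming $x$ non-decreasing and the payment identity, I would bound the gain from truthful reporting directly. For true value $v$ and report $w$, the difference $\big(v\,x(v)-p(v)\big) - \big(v\,x(w)-p(w)\big)$ equals $v\big(x(v)-x(w)\big) - \int_w^v z\,x'(z)\,dz$, using $p(v)-p(w) = \int_w^v z\,x'(z)\,dz$. When $v > w$, I would use $z \le v$ on the range of integration together with $x'(z)\ge 0$ to get $\int_w^v z\,x'(z)\,dz \le v\int_w^v x'(z)\,dz = v\big(x(v)-x(w)\big)$, so the gain is non-negative; when $v < w$ the symmetric bound $z \ge v$ gives the same conclusion. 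Hence truthful reporting is optimal and $\mathcal{C}$ is BIC.

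\textbf{Main obstacle.} The delicate point is regularity: a monotone $x$ need not be differentiable, so the expression $\int_0^{v} z\,x'(z)\,dz$ and the integration-by-parts step require justification. I would handle this by treating the integral as a Riemann--Stieltjes integral $\int_0^v z\,dx(z)$ (equivalently, routing everything through the convex function $u$, whose absolute continuity legitimizes both the fundamental theorem of calculus and the integration by parts), and noting that the displayed formula written with $x'$ is the special case in which $x$ is differentiable. Beyond this, the argument is the routine two-case sign analysis above.
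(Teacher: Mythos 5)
Your proof is correct: the paper states this result as a classical theorem of \citet{Myerson81} and provides no proof of its own, and your argument (monotonicity from summing the two IC inequalities, the envelope/convexity derivation of $u(v)=u(0)+\int_0^v x(z)\,dz$ for the forward direction, and the two-case sign analysis for the converse) is exactly the standard Myersonian proof, including an appropriate handling of the non-differentiability of monotone $x$ via the Stieltjes/convex-function formulation. Nothing further is needed.
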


\subsection{Bayes-Nash Equilibria and the Direct Revelation Principle}

We also consider auctions that are neither DSIC nor BIC.
Even though utility maximizing users will not bid truthfully in such mechanisms, users will play strategies that are (simultaneously) best responses to each other's strategies (in expectation).
Such a profile of simultaneous best responses is called a \emph{Bayes-Nash equilibrium} (BNE).

More formally, consider a mechanism $\mathcal{M}$ that takes as inputs messages from an arbitrary message space $\mathsf{Bid}_{1}, \dots, \mathsf{Bid}_{n}$ with an allocation rule $X:\mathsf{Bid}_{1} \times \dots \times \mathsf{Bid}_{n} \xrightarrow{} [0, 1]^n$ and a payment rule $P:\mathsf{Bid}_{1} \times \dots \times \mathsf{Bid}_{n} \xrightarrow{} \R^n$.
A user $i$'s strategy $s_i:\R \xrightarrow{} \mathsf{Bid}_{i}$ maps the value $v_i$ of user $i$ to $i$'s bid (input in $\mathsf{Bid}_i$) to the mechanism $\mathcal{M}$. 
A profile of strategies $(s_1, \dots, s_n)$ is a Bayes-Nash equilibrium if for all users $i$, conditioned on all other users playing the strategies $s_{-i}$, user $i$ with value $v_i$ optimizes his expected utility by bidding $s_i(v_i)$. 
In other words, for all users $i$ and for all values $v_i$
\begin{align*}
 & \Es{v_{-i}\sim \mathcal{T}_{-i}}{ v_i \cdot X_i(s_i(v_i), s_{-i}(v_{-i})) - P_i(s_i(v_i), s_{-i}(v_{-i})) }
\\ & \qquad \qquad \ge \Es{v_{-i}\sim \mathcal{T}_{-i}}{ v_i \cdot X_i(b_i, s_{-i}(v_{-i})) - P_i(b_i, s_{-i}(v_{-i})) }
\end{align*}
for all $b_i \in \mathsf{Bid}_i$.

For any BNE of any mechanism $\mathcal{M}$, a corresponding interim allocation and interim payment rule can be defined as follows:
\begin{definition}[Interim Allocation and Interim Payment Rules For a BNE] \label{def:ValueInterim}
    For a BNE of user strategies $\sigma = (s_1, \dots, s_n)$ induced by a mechanism $\mathcal{M}$ with allocation and payment rules $X$ and $P$ respectively, and distributions $\Pi_{i = 1}^n \TypeDistr$ of user values, the \emph{interim allocation rule} $x_i:\TypeDistr_i \xrightarrow{} [0, 1]$ and the \emph{interim payment rule} $p_i:\TypeDistr_i \xrightarrow{} \R$ corresponding to $\sigma$ are given as follows:
    \[
    x_i(v_i) = \Es{v_{-i} \sim \TypeDistr_{-i}}{X_i(s_i(v_i), s_{-i}(v_{-i}))} \text{\hspace{1cm} and \hspace{1cm}} p_i(v_i) = \Es{v_{-i} \sim \TypeDistr_{-i}}{P_i(s_i(v_i), s_{-i}(v_{-i}))}.
    \]
\end{definition}

We now describe the classical revelation principle, which says that any BNE of any mechanism induces a BIC mechanism which implements the same interim allocation and payment rules.
Intuitively, consider an alternative mechanism $\widetilde{\mathcal{M}}$ that implements $\mathcal{M}$ supplemented with a ``strategizing service''.
Instead of having to devise the optimal strategy $s_i$, user $i$ can instead submit a bid equal to his value $v_i$.
The strategizing service bids $s_i(v_i)$ on user $i$'s behalf.
Assuming other users bid truthfully in $\widetilde{\mathcal{M}}$, user $i$ bidding $v_i$ in $\widetilde{\mathcal{M}}$ corresponds to the strategy profile $(s_i(v_i), s_{-i}(v_{-i})$ in $\mathcal{M}$.
Since $(s_1,\ldots,s_n)$ is a BNE of $\mathcal{M}$, we see that $v_i$ is user $i$'s expected-utility-maximizing strategy in $\widetilde{\mathcal{M}}$; since this is true for all users, this means $\widetilde{\mathcal{M}}$ is BIC.
The \emph{direct revelation principle} follows directly from this argument:

\begin{lemma}[Direct Revelation Principle] \label{thm:DirectRevelation}
    Fix a mechanism $\mathcal{M}$ and a Bayes-Nash equilibria $\sigma = (s_1,\ldots,s_n)$ of user strategies.
    Then, there exists a BIC mechanism $\widetilde{\mathcal{M}}$ such that the interim allocation and payment rules corresponding to $\sigma$ in $\mathcal{M}$ both equal the interim allocation and payment rules of the BIC mechanism $\widetilde{\mathcal{M}}$.
\end{lemma}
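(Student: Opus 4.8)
The plan is to formalize the ``strategizing service'' sketched above by constructing $\widetilde{\M}$ explicitly and then verifying the two required claims: that its interim rules match those of $\sigma$ in $\M$, and that truthful reporting constitutes a best response in $\widetilde{\M}$ (so that $\widetilde{\M}$ is BIC). Concretely, I would let $\widetilde{\M}$ take reports $(b_1,\dots,b_n)\in\Rgz^n$ and run $\M$ on the bids $(s_1(b_1),\dots,s_n(b_n))$, so that its allocation and payment rules are $\widetilde{X}(b) = X(s_1(b_1),\dots,s_n(b_n))$ and $\widetilde{P}(b) = P(s_1(b_1),\dots,s_n(b_n))$. This is well-defined because each $s_i$ maps $\Rgz$ into $\mathsf{Bid}_i$, so the composed bid profile lies in the domain of $\M$.

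Matching the interim rules is then immediate from the definitions. Under truthful reporting in $\widetilde{\M}$, the interim allocation rule of user $i$ (\autoref{def:Interim}) is $\Es{v_{-i} \sim \TypeDistr_{-i}}{\widetilde{X}_i(v_i,v_{-i})} = \Es{v_{-i} \sim \TypeDistr_{-i}}{X_i(s_i(v_i), s_{-i}(v_{-i}))}$, which is exactly the interim allocation rule $x_i(v_i)$ corresponding to $\sigma$ in $\M$ from \autoref{def:ValueInterim}; the identical computation for payments yields $p_i$. Thus, once I establish that $\widetilde{\M}$ is BIC, the lemma follows, since its interim rules agree with those of $\sigma$ by construction.

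The heart of the argument, and the step requiring the most care, is verifying that truthful reporting is a best response in $\widetilde{\M}$ using only the hypothesis that $\sigma$ is a BNE of $\M$. Fix user $i$ and value $v_i$, and assume all other users report truthfully. A report $b_i$ in $\widetilde{\M}$ yields expected utility $\Es{v_{-i}\sim\TypeDistr_{-i}}{v_i\cdot X_i(s_i(b_i), s_{-i}(v_{-i})) - P_i(s_i(b_i), s_{-i}(v_{-i}))}$, which is precisely the expected utility user $i$ obtains in $\M$ by submitting the bid $s_i(b_i)\in\mathsf{Bid}_i$ while others play $s_{-i}$. The BNE condition states that, among all bids in $\mathsf{Bid}_i$, the bid $s_i(v_i)$ maximizes this quantity; and reporting $b_i = v_i$ attains exactly the bid $s_i(v_i)$. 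Hence truthful reporting maximizes user $i$'s expected utility, i.e., $v_i\cdot x_i(v_i) - p_i(v_i) \ge v_i\cdot x_i(\widetilde{v}_i) - p_i(\widetilde{v}_i)$ for all $\widetilde{v}_i$, which is the BIC condition for $\widetilde{\M}$ (whose interim rules are $x_i, p_i$). The only subtlety worth flagging is that the set of bids reachable through reports, $\{s_i(b_i) : b_i\in\Rgz\}$, may be a strict subset of $\mathsf{Bid}_i$; this is harmless precisely because the BNE guarantees optimality of $s_i(v_i)$ against the \emph{full} message space $\mathsf{Bid}_i$, which dominates any bid reachable via a report in $\widetilde{\M}$.
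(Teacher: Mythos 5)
Your proposal is correct and formalizes exactly the ``strategizing service'' construction that the paper sketches in the paragraph preceding \autoref{thm:DirectRevelation} (define $\widetilde{\M}$ by composing $\M$ with the equilibrium strategies, note the interim rules coincide by definition, and derive BIC from the BNE condition since truthful reports reach the equilibrium bids while deviations only reach a subset of the full message space). The paper gives no further proof beyond that sketch, so your write-up is just a more careful version of the same argument.
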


In particular, the payment identity (\autoref{thm:payment-identity}) can be applied to any BNE of $\mathcal{M}$ via the interim payment rules.
The mechanism $\widetilde{\mathcal{M}}$ is BIC, and therefore, the interim payment rule $p_i$ is uniquely determined by the interim allocation rule $x_i$.
We get the following direct consequence of \autoref{thm:payment-identity}:

\begin{theorem}[Revenue Equivalence] \label{thm:RevenueEquivalence}
    For any two BNEs $s^{(1)}, s^{(2)}$ of any two mechanisms,
    if the interim allocation rules $(x_i)_{1 \leq i \leq n}$ induced by the BNEs $s^{(1)}$ and $s^{(2)}$ are the same,
    then the interim payment rules $(p_i)_{1 \leq i \leq n}$ induced by $s^{(1)}$ and $s^{(2)}$ are also the same.
    Further, the expected total payment made by users from all such BNEs equal $\sum_{i = 1}^n \Es{v_i \sim \TypeDistr_i}{p_i(v_i)}$.
\end{theorem}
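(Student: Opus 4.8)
The plan is to reduce both Bayes-Nash equilibria to Bayesian incentive compatible mechanisms and then invoke the payment identity. First I would apply the Direct Revelation Principle (\autoref{thm:DirectRevelation}) separately to $s^{(1)}$ and $s^{(2)}$: this produces BIC mechanisms $\widetilde{\mathcal{M}}^{(1)}$ and $\widetilde{\mathcal{M}}^{(2)}$ whose interim allocation rules and interim payment rules coincide, user by user, with those induced by $s^{(1)}$ and $s^{(2)}$ respectively. Thus it suffices to prove the statement for two BIC mechanisms that share the same interim allocation rules $(x_i)_{1 \le i \le n}$.

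Next I would apply the payment identity (\autoref{thm:payment-identity}) to each of $\widetilde{\mathcal{M}}^{(1)}$ and $\widetilde{\mathcal{M}}^{(2)}$. Since both mechanisms are BIC, each interim payment rule has the form
\[
p_i^{(j)}(v_i) = \int_0^{v_i} z\, x_i'(z)\, dz + c_i^{(j)}, \qquad j \in \{1,2\},
\]
and because the two equilibria induce the same interim allocation rule $x_i$, the integral terms are identical. Hence $p_i^{(1)}$ and $p_i^{(2)}$ can differ only through the additive constants $c_i^{(1)} = p_i^{(1)}(0)$ and $c_i^{(2)} = p_i^{(2)}(0)$. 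Pinning these constants is the one step that is not pure algebra, and is the main subtlety to get right: evaluating the identity at $v_i = 0$ shows that $c_i^{(j)}$ equals the expected payment of the lowest type, which is $0$ under the standard normalization (individual rationality forces $p_i(0) \le 0$, and non-negativity of fees forces $p_i(0) \ge 0$). With $c_i^{(1)} = c_i^{(2)} = 0$, the two payment rules coincide, which is exactly the first claim.

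Finally, for the statement about total payments, I would unwind the definition of the interim payment rule and use the tower property of expectation: for any BNE $\sigma = (s_1,\dots,s_n)$,
\[
\Es{v}{\sum_{i=1}^n P_i(s_1(v_1),\dots,s_n(v_n))}
= \sum_{i=1}^n \Es{v_i \sim \TypeDistr_i}{\Es{v_{-i} \sim \TypeDistr_{-i}}{P_i(s_i(v_i), s_{-i}(v_{-i}))}}
= \sum_{i=1}^n \Es{v_i \sim \TypeDistr_i}{p_i(v_i)},
\]
where the last equality is just \autoref{def:ValueInterim}. Since this quantity depends on $\sigma$ only through the interim payment rules $(p_i)_{1 \le i \le n}$, and we have just shown those are common to all BNEs inducing the same interim allocation rules, the expected total payment is identical across all such equilibria and equals $\sum_{i=1}^n \Es{v_i \sim \TypeDistr_i}{p_i(v_i)}$. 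I expect the only genuine obstacle in the whole argument to be the normalization of the additive constant in the payment identity; everything else is a direct combination of \autoref{thm:DirectRevelation} and \autoref{thm:payment-identity}.
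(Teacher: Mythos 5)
Your proposal is correct and takes essentially the same route as the paper: the paper likewise derives revenue equivalence by applying the direct revelation principle to each BNE and then invoking the payment identity, which determines the interim payment rule from the interim allocation rule. You are in fact more careful than the paper about pinning down the additive constant (the paper leaves the normalization implicit, asserting uniqueness only ``up to additive constants''), though note that your argument for $c_i^{(1)} = c_i^{(2)} = 0$ quietly imports individual rationality and non-negativity of payments, which are not hypotheses of the theorem as stated.
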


\subsection{Revenue Optimization and Virtual Values}

In classical auctions, the auctioneer collects the payments made by the users and hence, an auctioneer optimizing her revenue would optimize the total payments received from users.
Working with payment rules directly while trying to optimize revenue can be mathematically challenging, for instance, 
since one must ensure that the payment rule of the revenue optimal mechanism simultaneously satisfies the payment identity (\autoref{thm:payment-identity}).
Instead, \citet{Myerson81} provides a recipe to compute the expected payments via the expected sum of \emph{virtual values} (a.k.a \emph{virtual welfare}), described below.
Thus, optimizing revenue is equivalent to optimizing the total virtual welfare, which is mathematically similar to optimizing the total welfare (i.e, optimizing the total value).

\begin{definition}[Virtual Values]
    \label{def:virtual-value}
    For a (continuous) type distribution $\TypeDistr$ with cumulative distribution function $F_{\TypeDistr}(t) = Pr_{v\sim\TypeDistr}(v \le t)$ and probability density function $f_{\TypeDistr}(t) = \frac{d F_{\TypeDistr}(t)}{dt}$, define the \emph{virtual value} function $\phi(\cdot)$ as follows:
    \[ \phi_i(v_i) = v_i - \frac{1-F_{\TypeDistr}(v_i)}{f_{\TypeDistr}(v_i)}. \]
\end{definition}

\begin{theorem}[\citealp{Myerson81}; Revenue Equals Virtual Welfare] 
\label{thm:revenue-equals-virtual-welfare}
    Given distributions $\Pi_{i = 1}^n \TypeDistr_i$ of user values with virtual values $(\phi_i)_{1 \leq i \leq n}$, for any BIC mechanism with interim allocation rules $(x_i)_{1 \le i \le n}$ and interim payments $(p_i)_{1 \leq i \leq n}$ (or equivalently, any BNE with value interim allocation and payment rules $(x_i)_{1 \le i \le n}$ and $(p_i)_{1 \leq i \leq n}$), we have: 
    \[ \Es{v \sim \TypeDistr^n}{\sum_{i = 1}^n p_i(v_i)} 
    = \Es{v \sim \TypeDistr^n}{\sum_{i = 1}^n \phi_i(v_i) \cdot x_i(v_i)}.
    \tag{*}
    \label{eqn:revenue-equals-virtual-welfare}
    \]
\end{theorem}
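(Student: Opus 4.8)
The plan is to establish the identity one user at a time and then sum, since both sides decompose additively across users. Fix a user $i$. Because the mechanism is BIC, the payment identity (\autoref{thm:payment-identity}) pins down the interim payment rule as $p_i(v_i) = \int_0^{v_i} z\, x_i'(z)\, dz + c$. I would fix the additive constant by the individual-rationality normalization $p_i(0) = 0$, forcing $c = 0$; this is the standard convention under which the relevant notion of revenue is the expected payment from a bidder who pays nothing at value zero. With $c = 0$, the goal reduces to showing $\Es{v_i \sim \TypeDistr_i}{p_i(v_i)} = \Es{v_i \sim \TypeDistr_i}{\phi_i(v_i)\, x_i(v_i)}$ for each $i$.

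Next I would evaluate the left-hand expectation as a double integral and exchange the order of integration:
\[
\Es{v_i \sim \TypeDistr_i}{p_i(v_i)} = \int_0^\infty \left( \int_0^{v_i} z\, x_i'(z)\, dz \right) f_{\TypeDistr_i}(v_i)\, dv_i = \int_0^\infty z\, x_i'(z) \left( \int_z^\infty f_{\TypeDistr_i}(v_i)\, dv_i \right) dz.
\]
The Fubini exchange over the region $\{0 \le z \le v_i\}$ is justified since the integrand is nonnegative (monotonicity of $x_i$, guaranteed by BIC-ness, gives $x_i' \ge 0$). The inner integral equals $1 - F_{\TypeDistr_i}(z)$, so the expected payment becomes $\int_0^\infty z\,(1 - F_{\TypeDistr_i}(z))\, x_i'(z)\, dz$. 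I would then integrate by parts with $u = z(1 - F_{\TypeDistr_i}(z))$ and $dw = x_i'(z)\, dz$, using $\frac{d}{dz}\big[z(1 - F_{\TypeDistr_i}(z))\big] = (1 - F_{\TypeDistr_i}(z)) - z f_{\TypeDistr_i}(z)$. Discarding the boundary term (justified below), the remaining bulk term is
\[
\int_0^\infty x_i(z)\,\big( z f_{\TypeDistr_i}(z) - (1 - F_{\TypeDistr_i}(z)) \big)\, dz = \int_0^\infty x_i(z)\left( z - \frac{1 - F_{\TypeDistr_i}(z)}{f_{\TypeDistr_i}(z)} \right) f_{\TypeDistr_i}(z)\, dz,
\]
which is exactly $\Es{v_i \sim \TypeDistr_i}{\phi_i(v_i)\, x_i(v_i)}$ by \autoref{def:virtual-value}. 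Summing over all $i$ then yields the claimed identity \eqref{eqn:revenue-equals-virtual-welfare}.

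The step I expect to require the most care is controlling the boundary term $\big[z(1 - F_{\TypeDistr_i}(z))\, x_i(z)\big]_0^\infty$ from the integration by parts. At $z = 0$ the factor $z$ makes it vanish; the delicate endpoint is $z \to \infty$. Using $x_i \le 1$ and the finite-mean property of $\TypeDistr_i$, I would bound $z(1 - F_{\TypeDistr_i}(z)) = z \int_z^\infty f_{\TypeDistr_i}(t)\, dt \le \int_z^\infty t\, f_{\TypeDistr_i}(t)\, dt \to 0$ as $z \to \infty$, so the boundary contribution disappears. This finite-expectation assumption is implicit in the continuous-distribution setting and is precisely what the argument needs. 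A secondary technical matter is the differentiability of $x_i$: if $x_i$ is merely monotone, one should interpret $x_i'(z)\, dz$ as the Stieltjes measure $dx_i(z)$ and run the identical exchange-of-integration and integration-by-parts steps against it, which leaves the structure of the proof unchanged.
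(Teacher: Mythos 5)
Your proof is correct, and it is essentially the classical derivation from \citet{Myerson81}; the paper states this theorem with a citation only and gives no proof of its own (beyond a posted-price sanity check), so there is no in-paper argument to diverge from. Your chain of steps---payment identity, Fubini over the triangular region $\{0\le z\le v_i\}$, integration by parts, and the tail bound $z\,(1-F_{\TypeDistr_i}(z)) \le \int_z^\infty t\, f_{\TypeDistr_i}(t)\,dt \to 0$ to dispose of the boundary term---is the standard route, and your remark about reading $x_i'(z)\,dz$ as the Stieltjes measure $dx_i(z)$ correctly handles the case of merely monotone interim allocations. The one point worth making explicit is that the identity \eqref{eqn:revenue-equals-virtual-welfare} as literally stated does not follow from BIC alone: the payment identity leaves an additive constant $c_i$ per user, and the left-hand side would pick up $\sum_i c_i$. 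Your normalization $p_i(0)=0$ is exactly the right fix and matches the convention the paper uses implicitly (and which is forced by individual rationality together with no positive transfers at value zero), but it is an assumption layered on top of BIC rather than a consequence of it, and a careful write-up should say so.
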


For a simple example, consider the posted-price mechanism with a single user and a single good. 
In this auction, the user is posted a price $p$, and is sold the good at the price $p$ when the user has a value $v_i \geq p$.
Observing that the posted price mechanism is DSIC is straightforward.
Suppose the user's value $v$ is drawn from the uniform distribution $U[0, 1]$ (with a cumulative distribution function $F(v) = v$ and a density function $f(v) = 1$ for all $0 \leq v \leq 1$).
By posting a price $p$, the good is sold with probability $1 - F(p) = 1-p$, fetching an expected revenue equal to $p \, (1-p)$.
The user's virtual value $\phi(v) = v - \frac{1 - F(v)}{f(v)} = v - \frac{(1-v)}{1} = 2v - 1$.
The user is allocated ($x(v) = 1$) whenever $v \geq p$ and not allocated ($x(v) = 0$).
Plugging the above into the right hand side of (\ref{eqn:revenue-equals-virtual-welfare}), we get the expected virtual welfare equals $\int_p^1 (2v - 1) \, dv = p \, (1-p)$.

We now discuss revenue-optimal auctions.
\autoref{thm:revenue-equals-virtual-welfare} says that optimizing revenue corresponds to optimizing the virtual welfare.
However, for the mechanism (or equivalently, the value interim allocation rules) to be BIC, the interim allocation rules must be monotonically non-decreasing.
For general distributions $\TypeDistr_i$ of user $i$'s value, the virtual value function $\phi_i(v_i)$ need not be monotone non-decreasing, and therefore optimizing for virtual welfare does not automatically guarantee that interim allocation rules are monotone.
The revenue optimal mechanism involving such distributions can be fairly complex.
However, a standard simplifying assumption is the notion of regularity, which states that the virtual value function is monotone non-decreasing.

For regular distributions, it follows that optimizing for virtual welfare also ensures that the interim allocation rules all satisfy monotonicity.
Moreover, for a given distribution $\TypeDistr$, if there exists a value $\mathsf{r}$ such that $\phi(\mathsf{r}) = 0$, then for all values $v \leq \mathsf{r}$, the virtual value is non-positive (by monotonicity).
Thus, when the feasibility constraints are downwards-closed (i.e., when a user can always be removed from a feasible allocation and this results in another feasible allocation), we can thus simply exclude any user with a negative virtual value. This must increase the expected virtual welfare, and thus the expected revenue.
We call $\mathsf{r}$ to be the \emph{monopoly reserve} for the regular distribution $\TypeDistr$.

\begin{definition}[Regular Distributions and Monopoly Reserve] \label{def:Regular}
    A distribution $\type$ is \emph{regular} if the virtual value function $\phi(v)$ is monotonically non-decreasing in $v$. 
    A \emph{monopoly reserve} of the regular distribution $\TypeDistr$ is given by any value $\mathsf{r}$ such that $\phi(\mathsf{r}) = 0$.\footnote{
        For simplicity, we assume the existence of an $\mathsf{r}$ such that $\phi(\mathsf{r}) = 0$; 
        any such $r$ can serve as the monopoly reserve.
        However, if there does not exist such a value $\mathsf{r}$, the monopoly reserve can be defined as $\mathsf{r} = \sup \{v | \phi(v) \leq 0\}$. With slight modification, all definitions and results discussed in the paper continue to hold under this definition.
    }
\end{definition}

\begin{theorem}[\citealp{Myerson81}]
\label{thm:Regular-RevOpt}
    Let the set of feasible allocations $\FeasAlloc$ consist of any subset of users of cardinality at most $k$.
    Let users' values be drawn i.i.d. from from a regular distribution $\TypeDistr$ for any number of users $n$. 
    Then, the revenue optimal BIC mechanism is the $(k+1)$st price auction with monopoly reserve $\mathsf{r}$.
\end{theorem}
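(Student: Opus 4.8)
The plan is to follow Myerson's virtual-welfare approach using the machinery already established in the preliminaries. First I would reduce revenue maximization to a pointwise optimization problem. By \autoref{thm:revenue-equals-virtual-welfare}, any BIC mechanism (equivalently, any BNE of any mechanism) with interim allocation rules $(x_i)_{1 \le i \le n}$ earns expected revenue $\E{\sum_{i=1}^n \phi(v_i)\, x_i(v_i)}$, where $\phi$ is the common virtual value function (common because the values are i.i.d.). Unfolding the interim rules into ex-post terms, this equals $\Es{v \sim \TypeDistr^n}{\sum_{i=1}^n \phi(v_i)\, X_i(v)}$. Hence, for \emph{every} BIC mechanism, the revenue is bounded above by $\Es{v \sim \TypeDistr^n}{\max_{X(v) \in \FeasAlloc}\sum_{i=1}^n \phi(v_i)\, X_i(v)}$, where for each fixed profile $v$ the inner maximization ranges over feasible ex-post allocations (subsets of at most $k$ winners). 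This produces a single upper bound that holds simultaneously across all BIC mechanisms.

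Second, I would solve the inner (pointwise) problem. For a fixed profile $v$, maximizing $\sum_i \phi(v_i)\, X_i(v)$ subject to allocating at most $k$ users is a greedy selection: include a user only when doing so increases the objective, i.e., only when $\phi(v_i) > 0$, and among such users prefer those with the largest virtual values. Regularity (\autoref{def:Regular}) is what makes this clean: since $\phi$ is non-decreasing, ranking users by $\phi(v_i)$ coincides with ranking them by $v_i$, and $\phi(v_i) \ge 0$ if and only if $v_i \ge \mathsf{r}$ (using $\phi(\mathsf{r}) = 0$). Thus the pointwise optimum allocates to the (at most $k$) highest bidders whose value is at least the monopoly reserve $\mathsf{r}$.

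Third, I would identify this allocation rule with the $(k+1)$st-price auction with reserve $\mathsf{r}$ and confirm that it attains the bound. The allocation just derived is exactly the winner set of that auction; since the $(k+1)$st-price auction with reserve is DSIC (as recalled in \autoref{sec:prelims-dsic}), it is in particular BIC, so \autoref{thm:revenue-equals-virtual-welfare} applies to it and its revenue equals the pointwise-maximum virtual welfare. Because this matches the upper bound established for all BIC mechanisms, the $(k+1)$st-price auction with reserve $\mathsf{r}$ is revenue optimal, completing the argument.

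The main obstacle is a logical subtlety rather than a computation: \autoref{thm:revenue-equals-virtual-welfare} equates revenue with virtual welfare only for BIC mechanisms, whereas the pointwise maximizer is a priori an unconstrained allocation rule that might violate the monotonicity required for BIC. The argument succeeds precisely because regularity guarantees the greedy pointwise maximizer \emph{is} monotone in each $v_i$ --- raising $v_i$ can only promote user $i$ into the top-$k$-above-reserve set --- so its interim allocation rule is non-decreasing and the rule is genuinely BIC-implementable via the payment identity (\autoref{thm:payment-identity}). Without regularity this monotonicity can fail, and one would need Myerson's ironing procedure to convert the pointwise optimizer into a feasible monotone rule; I would flag this as the one step where the regularity hypothesis is essential.
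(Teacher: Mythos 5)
Your proof is correct and follows exactly the route the paper itself sketches in the surrounding text of \autoref{thm:revenue-equals-virtual-welfare} and \autoref{def:Regular}: reduce revenue to expected virtual welfare, maximize pointwise by allocating to the at-most-$k$ highest bidders with nonnegative virtual value, and use regularity to guarantee the resulting allocation rule is monotone and hence BIC-implementable. Your explicit flagging of where regularity is needed (and why ironing would otherwise be required) matches the paper's own discussion, so there is nothing to add.
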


\section{Model of Transaction Fee Mechanisms}
\label{sec:model}

\newcommand{\bBuild}{\mathcal{B}}
\newcommand{\onCG}{\mathcal{C}}
\newcommand{\offCG}{\mathcal{D}}

\newcommand{\mi}{\mathsf{mi}}
\newcommand{\usr}{\mathsf{usr}}

\newcommand{\Moff}{\mathcal{M}_{\mathsf{off}}}

\subsection{Formal Model}

We now define our model of transaction fee mechanisms.
We model TFMs in three layers, starting from the primitives of the protocol, and adding increasing levels of strategic flexibility.
The innermost layer is the ``block-building process'' $\bBuild$, which is an algorithm that dictates how all new blocks must be created, and is fixed by the blockchain designer (e.g., the Ethereum community).
The second layer, the ``on-chain game'' $\onCG$, represents all strategic manipulations available on-chain to the miner and users
based only on manipulating the inputs passed to $\bBuild$.
Previously-considered notions typically work in terms of the on-chain game.

The third stage of our model, the ``off-chain game'' $\offCG$, allows the miner
to conduct a separate off-chain auction to determine the inputs to $\onCG$ (and hence to $\bBuild$).
This substantial additional ability granted to the miner is our primary departure point from previous works.

To begin, we define the set of possible outcomes of a TFM.
Given a set of bids indexed by $[N]$, a (deterministic) \emph{outcome} of a TFM (given bids indexed by $[N]$) is a triple $\big( X, (P_i)_{i\in [N]}, \mathsf{Rev} \big)$, where $X \subseteq [N]$ specifies a binary outcome for each bid (i.e., a $X \subseteq [N]$ where bids indexed by $i\in X$ are included in the block), a payment $P_i$ for each bid, and a revenue $\mathsf{Rev}$ transferred to the miner.\footnote{
    Note that we allow miner revenue to be less or greater than the sum of the transaction fees (money creation, e.g. from a constant block reward, or money burning).
}
Denote the set of all outcomes by $\mathcal{X}$.

We now define our main model. A running example of the definitions below is provided in \autoref{sec:ModelExample}.

\begin{definition}[Transaction Fee Mechanisms] 
\label{def:model}

A transaction fee mechanism (TFM) is specified by a block-building process $\bBuild$. Strategic play can proceed according to on-chain game $\onCG$ and off-chain game $\offCG$.
These are defined as follows:

\begin{enumerate}
    \item \label{item:model-bBuild}
    The \emph{Block-Building Process} is a function $\bBuild(a_{\mi}, b_1, \dots, b_{N})$ from a miner advice $a_{\mi}$ and bids $(b_1, \dots, b_{N})$ to distributions over outcomes $(X, (P_i)_{i\in [N]}, r) \in \mathcal{X}$. In more detail, this is any algorithm that takes any number of bids $b_i$ from a pre-defined bid space $\mathsf{Bid}_{\usr}$,\footnote{
        We typically consider the bid space $\mathsf{Bid}_{\usr}$ to be $\R_{\geq 0}$ (like in all sealed bid auctions). Other natural bid spaces include $\{\text{yes, no}\}$ in a posted-price mechanism (signifying the answer to the question ``are you willing to buy the good at the given price?''), or $\{\text{yes, no}\} \times \R_{\ge 0}$ in EIP-1559 (with the real number signifying the tip).
    \label{fn:BidSpace}} and some advice $a_{\mi}$ from an advice space $\mathsf{Adv}_{\mi}$ provided by the miner.
    We restrict attention to simultaneous, single-round mechanisms.
    The block building process returns a distribution over outcomes $\mathcal{X}$. 

    The block-building process also specifies a cryptographic model, which determines the level of cryptography in the communication between users and the miner in $\bBuild$.
    We primarily consider two abstract cryptographic models: the ``plaintext" model, where the miner will be able to see the bids before submitting to $B$, and the ``miner-gatekeeper" model, where the bids are encrypted by users before sending them to the miner.\footnote{
        Technically, the input to the miner-gatekeeper model are encrypted commitments to bids. However, for simplicity, we assume that the input to $\bBuild$ is just the set of bids, and model the fact that the miner is oblivious to the exact contents of the bids by restricting the miner's strategy space.
    }
    The effects of the cryptographic model is captured in the miner's strategy space defined in the on-chain game, as we describe below.

    \item \label{item:model-onCG}
    The block-building process $\bBuild$ induces an \emph{On-Chain Game} $\onCG$ involving the miner and the users.  
    In $\onCG$, users' strategy space $S^\onCG_\usr$ is submitting zero or more bids.
    Formally, each user $i$'s strategy $s_{\usr, i}^\onCG$ is a mapping from the user's type (i.e., willingness-to-pay) to a set of bids to be submitted to $\bBuild$.\footnote{
        Note that, as in \autoref{sec:prelims}, we use the term ``strategies'' for mappings from users' types to their actions / messages in games.
    }

    The miner's strategy space $S^\onCG_\mi$ is different under the two levels of cryptography. 
    In the plaintext model, the miner can arbitrarily condition their behavior on the submitted bids. 
    Formally, in the plaintext model, the miner's strategy $s^\onCG_\mi \in S^\onCG_\mi$ is a mapping from the set $H$ of all submitted bids to an advice $a_{\mi} \in \mathsf{Adv}_{\mi}$, 
    a subset $I\subseteq H$ of submitted bids, and another set of ``fabricated'' bids $J$.
    The outcome of $\onCG$ is then the block built according to $\bBuild(a_{\mi};I, J)$.

    In the miner-gatekeeper model, the miner can only condition their behavior on \emph{who} submits a bid, not the contents of the bid. 
    Formally, in the miner-gatekeeper model, the miner's strategy $s^\onCG_\mi \in S^\onCG_\mi$ is a mapping from the set $H$ of all \emph{bidders} $i$ who submit a bid, to an advice $a_\mi \in \mathsf{Adv}_\mi$, a subset of submitted bidders $I\subseteq H$, and another set of fabricated bids $J$.
    The outcome of $\onCG$, analogous to the plaintext model, is then the block built according to $\bBuild(a_\mi; (b_i)_{i\in I}, J)$.\footnote{
      Note in particular that in both models, the bidder's index $i$ serves as an ``identifier'' for the particular bid.
      However, we do not restrict users to a single identifier,  so a bidder $i$ can submit a fake bid with identifier $i'$ for some $i'$ different than any other bidders' (though we typically leave this implicit in the notation for simplicity).
    }\textsuperscript{,}\footnote{We are agnostic to the cryptographic machinery used by the TFM. While \citet{ChungSW23} recommend the use of multi-party computation, fully homomorphic encryption, digital signature schemes, or verifiable random functions are also potential cryptography tools that can be used by the TFM.}

    Given a miner strategy $s_\mi^\onCG$ and a profile of user strategies $s^\onCG_\usr = \big(s^\onCG_{\usr, i} \big)_i$ and values $v = (v_i)_i$,
    we let $\onCG( s_\mi^\onCG; s^\onCG_\usr(v) )$ denote the result of the on-chain game when the miner plays according to $s_\mi^\onCG$ and users play according to $s^\onCG_\usr(v) = \big(s^\onCG_{\usr, i}(v_i) \big)_i$.
    Let $X_i^\onCG(s_\mi^\onCG; s_\usr^\onCG(v))$ denote the probability that user $i$ is allocated and let $P_i^\onCG(s_\mi^\onCG; s_\usr^\onCG)$ denote $i$'s expected transaction fee;\footnote{
        If user $i$ submits more than one bid, we let $X_i(\cdot)$ denote the probability that at least one of these bids is allocated in $\mathcal{B}$, and we let $P_i(\cdot)$ denote the expected sum of the payments charged to these bids. This reflects our assumption that each user has just one bid they want to submit, or equivalently, that the users' total value is additive over the different transactions they want to submit.
    } bidder $i$'s total expected utility is then 
    $v_i \cdot X_i^{\onCG}(s_\mi^\onCG; s_\usr^\onCG(v)) - P_i^{\onCG}(s_\mi^\onCG; s_\usr^\onCG(v))$.
    Let $\rev^\onCG(s_\mi^\onCG; s_\usr^\onCG(v))$ denote the miner's expected reward returned by $\bBuild$, minus the expected payments made by the fabricated bids injected into $\bBuild$ by the miner (where the expectation is taken over the randomness in the mechanism); this is the miner's total utility from $\onCG$.

    \item \label{item:model-offCG}
    The \emph{Off-Chain Game} $\offCG$ induced by $\bBuild$ is a game defined as follows. 
    To begin, the miner announces and commits to an ``off-chain mechanism'' $\Moff$ with message space $\mathcal{U}_{\mathsf{off}}$.
    This off-chain mechanism allows the miner to run any mechanism of her choosing to determine how she and the users will behave in $\onCG$.
    Formally, $\Moff$ is a function from any number $N'$  of reports in $\mathcal{U}_{\mathsf{off}}$, to the miner's and users' behavior $s_\mi^\onCG$ and $s_\usr^\onCG = (s_{\usr, i}^\onCG)_{i\in\{1,\ldots,N'\}}$ in $\onCG$, as well as off-chain  payments $p_{\mathsf{off}} = ( p_{\mathsf{off}, i} )_{i \in \{1,\ldots,N'\}}$.

    The only assumption we make about $\Moff$ is that each user always has the option of not participating in the off-chain mechanism and instead submits bids directly to $\onCG$.
    Formally, we assume that in $\Moff$, for all users $i$ and every strategy $s^\onCG_{\usr, i}$ in $\onCG$, there exists at least one message $\bot_\mathsf{off}(s^\onCG_{\usr, i}) \in \mathcal{U}_{\mathsf{off}}$ (called an abstaining bid) such that 
    \begin{enumerate}
        \item whenever user $i$ submits $\bot_\mathsf{off}(s^\onCG_{\usr, i})$, the result of $\Moff$ specifies that $i$ will play $s^\onCG_{\usr, i}$ in $\onCG$ and pay $p_{\mathsf{off}, i} = 0$, and,
        \item the result of $\Moff$ for all users $j$ that submit non-abstaining bids to the off-chain mechanism is invariant to the set of abstaining bids.   
    \end{enumerate}
    Property~(b) signifies that whenever a user $i$ refuses to participate in the off-chain mechanism, the miner and other users lack any information about $i$, including its existence. 
    
    After the miner commits to an off-chain mechanism $\Moff$,
    users' strategies consist of submitting some message to $\Moff$. The on-chain game proceeds according to the result of $\Moff$, and utilities are given by the sum of utilities from the on-chain game and the off-chain payments from $\Moff$.
    Formally, a strategy $s^{\offCG, \Moff}_{\usr, i} \in S^{\offCG, \Moff}_{\usr}$ of user $i$ is any mapping from user $i$'s type to some set of messages to $\Moff$.\footnote{
        As in the case of $\onCG$, we allow users to submit multiple messages $\Moff$; in this case, the result of $\Moff$ will specify separate strategies and charge separate payments in $\onCG$ for the different messages.
        As for $\onCG$, we typically leave this implicit in the notation for simplicity.
    }
    We let $\offCG(\Moff; s_\usr^{\offCG, \Moff}(v) )$ denote the result of running the on-chain game $\onCG$ according to $\Moff(s_\usr^{\offCG, \Moff}(v))$, along with off-chain payments.
    In particular, denote by $X_i^{\offCG}(\Moff; s_\usr^{\offCG, \Moff}(v)) = X_i^{\onCG}(s_\mi^\onCG; s_\usr^{\onCG})$, where $s_\mi^\onCG, s_\usr^{\onCG}$ are set by $\Moff(s_\usr^{\offCG, \Moff}(v))$.
    Denote by $P_i^{\offCG}(\Moff; s_\usr^{\offCG, \Moff}(v)) = P_i^{\onCG}(s_\mi^\onCG; s_\usr^{\onCG}) + p_{\mathsf{off}, i}$, where $p_{\mathsf{off,i}}$ is additionally set by $\Moff(s_\usr^{\offCG, \Moff}(v))$.
    User $i$'s total expected utility is then
    \[  v_i \cdot X_i^{\offCG}(\Moff; s_\usr^{\offCG, \Moff}(v)) - P_i^{\offCG}(\Moff; s_\usr^{\offCG, \Moff}(v)).\]
    Finally, let $\rev^\offCG(\Moff; s_\usr^{\offCG, \Moff}(v)) = \rev^\onCG(s_\mi^\onCG; s_\usr^\onCG(v)) + \sum_i p_{\mathsf{off}, i}$ denote the miner's expected reward from $\onCG$ plus the sum of the off-chain payments.

    We say that the miner runs a \emph{trivial off-chain mechanism} if $\Moff$ has a message space $\mathcal{U}_{\mathsf{off}}$ which consists of only abstaining bids
    (and hence the miner's on-chain strategy specified by the result of $\Moff$ must be a constant, though we note that this strategy may not be compliant). 
    This corresponds to the miner not running an off-chain mechanism and instead playing a fixed strictly on-chain strategy.
    We will sometimes abuse notation to describe an on-chain profile of strategies $(s_\mi^\onCG, s^\onCG_\usr)$ as a strategy profile in the off-chain game, which corresponds to the trivial mechanism $\Moff$ which always specifies that the miner plays the fixed strategy $s_\mi^\onCG$, and users play strategies $\bot_{\mathsf{off}}(s^\onCG_\usr)$.
\end{enumerate}
    
\end{definition}

See \autoref{fig:model} for a high-level illustration of our model.
As we will see in \autoref{sec:Definitions}, previously-considered notions such as UIC and MMIC are naturally expressed in in the language of the game $\onCG$.
In contrast, modeling the miner attack on EIP-1559 motivating our paper (as discussed in \autoref{sec:intro}) requires the miner manipulating the off-chain game $\offCG$. 

\begin{figure}
    \centering
    \includegraphics[width=\textwidth]{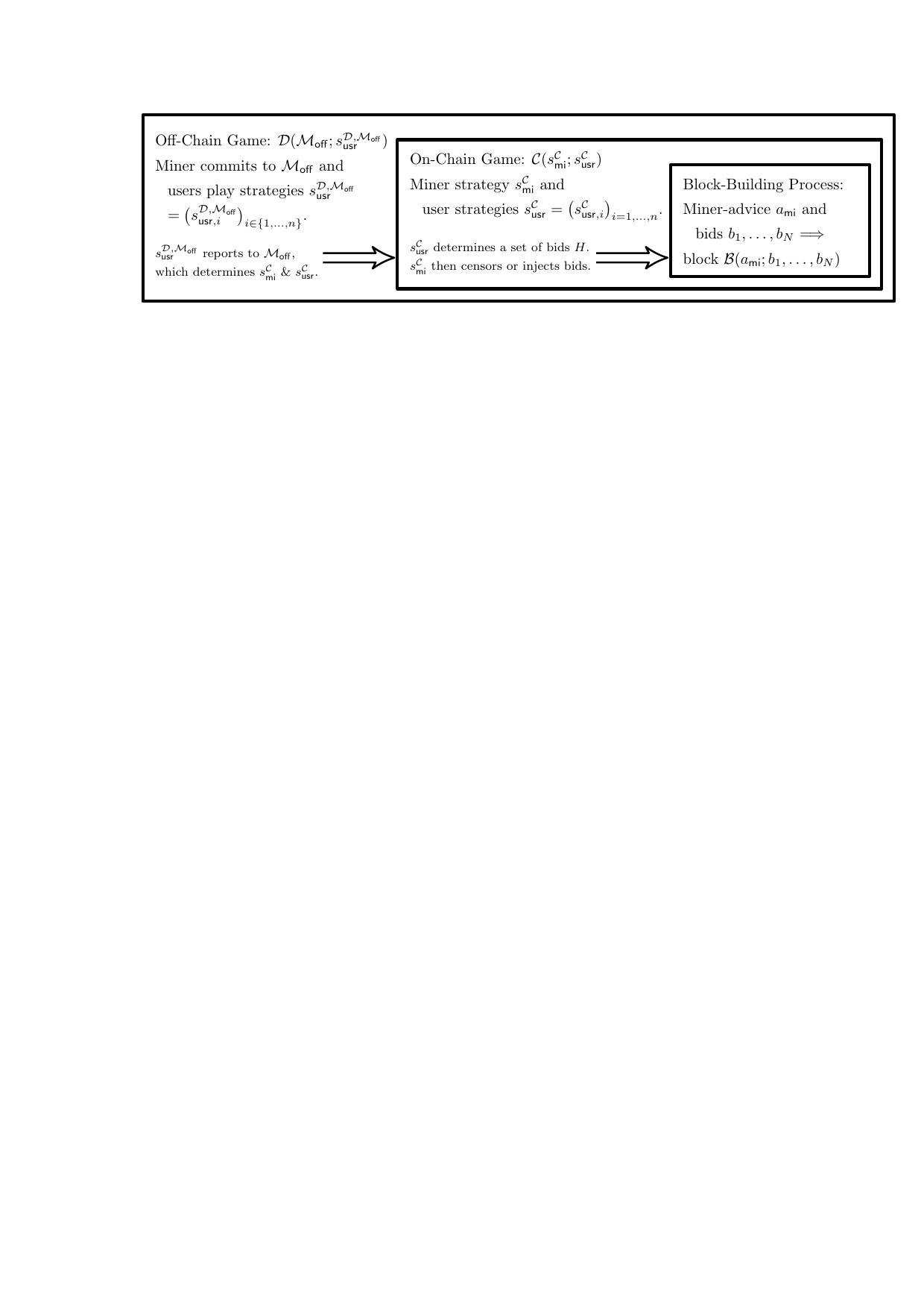}
    \caption{Model of TFMs.}
    \label{fig:model}
\end{figure}

\subsection{Example: The Squared Revenue Second Price Auction (SR2PA)} \label{sec:ModelExample}

We run over the above notations with an example of a novel auction format under which the miner can benefit by using a completely separate off-chain auction to determine her behavior on-chain.
We define the Squared Revenue Second Price Auction (SR2PA) to auction off inclusion in a block of size $1$ (a single item).

Informally, the SR2PA implements the second-price auction in the plaintext model, while allowing the miner to set a reserve, and burning some function of the on-chain payments.
In particular, we will consider cases where users have values less than $1$ and will always bid $b_i \le 1$, and the miner will receive only the \emph{square} of the payment $p_i$ made by the highest bidder (so that $p_i - p_i^2 \ge 0$ is burned).
Formally, the block-building process $\bBuild_{\textsf{SR2PA}}$ of the SR2PA is the following:
\begin{enumerate}
    \item Users submit a bid from $\mathsf{Bid}_{\usr} = \R_{\geq 0}$, as in any sealed-bid auction. Bids are submitted in the plaintext model.
    \item $\mathsf{Adv}_{\mi} = \R_{\geq 0}$, allowing the miner to set a reserve price.
    \item The allocation rule $X_i$ to user $i$ is given by
    \begin{equation*}
        \notag
        X_i(a_{\mi} ;b_i, b_{-i}) = \begin{cases}
            1 & \text{if } b_i \geq b_j \text{ for all } j \neq i \text{ and } b_i \geq a_{\mi}  \\
            0 & \text{otherwise}
        \end{cases}
    \end{equation*}
    with ties broken lexicographically.
    \item The payment rule $P_i$ is as follows.
    \begin{equation*}
        \notag
        P_i(a_{\mi}; b_i, b_{-i})
        =
        \begin{cases}
            0 & \text{if } X_i(b_i, b_{-i}) = 0 \\
            \max_{j \neq i} \{b_j, a_{\mi}\} & \text{otherwise}
        \end{cases}
    \end{equation*}
    \item The miner receives the square of the collected payments if the total payments is less than $1$, and receives nothing otherwise.
    $$\mathsf{Rev}(a_{\mi}; b) = \Big(\sum_{i = 1}^N P_i(a_{\mi}; b_i, b_{-i}) 
    \cdot \1{P_i(a_{\mi}; b_i, b_{-i}) \le 1}\Big)^2$$
\end{enumerate}

We now illustrate the on-chain game $\onCG$ arising from $\bBuild_{SR2PA}$ through a specific profile of strategies.
Consider the miner strategy which, given a set of bids $H$, sets reserve $a_{\mi} = \max_{i} \{b_i\} - \epsilon$ (for some small $\epsilon > 0$) to be just below the largest submitted bid.
This way, the miner gets a revenue of $(\max_{i} \{b_i\} - \epsilon)^2$, instead of the square of the second largest bid.
As a consequence, the user $i$ with the highest bid $b_i$ always gets allocated and pays $b_i - \epsilon$.
The auction is reduced to a first-price auction, since the winner (essentially) pays its bid.
The miner gets a revenue of $(\max_{i} \{b_i\} - \epsilon)^2$ whenever the largest bid is less than $1$.
In the first-price auction, the equilibrium bidding strategy $s_{\usr, i}^\onCG$ for user $i$ involve shading their bids according to the distribution of values $\TypeDistr$, i.e, placing a bid $b_i(v_i)$ less than their value $v_i$.
These strategies constitute one equilibrium in the on-chain game $\onCG$.
We discuss an equilibrium with a larger revenue from the off-chain game below. 

In the off-chain game $\offCG$ arising from $\bBuild_{SR2PA}$, the miner can use the off-chain mechanism to avoid having a portion of the revenue burnt, as we illustrate next. 
In $\Moff$, the miner simulates a second-price auction with an optimal reserve chosen according to the distribution $\TypeDistr$, and collects off-chain payments as dictated by the second-price auction.
The miner also commits to setting reserve $a_{\mi} = 0$ in the on-chain game $\onCG$. Further, the miner censors all bids except the highest bid in $\Moff$, and the highest bidder is instructed to bid zero on-chain.
Since the user with the highest bid is the only uncensored bid on-chain, the user gets allocated and does not have to make any payments on-chain.
All told, this off-chain mechanism implements a revenue-optimal second-price auction (when the distribution $\TypeDistr$ is regular), while avoiding money burning on-chain and thus getting a strictly higher miner revenue than the equilibrium in $\onCG$ discussed above.

Formally, the off-chain mechanism and user strategies are:
\begin{enumerate}
    \item $\Moff$: The message space $\mathcal{U}_{\mathsf{off}}$ equals $\R_{\geq 0}$.
    $\Moff(u_1, \dots, u_{N'})$ returns:
    \begin{itemize}
        \item $s^\onCG_\mi = (0, \{b_i: u_i \geq u_j \text{ for all } j \ \neq i \text{ and } u_i \geq r_0 \text{ for some reserve } r_0\}, \emptyset)$, i.e, the miner sets a zero reserve on-chain, censors all but the largest off-chain bid (if it is larger than the off-chain reserve $r_0$) and injects no fake bids.
        \item $s^\onCG_{\usr, i} = 0$. The mechanism recommends bidding zero to each user.
        \item The payments in the off-chain mechanism are determined as follows:
        \begin{equation*}
            \notag
         p_{\mathsf{off}, i}(u_1, \dots, u_{N'}) = \begin{cases}
             0 & \text{if } u_i < r_0 \text{ or } u_i \neq \max_{j} \{u_j\} \\
             \max_{j \neq i} \{u_j\} & \text{otherwise}
         \end{cases}
         \end{equation*}
    \end{itemize}
    \item $s_\usr^{\offCG, \Moff}(v)$ and $\offCG(\Moff; s_\usr^{\offCG, \Moff}(v) )$:
    \begin{itemize}
        \item User $i$ plays the strategy $s^{\offCG, \Moff}_{\usr, i}$ that bids truthfully in $\Moff$, i.e, bidding $u_i = v_i$. This is the user's dominant strategy since the miner has committed to running a second-price auction off-chain.
        \item $X_i^{\offCG}(\Moff; s_\usr^{\offCG, \Moff}(v))$: the highest bidder is included in the block if the highest bid is larger than the reserve. The block is empty otherwise.
        \item The users make zero payment on-chain. Thus, $P_i^{\offCG}(\Moff; s_\usr^{\offCG, \Moff}(v)) = p_{\mathsf{off}, i}(u_1, \dots, u_{N'})$.
    \end{itemize}
\end{enumerate}

In summary, $\Moff$ defines a completely separate off-chain auction which allows the miner to collect payments and maximize their revenue, while avoiding any of the users' transactions being burnt on chain.
Besides using SR2PA above to illustrate the off-chain game, we revisit the various equilibria of this mechanism in more detail in \autoref{sec:SR2PA} in order to discuss the relationship between our main definitions.

\section{Simplicity Definitions} \label{sec:Definitions}

In this section, we give the main definitions of our paper.
Throughout, we assume that users values are distributed i.i.d from a distribution $\TypeDistr$ (though these definitions can be extended to environments where values are drawn from independent, but not necessarily identical distributions).

To begin, we define two simple pieces of terminology for convenience.
First, we consider a strategy profile $(s_\mi^\onCG, s_{\usr, 1}^\onCG, \dots, s_{\usr, n}^\onCG)$ in the on-chain game $\onCG$---or $(\Moff, s_{\usr, 1}^{\offCG, \Moff}, \dots, s_{\usr, n}^{\offCG, \Moff})$ in the off-chain game $\offCG$---such that conditioned on the miner's strategy, the users' strategy profile is a Bayes-Nash equilibrium. 
We call such a strategy profile to be a \emph{user Bayes-Nash equilibrium} (user BNE), and all of our definitions restrict attention to user Bayes-Nash equilibria.

Second, we call a miner strategy \emph{compliant} if there exists a miner advice $a_\mi \in \mathsf{Adv}_\mi$ such that, for any submitted set of user bids $H$, the action played by the miner is $(a_\mi, H, \emptyset)$. 
In other words, a compliant miner strategy chooses a constant miner advice (constant independent of the set of bids), does not censor any bids, and does not fabricate any bids;
informally, these are exactly the miner strategies we think of as ``following the protocol''.
For any $a_\mi$, we denote the corresponding compliant miner strategy by $s_{\mi}^{\mathsf{comp}}(a_\mi)$.

\subsection{On-Chain Simplicity}
\label{sec:Definitions-On-Chain-Simple}

We now define on-chain simplicity.
On-chain simplicity is a property of a strategy profile $\sigma^\onCG = (s_\mi^\onCG, s_{\usr,1}^\onCG,\allowbreak\ldots,\allowbreak s_{\usr,n}^\onCG)$ in the on-chain game $\onCG$, 
and specifies that (i) all agents ``follow the protocol'' (for users, this means bidding their value, and for the miner, this means playing a compliant strategy) in $\sigma^\onCG$, and (ii) $\sigma^\onCG$ forms an equilibrium.
Thus, on-chain simplicity is very close in spirit to the combination of UIC and MMIC (\citealp{Roughgarden20, Roughgarden21}; see also \autoref{rem:on-chain-vs-previous} below).

\begin{definition}[On-Chain Simplicity] \label{def:OnChainSimple}
    Consider the on-chain game $\onCG$ induced by some block-building process, and a distribution $\TypeDistr$ of user values.
    The user BNE $(s_\mi^\onCG, s_{\usr, 1}^\onCG,\allowbreak \dots, \allowbreak s_{\usr, n}^\onCG)$ is \emph{on-chain simple} for
    $\onCG$ and $\TypeDistr$ if it satisfies the following two properties: 
    \begin{itemize}
        \item \emph{(On-Chain User Simple)} Conditioned on the miner's strategy $s_\mi^\onCG$, the resulting auction is DSIC and individually rational for users. In other words:
        \begin{enumerate}
            \item $s_{\usr, i}^\onCG(v_i) = v_i$, i.e, users report their values truthfully in $\onCG$.\footnote{We abuse notation to describe the action suggested by a strategy to be a real number instead of a set of bids (as is necessitated by our model). Technically, the above mentioned strategy should be $\{v_i\}$, signifying that the user submits only one bid, equal to his value. \label{fn:StrategyNotationAbuse}}\textsuperscript{,}\footnote{We are assuming that bids are non-negative real numbers in the above definition. However, the assumption is without loss of generality and can be easily extended to other bid spaces. Instead of requiring ``bidding their value'' being the dominant strategy, we require that there exists a bid that optimizes the user's utility irrespective of the bids placed by the other users. For instance, if the bid space is \{yes, no\} in a posted-price mechanism (see \autoref{fn:BidSpace} for details and more examples of alternate bid spaces), a user with value larger (lower) than the price would always want to bid yes (no) irrespective of the bids of other users.}
            \item For any set of bids $b_{-i}$ submitted by other users, user $i$'s best response is to bid truthfully. 
            Formally, for any value $v_i$ of user $i$ and strategy $\widetilde{s}_{\usr, i}^\onCG(\cdot) \in S_{\usr}^\onCG$ for user $i$,
            \[ v_i \cdot X_i^{\onCG}(s_\mi^\onCG; v_i, b_{-i}) - P_i^{\onCG}(s_\mi^\onCG; v_i, b_{-i})
            \ge
            v_i \cdot X_i^{\onCG}(s_\mi^\onCG; \widetilde{s}_{\usr, i}^\onCG(v_i), b_{-i}) - P_i^{\onCG}(s_\mi^\onCG; \widetilde{s}_{\usr, i}^\onCG(v_i), b_{-i}). \]
            \item $v_i \cdot X_i^{\onCG}(s_\mi^\onCG; v_i, b_{-i}) - P_i^{\onCG}(s_\mi^\onCG; v_i, b_{-i}) \ge 0$.
            The user is guaranteed a non-negative utility by bidding his value.
            Otherwise, (as a consequence of DSIC), the user nets a negative utility and is better off from not participating in the auction.
            
        \end{enumerate}
        \item \emph{(On-Chain Miner Simple)}
        The miner's strategy $s^\onCG_\mi$ is compliant, and is a best response to the users' strategies. Formally, the latter condition says that conditioned on users bidding according to the strategy profile $(s_{\usr, 1}^\onCG, \dots, s_{\usr, n}^\onCG)$, the miner optimizes her expected revenue by playing $s_\mi^\onCG$, 
        i.e.,  for any strategy $\widetilde{s}_\mi^\onCG \in S^\onCG_\mi$ of the miner,
        \[
        \Es{v \sim \TypeDistr}{\rev^\onCG(s_\mi^\onCG; s_\usr^\onCG(v))}
        \ge
        \Es{v \sim \TypeDistr}{\rev^\onCG(\widetilde{s}_\mi^\onCG; s_\usr^\onCG(v))}.
        \]
    \end{itemize}
\end{definition}

The following corollary follows directly from the definition.
\begin{corollary}
    If the strategy profile $(s_\mi^\onCG, s_{\usr, 1}^\onCG, \dots, s_{\usr, n}^\onCG)$ is on-chain miner simple for the on-chain game $\onCG$ induced by some block-building process $\bBuild$ and distribution $\TypeDistr$ of user values, then $(s_\mi^\onCG, s_{\usr, 1}^\onCG, \dots, s_{\usr, n}^\onCG)$ is a Bayes-Nash equilibrium in $\onCG$ (considering both users and the miner as strategic agents).
\end{corollary}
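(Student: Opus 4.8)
The plan is to unwind the definition of a Bayes--Nash equilibrium of the \emph{full} on-chain game---the one in which both the miner and the $n$ users are treated as strategic agents---into two separate best-response requirements, and then verify that each of these is already supplied by the hypothesis. Concretely, a profile $(s_\mi^\onCG, s_{\usr,1}^\onCG, \dots, s_{\usr,n}^\onCG)$ is a BNE of $\onCG$ precisely when (i) each user $i$, holding the miner's strategy $s_\mi^\onCG$ and the other users' strategies fixed, maximizes her expected utility by playing $s_{\usr,i}^\onCG$, and (ii) the miner, holding the users' profile $s_\usr^\onCG$ fixed, maximizes her expected utility---which in this setting is exactly the expected revenue $\rev^\onCG$---by playing $s_\mi^\onCG$. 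So the whole proof reduces to checking these two items.

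First I would dispatch item (i). The hypothesis carries with it that the profile is a \emph{user} BNE, which by definition means that, conditioned on the miner playing $s_\mi^\onCG$, the users' profile $(s_{\usr,1}^\onCG,\dots,s_{\usr,n}^\onCG)$ is a Bayes--Nash equilibrium among the users. This is verbatim item (i): the user's best-response quantifier in the full game holds the miner's strategy fixed, which is exactly the quantifier defining a user BNE. Hence the user half of the full equilibrium is inherited directly, with no additional argument needed.

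Then I would dispatch item (ii), which is where on-chain miner simplicity does its work. By definition, it asserts that, conditioned on the users bidding according to $s_\usr^\onCG$, the miner maximizes expected revenue by playing $s_\mi^\onCG$, i.e. $\Es{v \sim \TypeDistr}{\rev^\onCG(s_\mi^\onCG; s_\usr^\onCG(v))} \ge \Es{v \sim \TypeDistr}{\rev^\onCG(\widetilde{s}_\mi^\onCG; s_\usr^\onCG(v))}$ for every alternative miner strategy $\widetilde{s}_\mi^\onCG \in S^\onCG_\mi$. Since the miner's utility in $\onCG$ is precisely this expected revenue, this is exactly item (ii). Combining (i) and (ii) shows that every strategic agent best-responds to the rest, which is the definition of a BNE of the full game, completing the proof.

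I expect no genuine mathematical obstacle here; the content is purely definitional, and the only care required is bookkeeping---confirming that the ``conditioned on the miner's strategy'' best-response quantifier in the user-BNE definition coincides with the user half of the full-game equilibrium, and that the miner's revenue as defined for $\onCG$ is indeed the correct notion of miner utility, so that expected-revenue maximization is expected-utility maximization. Once these two identifications are made explicit, the two halves of the hypothesis slot directly into the two halves of the equilibrium definition.
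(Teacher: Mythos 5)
Your proposal is correct and matches the paper's treatment: the paper offers no separate proof because the corollary ``follows directly from the definition,'' and your unwinding into the user best-response half (supplied by the standing user-BNE hypothesis) and the miner best-response half (supplied by on-chain miner simplicity, with $\rev^\onCG$ being exactly the miner's utility in $\onCG$) is precisely that definitional argument.
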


\begin{remark}[On-Chain Simplicity vs. Previous Desiderata]
\label{rem:on-chain-vs-previous}
We remark that the ideas behind on-chain user and on-chain miner simplicity are essentially the same as the ideas behind UIC and MMIC \cite{Roughgarden20}, respectively.
However, there are two important differences between our definition of on-chain simplicity and these previously-considered definitions in TFM design.

First, we use a more-expansive model of TFM protocols: our model of TFM protocols allows for the miner to submit an advice. 
Thus, our notion of a compliant miner strategy must be more permissive than previous ones, which only consider a single unique miner strategy as ``following the protocol'' (namely, divulging all users' bids without fabricating any of her own).

Second, following \cite{AkbarpourL20}, we adopt different semantics in that our definition posits that a \emph{particular} strategy profile $(s_\mi^\onCG, s_\usr^\onCG)$ is on-chain simple, where prior definitions posit that the mechanism is `simple' if some such equilibrium exists.
We posit that this approach is more clear when certain properties (such as being DSIC) might hold in some equilibria of the mechanism and might not hold in others, or when the equilibrium strategies are not obvious from the mechanism.
Relatedly, our definitions specify that the strategy profile is simple for a particular distribution of user values.
\end{remark}

\subsection{Off-Chain Influence Proofness}
\label{sec:Definitions-OCIP}

We now proceed to define the main novel simplicity condition of our paper, off-chain influence proofness.
Off-chain influence proofness is a property of a strategy profile $(\Moff, s_{\usr, 1}^{\offCG, \Moff}, \dots, s_{\usr, n}^{\offCG, \Moff})$ in the off-chain game $\offCG$,
and specifies that (i) the miner and users stay entirely on-chain,
and (ii) the miner cannot profit even by conducting any alternative off-chain mechanism.
Formally:

\begin{definition}[Off-Chain Influence Proofness] \label{def:OffChainInfluenceProof}
    For some off-chain game $\offCG$ and distribution $\TypeDistr$,
    a strategy profile $(\Moff, s_{\usr, 1}^{\offCG, \Moff}, \dots, s_{\usr, n}^{\offCG, \Moff})$ 
    is \emph{off-chain influence proof} if:
    \begin{enumerate}
        \item \label{item:OCIP-on-chain}
        $\Moff$ is a trivial off-chain mechanism (i.e., $\Moff$ has no off-chain component; see \autoref{def:model}, \autoref{item:model-offCG}).
        Further, $(\Moff, s_{\usr, 1}^{\offCG, \Moff}, \dots, s_{\usr, n}^{\offCG, \Moff})$ is a user BNE in the off-chain game $\offCG$.
        
        \item $\Moff$ is revenue optimal over all off-chain mechanisms that the miner can commit to, and any corresponding user Bayes-Nash equilibrium, conditioned on using $\bBuild$ to finalize the block. 
        Formally, for any off-chain mechanism $\widetilde{\mathcal{M}}_{\mathsf{off}}$ and any user Bayes-Nash equilibrium strategy profile $(\widetilde{\mathcal{M}}_{\mathsf{off}}, \widetilde{s}_{\usr, 1}^{\offCG, \widetilde{\mathcal{M}}_{\mathsf{off}}}, \dots, \widetilde{s}_{\usr, n}^{\offCG, \widetilde{\mathcal{M}}_{\mathsf{off}}})$ for users in $\widetilde{\mathcal{M}}_{\mathsf{off}}$, we have
        \[
            \Es{v \sim \TypeDistr}{\rev^\offCG(\Moff; s_\usr^{\offCG, \Moff}(v))}
            \ge
            \Es{v \sim \TypeDistr}{\rev^\offCG(\widetilde{\mathcal{M}}_{\mathsf{off}}; \widetilde{s}_\usr^{\offCG, \widetilde{\mathcal{M}}_{\mathsf{off}}}(v))}.
        \]
    \end{enumerate}
\end{definition}

Since \autoref{item:OCIP-on-chain} of this definition states that the miner and users stay entirely on-chain, we can also view on-chain influence proofness as a property of a user BNE $(s_\mi^\onCG, s^\onCG_{\usr,1}, \ldots, s^\onCG_{\usr,n})$ in the on-chain game $\onCG$.
Throughout this paper, we refer to such a strategy profile in $\onCG$ as off-chain influence proof if the corresponding off-chain profile $(\Moff^{s_\mi^\onCG}, \bot(s^\onCG_{\usr,1}), \ldots, \bot(s^\onCG_{\usr,n}))$ satisfies off-chain influence proofness.

From the perspective of a revenue-maximizing miner,
\autoref{def:OffChainInfluenceProof} gives a very strong property of the user BNE $\sigma^\onCG = (s_\mi^\onCG, s^\onCG{\usr,1}, \ldots, s^\onCG{\usr,n})$.
Namely, even though the miner stays completely on-chain in $\sigma^\onCG$,
in \emph{any} other off-chain mechanism, and \emph{any} user Bayes-Nash equilibrium induced by the other off-chain mechanisms,
the revenue cannot increase beyond that of
$(s_\mi^\onCG, s^\onCG_{\usr,1}, \ldots, s^\onCG_{\usr,n})$.
Thus, the miner cannot profit by deviating from $\sigma^\onCG$, and this is true in a much stronger sense than typically considered under most formal notions of an equilibrium.
In \autoref{sec:weaker-influence-proof}, we examine several different notions of influence-proofness, and observe that off-chain influence-proofness gives the strongest guarantee among all that we consider.\footnote{
    For example, we consider definitions which allow the miner to influence the selected user BNE without using an off-chain mechanism,
    and we consider whether the miner needs commitment power or not.

    Of all previous definitions of which we are aware, off-chain influence proofness is most connected to the notion of a subgame perfect equilibrium.
    Specifically, an off-chain influence proof strategy profile is a subgame perfect equilibrium of the off-chain game, in which the miner moves first in order to commit to the mechanism $\Moff$, and all users play simultaneous after learning $\Moff$. 
}

\subsection{Strong Collusion Proofness} \label{sec:ColProof}

Lastly, we introduce the notion of strong collusion proofness, which stipulates that the miner and users cannot jointly increase their utility by colluding.
This definition is close in spirit to side-contract proofness (SCP) as defined in \citep{ChungS23}.
However, similar to \autoref{rem:on-chain-vs-previous}, we make two modifications of this definition to fit our framework.
First, we consider arbitrary user Bayes-Nash equilibria (not just equilibria where users truthfully bid their values).
Second, we allow the miner to increase the cartel's utility by manipulating her advice to the on-chain game $\onCG$.

\begin{definition}[1-1-Strong Collusion Proofness]
\label{def:strong-collusion-proof}
    Let $\onCG$ be the on-chain game induced by some block building process, and consider some distribution $\TypeDistr$ of user values.
    The strategy profile $(s_\mi^\onCG, s_{\usr, 1}^\onCG, \dots, s_{\usr, n}^\onCG)$ in user Bayes-Nash equilibrium is \emph{1-1-strong collusion proof}\footnote{
        For $a \in \N$ and $b \in \{0, 1\}$, we call a mechanism to be $a$-$b$-strong collusion proof if it is resistant to collusion when the colluding cartel has $a$ users and $b$ miners. Given this notation, it is not hard to extend the above definition of 1-1-strong collusion proofness to cartels of larger sizes.
        \label{fn:abColProof}} 
    (or \emph{strong collusion proof} for short) if for each user $i$ and value $v_i$ for user $i$, the miner and user $i$ should not be able to increase their joint utility by deviating from playing $s_\mi^\onCG$ and $s_{\usr, i}^\onCG(v_i)$ respectively.
    Formally, over all values $v_i$ and bids $b_i$ for user $i$ and all strategies $\widetilde{s}_\mi^\onCG$ for the miner,
        \begin{equation*}
            \notag
            \begin{split}
                &\Es{v_{-i} \sim \TypeDistr^{n-1}}{\rev^\onCG(s_\mi^\onCG; s_\usr^\onCG(v))
                \vphantom{\Big |}
                + v_i \cdot X_i^{\onCG}(s_\mi^\onCG; s_\usr^\onCG(v)) - P_i^{\onCG}(s_\mi^\onCG; s_\usr^\onCG(v))}  \\ 
                & \qquad \ge
                \Es{v_{-i} \sim \TypeDistr^{n-1}}{\rev^\onCG(\widetilde{s}_\mi^\onCG; b_i, s_{\usr, -i}^\onCG(v_{-i}))
                \vphantom{\Big |}
                + v_i \cdot X_i^{\onCG}(\widetilde{s}_\mi^\onCG; b_i, s_{\usr, -i}^\onCG(v_{-i})) - P_i^{\onCG}(\widetilde{s}_\mi^\onCG; b_i, s_{\usr, -i}^\onCG(v_{-i}))}       
            \end{split}
        \end{equation*}
\end{definition}

At a high level, being robust to collusion is a very desirable notion, especially in decentralized settings (where agents may be able to collude via smart contracts, and where centralization is a strong concern).
However, it turns out that achieving strong collusion proofness is impossible when combined with on-chain simplicity and off-chain influence proofness, as we demonstrate in \autoref{sec:impossibility}.
While a full investigation of different notions of collusion is beyond the scope of our paper,
we include here a brief discussion of why notions of collusion proofness similar to \autoref{def:strong-collusion-proof} may be too strong, and why blockchains might be better served by adopting off-chain influence proof TFMs instead.

Most significantly, we observe that \autoref{def:strong-collusion-proof} assumes that all agents in the cartel are not strategic with each other.
This assumption is certainly appropriate if the miner herself is submitting a bid to the TFM.
However, such an assumption is necessarily not appropriate when agents do not fully incorporate into a single entity.
We posit that even while colluding, when the miner and users are distinct parties, 
they would be colluding according to a formally specified contract or profit-sharing function, and would be playing strategies which are in equilibrium with each other.\footnote{
    See \autoref{sec:AlternateCollusion} for an example of this idea in a posted price mechanism.
}
Thus, even though \autoref{def:strong-collusion-proof} captures relevant concerns of collusion for some settings, it may be much stronger than needed in many cases.

In \autoref{sec:AlternateCollusion}, we investigate weaker definitions of collusion resistance which capture the above idea of colluding agents (a.k.a the ``cartel''), being strategic with each other.
In \autoref{sec:TrustlessColProof}, we observe that in fact, off-chain influence proofness directly implies \emph{trustless collusion proofness}, a weak form of collusion resistance where all users play in equilibrium according to a proposed contract of collusion.
This models collusion where all agents (including the ones not in the cartel) learn about the collusion.
In \autoref{sec:WeakColProof}, we define weak collusion proofness, a stronger (but still weaker than \autoref{def:strong-collusion-proof}) model of collusion in which the cartel best-responds to a proposed contract, but agents not in the cartel are oblivious to the contract, and thus cannot react to the cartel deviating strategically.
The cryptographic $(k+1)\textsuperscript{th}$-price auction (\autoref{def:C2PA}), for instance, satisfies weak collusion proofness (\autoref{thm:C2PAColPos}), but not strong collusion proofness (\autoref{thm:C2PACol}).
We believe that these observations strengthen our proposal to consider off-chain influence proof %
TFMs such as the cryptographic $(k+1)\textsuperscript{th}$-price auction, even though it is not strongly collusion proof.
we leave a more thorough investigation of these and other notions of collusion as a promising direction for future work.

\section{Evaluating Canonical Mechanisms}
\label{sec:specific-cases}

In this section, we consider various mechanisms, and ask whether they satisfy on-chain simplicity, off-chain influence proofness and strong collusion proofness.
As in \autoref{sec:Definitions}, we assume user values are drawn i.i.d from a regular distribution $\TypeDistr$.

\subsection{EIP-1559} 
\label{sec:EIP1559}

To begin, we review the definition of EIP-1559 \citep{Buterin20, Roughgarden21}---the transaction fee mechanism employed by the Ethereum blockchain---for blocks with an unlimited capacity.
In this idealized setting, EIP-1559 essentially becomes a posted price mechanism with an exogenously-determined price, where all payments are burnt.
We adopt this simplified version of EIP-1559 for our analysis.

\begin{definition}[EIP-1559 with Unlimited Supply] \label{def:EIP1559}
    For an exogenously given price $p$, the block-building algorithm $\bBuild$ of EIP-1559 is as follows:
    \begin{itemize}
        \item The mechanism takes no advice from the miner. Formally, this means that the miner advice is always some fixed $a_\mi = \emptyset$, i.e. $\mathsf{Adv}_{\mi} = \{ \emptyset \}$.
        
        \item $\mathsf{Bid}_{\usr} = \R_{\ge 0}$. Bids consist of a non-negative real number each. Bids are submitted in the plaintext model.
        
        \item All bids greater than the price $p$ are included in the block. For a set of bids $b$,
        \begin{equation*}
            \notag
            \begin{split}
                X_i(\emptyset ;b_i, b_{-i}) = \begin{cases}
                    1 & \text{if } b_i \geq p \\
                    0 & \text{otherwise}
                \end{cases}
            \end{split}
        \end{equation*}
        \item All included bids pay the price $p$ posted by the mechanism. In other words, for a set of bid $b$ submitted to the mechanism,
        \begin{equation*}
            \notag
            \begin{split}
                P_i(\emptyset ;b_i, b_{-i}) = \begin{cases}
                    p & \text{if } b_i \geq p \\
                    0 & \text{otherwise}
                \end{cases}
            \end{split}
        \end{equation*}
        \item Finally, all payments transferred by the users are burnt. The miner receives zero revenue (or equivalently, any constant block reward).
        \[
        \mathsf{Rev}(\emptyset; b) = 0
        \]
    \end{itemize}
\end{definition}

\begin{remark}
    The above definition is a simplified version of EIP-1559, specific to the unlimited supply setting, which captures all necessary components required for its analysis.
    The original proposal and analysis of EIP-1559 \citep{Buterin20, Roughgarden20} differs from \autoref{def:EIP1559} as follows.
    The bid space $\mathsf{Bid}_{\usr} = \R_{\ge 0}\times \R_{\ge 0}$
    consists of two real numbers $(\delta_i, c_i)$.
    The first coordinate $\delta_i$ of the bid $b_i$ is the \emph{tip} to be transferred to the miner, while the second coordinate $c_i$ is the \emph{fee cap}.
    Bid $(\delta_i, c_i)$ indicates that user $i$ will pay $p + \delta_i$, where $p$ is burnt and $\delta_i$ is the tip to the miner, up to a maximum of a total payment of $c_i$.
    For a given price $p$, any bid $i$ with a fee cap $c_i \ge p$ can be included (as long as the block capacity is not exceeded, in case of a finite capacity).
    User $i$ pays $\min \{p + \delta_i, c_i\}$ if included, and $0$ otherwise.
    The miner receives a revenue of $\sum_i \min \{\delta_i, c_i - p\}$, where the sum is over all allocated users $i$ (plus a constant block reward).
    
    While the fee cap and tips are important in case of blocks with a finite supply, the case of an unlimited supply (or a supply much larger than the demand) essentially reduces to \autoref{def:EIP1559}. 
    In practice, we indeed often observe users including only very small tips and essentially just choosing if they are willing to pay the current posted price, particularly in periods of low demand.
    We refer the reader to \citet{Roughgarden20, Roughgarden21} for a thorough description of tips and fee caps.    
    Note that the fact that we work in the simplified infinite supply case makes our negative result for EIP-1559 only stronger.\footnote{This is in contrast to the results of \citet{ChungS23}, who argue that EIP-1559 is the gold standard when supplemented with a pricing oracle which ensures that the demand is always smaller than the supply.}
\end{remark}

For the remainder of this subsection, we analyze the simplicity notions satisfied by the strategy profile $\sigma^\onCG = (s_{\mi}^{\mathsf{comp}}(\emptyset), v_1, \dots, v_n)$ in the on-chain game, where $s_{\mi}^{\mathsf{comp}}(\emptyset)$ denotes the compliant miner strategy that specifies playing $(\emptyset, H, \emptyset)$ for a set $H$ of submitted bids (i.e, the miner does not submit an advice, does not censor any bids and does not fabricate bids),
and $v_i$ denotes user $i$ bidding truthfully (\autoref{fn:StrategyNotationAbuse}).
We argue that $\sigma^\onCG$ satisfies on-chain simplicity and strong collusion proofness, but not off-chain influence proofness.

\begin{theorem}[Adaptation of \cite{Roughgarden20}] \label{thm:EIP1559Rou}
    The strategy profile $\sigma^\onCG$ in the on-chain game $\onCG$ induced by EIP-1559 for any distribution $\TypeDistr$ of user values is on-chain simple and strong collusion proof.
\end{theorem}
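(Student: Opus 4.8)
The plan is to verify the three required conditions separately, exploiting two structural features of unlimited-supply EIP-1559: (i) under the compliant strategy the mechanism is exactly a posted-price auction at price $p$, so user $i$'s allocation and payment depend only on whether $b_i \ge p$ and never on $b_{-i}$; and (ii) the miner's on-chain revenue is identically zero, with fabricated bids only able to decrease it. Since $\mathsf{Adv}_\mi = \{\emptyset\}$, the only miner deviations I must rule out are censoring submitted bids and injecting fabricated ones.

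For on-chain simplicity, I first handle user simplicity. Conditioned on $s_\mi^{\mathsf{comp}}(\emptyset)$, any bid $b_i \ge p$ includes $i$ at price $p$ (utility $v_i - p$) while any bid $b_i < p$ excludes $i$ at price $0$ (utility $0$), and this holds irrespective of $b_{-i}$ by unlimited supply. Hence the best outcome is to be included iff $v_i \ge p$, which truthful bidding $b_i = v_i$ achieves, giving DSIC, and the induced utility $\max\{v_i - p, 0\} \ge 0$ gives individual rationality; this is the standard posted-price argument and should present no difficulty. For miner simplicity, $s_\mi^{\mathsf{comp}}(\emptyset)$ is compliant by construction, and since $\mathsf{Rev}(\emptyset; b) = 0$ for every $b$, the only effect any deviation can have on $\rev^\onCG$ is through the (nonnegative) payments the miner must cover for fabricated bids; thus $\rev^\onCG(\widetilde{s}_\mi^\onCG; s_\usr^\onCG(v)) \le 0 = \rev^\onCG(s_\mi^{\mathsf{comp}}(\emptyset); s_\usr^\onCG(v))$, so compliance is a weak best response.

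The crux is strong collusion proofness. Under $\sigma^\onCG$ the cartel's joint value equals $\rev^\onCG + (\text{utility of } i) = 0 + \max\{v_i - p, 0\}$. For an arbitrary joint deviation $(\widetilde{s}_\mi^\onCG, b_i)$ played against truthful $v_{-i}$, I will bound the two summands separately, and pointwise in $v_{-i}$: the miner's revenue is again at most $0$ by the money-burning observation, and user $i$'s utility is at most $\max\{v_i - p, 0\}$, since $i$ is either censored or bids below $p$ (utility $0$) or is included at price exactly $p$ (utility $v_i - p$). Summing, the joint value of any deviation is at most $\max\{v_i - p, 0\}$, which matches $\sigma^\onCG$; because these bounds are uniform in $v_{-i}$, taking the expectation over $v_{-i} \sim \TypeDistr^{n-1}$ preserves the inequality, establishing the definition.

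The main place to be careful is that strong collusion proofness lets the miner and user $i$ coordinate arbitrarily, so I must control all combinations of censoring, fabrication, and the chosen bid $b_i$ simultaneously rather than one deviation type at a time. The key simplification that makes this tractable is that burning completely decouples the two terms in the joint objective: the miner can never route any portion of user $i$'s payment $p$ back to herself, so the cartel's total surplus can never exceed the value user $i$ alone could extract from the posted price, and that value is already realized by honest on-chain play.
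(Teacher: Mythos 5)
Your proposal is correct and takes essentially the same approach as the paper's proof (given in its appendix): the posted-price DSIC/IR argument for user simplicity, the observation that fabricated bids can only burn the miner's own money for miner simplicity, and the decoupling of the cartel's objective into miner revenue (at most zero) plus user $i$'s utility (at most $\max\{v_i-p,0\}$, already achieved by truthful bidding) for strong collusion proofness. Your pointwise-in-$v_{-i}$ bounding of the two summands is a slightly cleaner packaging of the paper's ``WLOG the miner neither fabricates nor censors'' step, but the underlying idea is identical.
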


Our arguments are similar to the proof of the analogous claims in \citet{Roughgarden20}. For completeness, we prove \autoref{thm:EIP1559Rou} formally in \autoref{sec:ExcEIP1559}. We sketch the proof below.

To begin, note that on-chain user simplicity of $\sigma^\onCG$ (and the fact that $\sigma^\onCG$ is a user BNE)  follows immediately since a posted-price mechanism is DSIC for users.
On-chain miner simplicity follows from the fact that the miner can only decrease her revenue by getting some of her fabricated bids included and burning her own payments in the process.
Thus, she maximizes her revenue by playing the compliant strategy.
Finally, we verify strong collusion proofness.
Since the miner gets constant revenue (unless she submits fabricated bids, which as argued above only leads to burning her own money), optimizing the joint utility of a coalition of the miner and a user reduces to optimizing the user's utility.
As argued in on-chain user simplicity, bidding truthfully optimizes the user's utility, so strong collusion proofness follows.

In contrast, EIP-1559 does not satisfy off-chain influence proofness.

\begin{theorem} \label{thm:EIPOffChain}
    There exists regular distributions $\TypeDistr$ of values such that the strategy profile $\sigma^\onCG = (s_{\mi}^{\mathsf{comp}}(\emptyset), v_1, \allowbreak \dots, \allowbreak v_n)$ is not off-chain influence proof in the off-chain game $\offCG$.
\end{theorem}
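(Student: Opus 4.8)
The plan is to refute Condition~2 of \autoref{def:OffChainInfluenceProof} by exhibiting a single off-chain mechanism whose revenue strictly exceeds the revenue of $\sigma^\onCG$. First I would record the baseline. Under $\sigma^\onCG = (s_\mi^{\mathsf{comp}}(\emptyset), v_1,\ldots,v_n)$ the miner fabricates no bids and every on-chain payment is burned (\autoref{def:EIP1559}), so $\rev^\onCG(s_\mi^\onCG; s_\usr^\onCG(v)) = 0$ for every value profile $v$, and hence $\Es{v\sim\TypeDistr}{\rev^\offCG(\Moff^{s_\mi^\onCG}; \bot(v))} = 0$. Since $\sigma^\onCG$ is on-chain simple (\autoref{thm:EIP1559Rou}), it is a user BNE, so Condition~1 of off-chain influence proofness holds; it therefore suffices to build some off-chain mechanism with strictly positive expected revenue for a suitable regular $\TypeDistr$.

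I would choose any regular distribution $\TypeDistr$ with $\Ps{v\sim\TypeDistr}{v > p} > 0$ (for instance an exponential distribution, which is regular, or a uniform distribution with support exceeding the exogenous price $p$), and fix $\gamma > 0$ small enough that $\Ps{v\sim\TypeDistr}{v \geq p + \gamma} > 0$ (such a $\gamma$ exists by continuity). I would then define the ``entry-fee'' off-chain mechanism $\widetilde{\mathcal{M}}_{\mathsf{off}}$: each user reports off-chain whether to pay an entry fee $\gamma$; the mechanism instructs every paying user to submit the on-chain bid $p$ under its own identifier, charges $p_{\mathsf{off},i} = \gamma$ to each payer, and commits the miner to the on-chain strategy that includes exactly the bids whose identifiers authorized payment and censors all others. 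Abstaining is the mandated opt-out message, assigning $p_{\mathsf{off},i} = 0$ and any on-chain strategy the user prefers, so $\widetilde{\mathcal{M}}_{\mathsf{off}}$ is a legal off-chain mechanism per \autoref{def:model}.

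Next I would verify that the profile in which each user with $v_i \geq p + \gamma$ pays the fee (bidding $p$ on-chain) while all other users abstain is a user Bayes--Nash equilibrium (indeed a dominant-strategy equilibrium, since under unlimited supply every authorized bid is included regardless of the others). A payer with $v_i \geq p + \gamma$ is included and nets $v_i - p - \gamma \geq 0$, whereas abstaining leaves the bid censored for utility $0$; a user with $v_i < p + \gamma$ cannot gain from paying (net $v_i - p - \gamma < 0$) and so abstains. Computing revenue then yields, by linearity over the $n$ i.i.d.\ users, $\Es{v\sim\TypeDistr}{\rev^\offCG(\widetilde{\mathcal{M}}_{\mathsf{off}}; \widetilde{s}_\usr(v))} = \gamma \cdot n \cdot \Ps{v\sim\TypeDistr}{v \geq p + \gamma} > 0$, which strictly exceeds the zero revenue of $\sigma^\onCG$ and contradicts Condition~2.

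The step I expect to be the main obstacle is justifying that this profile is genuinely an equilibrium, i.e.\ that an abstaining user cannot free-ride into the block without paying $\gamma$. This is precisely where the miner's ability to censor by identifier is essential: the committed on-chain strategy includes only the identifiers authorized through $\widetilde{\mathcal{M}}_{\mathsf{off}}$, and since each abstaining user's own identifier is unauthorized while the authorized identifiers of payers are determined by their simultaneous, private reports, an abstaining user has no reliable way to place an included bid. I would make this precise by checking that the ``censor all unauthorized identifiers'' strategy lies in the plaintext-model strategy space $S^\onCG_\mi$, and that, against it, any deviation to an on-chain bid under an unauthorized identifier yields allocation probability $0$; hence abstaining is weakly dominated by paying exactly when $v_i \geq p + \gamma$, completing the equilibrium verification.
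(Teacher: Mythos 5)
Your proof is correct, and it reaches the same destination as the paper's --- an off-chain ``censor unless you pay me'' mechanism with strictly positive expected revenue, versus the zero on-chain revenue of EIP-1559 --- but by a more elementary and, in one respect, more careful route. The paper's proof invokes \autoref{thm:revenue-equals-virtual-welfare} to write the miner's revenue from an arbitrary off-chain mechanism as expected virtual welfare minus burn, concludes that the \emph{optimal} off-chain mechanism is a posted price of $\phi^{-1}(p)$ (users pay $\phi^{-1}(p)-p$ off-chain and burn $p$ on-chain), and observes this is positive whenever $\TypeDistr$ has mass above $p$. You instead exhibit a single, explicitly suboptimal entry-fee mechanism with an arbitrary small fee $\gamma$ and verify by hand that the induced profile is a user BNE (indeed dominant-strategy, thanks to unlimited supply) and that it is a legal off-chain mechanism under \autoref{def:model} --- including the abstention/opt-out requirements and the point that property (b) constrains the miner's committed \emph{strategy function}, not its output on the realized bid set, so censoring unauthorized identifiers is permitted. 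This buys you a proof that needs no Myerson machinery and that directly formalizes the informal attack from the introduction, at the cost of not identifying the revenue-optimal deviation; the paper's virtual-value derivation buys optimality of the deviation but is terser about why the construction satisfies the formal requirements of the off-chain game. Your flagged ``main obstacle'' (ruling out free-riding by abstainers) is handled correctly: the plaintext-model miner strategy space allows conditioning inclusion on identifiers, and an abstaining user's bid under an unauthorized identifier is included with probability zero, which is exactly what makes the entry fee enforceable.
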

\begin{proof}
    Recall that the miners' on-chain revenue is zero (which is strategically equivalent to receiving a constant block reward); hence, it would suffice to construct any off-chain mechanism in which the miner must collect off-chain payments and receives positive expected revenue.
    In fact, we sketch the miner's optimal off-chain mechanism, and observe that it indeed (typically) gives the miner a positive expected revenue.

    Consider a general off-chain mechanism $\Moff$ and fix a user BNE under $\Moff$.
    For any profile of user values $v$ and any user $i$, let $p_i(v)$ be the expected (on-chain and off-chain) payment made by user $i$ under this user BNE of $\Moff$ in $\offCG$.
    By including a user $i$ with value $v_i$, the miner earns a virtual welfare $\phi(v_i)$ but expends $p$ towards burns.
    Thus, the miner's expected revenue equals
    \begin{align*}
    & \Es{v \sim \TypeDistr^n}{\sum_{i = 1}^n p_i(v) - p \cdot \1{\text{user $i$ is included when values are $v$}}}
    \\ 
    & \qquad  = 
    \Es{v \sim \TypeDistr^n}{\sum_{i = 1}^n \big(\phi(v_i) - p \big) \cdot \1{\text{user $i$ is included when values are $v$}}},
    \end{align*}
    where the equality follows from \autoref{thm:revenue-equals-virtual-welfare}.
    Now, due to the unlimited supply model, user $i$ can be included independently of the other users.
    Thus, the miner optimizes expected revenue by allocating to users with a virtual value larger than the price $p$.
    By regularity, this corresponds to allocating users with a value $v_i \geq \phi^{-1}(p)$.

    Thus, the miner's optimal off-chain mechanism can be described as follows.
    The miner posts a price $\phi^{-1}(p)$ off-chain.
    Users willing to get included pay $\phi^{-1}(p) - p$ to the miner and burn $p$ on-chain.
    All other users are censored.
    This off-chain mechanism gets the miner revenue strictly higher than any on-chain strategy whenever the value distribution $\TypeDistr$ has positive mass above $p$; hence, EIP-1559 is not off-chain influence proof.
\end{proof}

\subsection{The Cryptographic $(k+1)\textsuperscript{th}$-price Auction} 
\label{sec:C2PA}

Following EIP-1559, we discuss the \emph{Cryptographic} $(k+1)\textsuperscript{th}$ \emph{Price Auction} (abbreviated C$(k+2)$PA).
The C$(k+1)$PA implements the $(k+1)\textsuperscript{th}$-price auction in the miner-gatekeeper model, while allowing the miner to set a reserve price.

\begin{definition}[Cryptographic $(k+1)\textsuperscript{th}$-price Auction with Reserve] \label{def:C2PA}
    The block building process $\bBuild$ of the cryptographic $(k+1)\textsuperscript{th}$ auction is as follows:
    \begin{itemize}
        \item $\mathsf{Adv}_\mi = \R_{\ge 0}$. The mechanism is advised by the miner on the reserve price $a_\mi$.
        \item $\mathsf{Bid}_{\usr} = \R_{\ge 0}$, similar to typical sealed-bid auctions. Bids are submitted in the miner-gatekeeper model.
        \item All bids $b_i$ larger than the reserve price, up to $k$ bids are included in the block. In other words, for a set of bids $b$ and a reserve $a_\mi$,
        \begin{equation}
            \notag
            \begin{split}
                X_i(a_\mi; b_i, b_{-i}) =
                \begin{cases}
                    1 & \text{if } b_i \geq a_\mi \text{ and } |\{j \neq i | b_j > b_i\}| \leq k-1 \\
                    0 & \text{otherwise}
                \end{cases}
            \end{split}
        \end{equation}
        \item All included users pay the minimum bid needed to get included, i.e, the maximum over the reserve price and the $(k+1)\textsuperscript{th}$ largest bid. For a set of bids $b$ such that $b^{(k+1)}$ is the $(k+1)\textsuperscript{th}$ largest bid (set $b^{(k+1)} = 0$ if there are less than $k+1$ bids),
        \begin{equation}
            \notag
            \begin{split}
                P_i(a_\mi; b_i, b_{-i}) =
                \begin{cases}
                    \max \{b^{(k+1)}, a_\mi\} & \text{if } b_i \geq a_\mi \text{ and } |\{j \neq i | b_j > b_i\}| \leq k-1 \\
                    0 & \text{otherwise}
                \end{cases}                
            \end{split}
        \end{equation}
        \item All collected payments are transferred to the miner.
        \[
        \mathsf{Rev}(a_\mi; b_i, b_{-i}) = \sum_i P_i(a_\mi; b_i, b_{-i})
        \]
    \end{itemize}
\end{definition}

Consider the strategy profile $\sigma^\onCG = (s_{\mi}^{\mathsf{comp}}(a_\mi), v_1, \allowbreak \dots, \allowbreak v_n)$ in the on-chain game $\onCG$, where $s_{\mi}^{\mathsf{comp}}(a_\mi)$ specifies playing $(a_\mi, H, \emptyset)$ for each set of bids $H$, and where $a_\mi$ is the monopoly reserve for a regular distribution $\TypeDistr$ (as defined in \autoref{def:Regular}).
We will argue that $\sigma^\onCG$ is on-chain simple and off-chain influence proof, but typically does not satisfy strong collusion proofness.

\begin{theorem} \label{thm:C2PAPosSummary}
     For all regular distributions $\TypeDistr$ with a monopoly reserve $a_\mi$, the strategy profile $\sigma^\onCG = (s_{\mi}^{\mathsf{comp}}(a_\mi), v_1, \allowbreak \dots, \allowbreak v_n)$ in the on-chain game $\onCG$ induced by C$(k+1)$PA is on-chain simple and off-chain influence proof.
\end{theorem}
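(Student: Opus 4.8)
The plan is to verify the two asserted properties—on-chain simplicity and off-chain influence proofness—separately, in both cases reducing to Myerson's characterization of the revenue-optimal $k$-unit auction (\autoref{thm:Regular-RevOpt}).

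For \textbf{on-chain user simplicity}, observe that conditioned on the compliant miner strategy $s_\mi^{\mathsf{comp}}(a_\mi)$, no bids are censored and none are fabricated, so (in the miner-gatekeeper model) each user faces exactly the $(k+1)\textsuperscript{th}$-price auction with reserve $a_\mi$. This auction is DSIC and individually rational (as recalled in \autoref{sec:prelims-dsic}): bidding $v_i$ is a dominant strategy regardless of $b_{-i}$ (giving parts (1)--(2) of the definition), and a winner pays at most its own bid, so truthful utility is nonnegative (part (3)). Submitting additional bids cannot help under the ``at-least-one-included'' semantics. Since DSIC implies the profile is a user BNE, on-chain user simplicity holds.

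For \textbf{on-chain miner simplicity}, fix users bidding truthfully and consider an arbitrary miner deviation, which in C$(k+1)$PA consists of a reserve choice, a set of fabricated bids, and a censoring of some real bidders. I would first argue that the miner's net revenue equals the expected total payment collected from the real included users: since all payments in C$(k+1)$PA flow to the miner, the revenue collected from any fabricated bid that wins a slot is exactly offset by the payment the miner must make for it, so fabricated bids contribute zero net revenue and only serve to raise the effective threshold price (equivalently, to raise the reserve) or to waste a slot. Censoring a truthful bidder can only remove a payer and weakly lower the $(k+1)\textsuperscript{th}$ price, so it never helps. Thus every deviation is weakly dominated by some compliant strategy with a (possibly different) reserve $r$, and by \autoref{thm:Regular-RevOpt} the revenue-optimal reserve for a regular i.i.d. distribution is the monopoly reserve $a_\mi$. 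Hence $s_\mi^{\mathsf{comp}}(a_\mi)$ is a best response, giving on-chain miner simplicity.

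The crux is \textbf{off-chain influence proofness}. Part (1) is immediate: $\sigma^\onCG$ corresponds to a trivial $\Moff$, and on-chain user simplicity already shows the users' truthful play is a BNE in $\offCG$. For part (2), fix any off-chain mechanism $\widetilde{\mathcal{M}}_{\mathsf{off}}$ and any user BNE, and let $(x_i), (p_i)$ be the induced interim allocation and payment rules for the real users (these account for both on-chain fees and off-chain payments). I would first observe that, exactly as in the miner-simplicity argument, fabricated on-chain bids cancel in the revenue accounting, and---crucially---C$(k+1)$PA burns nothing, so the miner's total expected revenue equals the total expected payment $\sum_i \ESymbol_{v_i}[p_i(v_i)]$. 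By the direct revelation principle (\autoref{thm:DirectRevelation}) these interim rules coincide with those of a BIC mechanism, so \autoref{thm:revenue-equals-virtual-welfare} gives $\sum_i \ESymbol_{v_i}[p_i(v_i)] = \ESymbol_{v}\big[\sum_i \phi(v_i)\, x_i(v_i)\big]$. Because $\bBuild$ includes at most $k$ bids total, the induced allocation satisfies $\sum_i X_i(v) \le k$ pointwise, so for every realization the virtual welfare is at most that obtained by allocating to the (at most $k$) bidders with the largest \emph{positive} virtual values---i.e. those with $v_i \ge a_\mi$, since $\phi(a_\mi)=0$ and $\TypeDistr$ is regular. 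This pointwise-optimal virtual welfare is precisely the expected revenue of $\sigma^\onCG$ (the $(k+1)\textsuperscript{th}$-price auction with monopoly reserve), so $\ESymbol[\rev^\offCG(\widetilde{\mathcal{M}}_{\mathsf{off}};\cdot)] \le \ESymbol[\rev^\offCG(\Moff; \cdot)]$, establishing off-chain influence proofness.

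The main obstacle is the revenue accounting in part (2): one must argue carefully that no off-chain mechanism can extract more than the total user payment---this relies on the twin facts that C$(k+1)$PA burns nothing and that fabricated bids are revenue-neutral---and then that the revelation principle legitimately reduces an \emph{arbitrary} off-chain game (with arbitrary message spaces and equilibria, and payments made with or without on-chain inclusion) to a single BIC mechanism to which Myerson's virtual-welfare identity and the $k$-unit feasibility constraint apply. Getting the feasibility bound to match the monopoly-reserve revenue exactly (rather than merely bounding it) is what makes the inequality tight and pins down $a_\mi$ as optimal.
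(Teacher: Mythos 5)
Your overall architecture matches the paper's: user simplicity from DSIC of the $(k{+}1)$st-price auction, and both miner simplicity and off-chain influence proofness from Myerson's optimality of the monopoly-reserve auction (\autoref{thm:Regular-RevOpt}); your off-chain argument in particular spells out the revenue-equals-virtual-welfare plus $k$-feasibility reasoning that the paper compresses into a citation of that theorem, and it is correct. Two points diverge. First, your route to on-chain miner simplicity rests on the claim that fabricated bids ``only serve to raise the effective threshold price (equivalently, to raise the reserve) or to waste a slot,'' so that every deviation is weakly dominated by a compliant strategy with some reserve. That intermediate claim is not right in the $k$-unit setting: a fabricated bid that lands in the top $k$ simultaneously consumes a slot \emph{and} can become the price-setting $(k{+}1)$st bid, which is not replicable by any reserve; the conclusion survives only because the induced mechanism for the real users is still DSIC and $k$-feasible, so the Myerson bound you invoke for the off-chain part already covers it --- the paper argues exactly this way rather than via domination by reserves, and you would do better to do the same. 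Second, the paper flags a genuine technical subtlety that you skip: it is \emph{not} true that truthful bidding remains a user best response under arbitrary miner strategies (e.g.\ a miner strategy in the miner-gatekeeper model that censors everything when an even number of bids arrive induces users to submit extra bids), so one cannot blithely assert that every deviation still yields a truthful user BNE to which \autoref{thm:Regular-RevOpt} applies. The paper resolves this by passing to a modified environment in which each user submits exactly one bid, under which the miner's action becomes a constant and truthfulness is again dominant. Your argument happens to need only that the interim rules induced by the \emph{fixed} truthful single-bid strategies satisfy the payment identity, which does hold, but you should make that explicit rather than leave the BIC hypothesis of Myerson's theorem unverified.
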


We sketch the proof of \autoref{thm:C2PAPosSummary}.
The $(k+1)\textsuperscript{th}$-price auction is DSIC for users (\citealp{Vickrey61, Myerson81}; we include a proof in \autoref{sec:C2PADSIC} for completeness).
Thus,  $\sigma^\onCG$ is a user BNE and on-chain user simple for all distributions $\TypeDistr$.

Next, we show that $\sigma^\onCG$ in C$(k+1)$PA is off-chain influence proof.
From \autoref{thm:Regular-RevOpt}, we know $(s_{\mi}^{\mathsf{comp}}(a_\mi), v_1, \allowbreak \dots, \allowbreak v_n)$ is the revenue optimal mechanism over all possible BNE for a regular distribution $\TypeDistr$.
Any off-chain mechanism announced by the miner is going to induce a user BNE in the off-chain game and cannot have a larger revenue than $\sigma^\onCG$.

We now show that $\sigma^\onCG$ in C$(k+1)$PA is on-chain miner simple.
Informally, this follows because the miner cannot see the value of submitted bids, and hence can only optimize revenue by implementing the ex-ante revenue optimal auction (as in $\sigma^\onCG$).
One might hope to formalize this by showing that for any possible miner strategy $\widetilde{s}^\onCG_\mi$, the users' strategies in $\sigma^\onCG$ still constitute a truthful user BNE, and hence \autoref{thm:Regular-RevOpt} shows that the miner's expected revenue when playing $\widetilde{s}^\onCG_\mi$ cannot be higher than it originally was in $\sigma^\onCG$.
However, it is not technically true that users always best-respond by truth-telling for all miner strategies.\footnote{
    For instance, if the miner plays the following on-chain strategy--- if an odd number of bids are submitted, run the C$(k+1)$PA with the monopoly reserve $a_\mi$; otherwise, when an even number of bids are submitted, censor all bids.
    Any user is incentivized to fabricate a fake bid equal to zero when an even number of bids are submitted. \label{fn:OneBid}
}

To work around the above technical issue,
consider a modification of our mechanism design environment in which each user must submit \emph{exactly} one bid; note that user strategies in $\sigma^\onCG$ can be naturally viewed as strategies in this modified environment.
Now, for every miner strategy $\widetilde{s}^\onCG_\mi$ and every user strategy in the modified environment, we observe that truthful bidding is a dominant strategy for the users within the modified environment.
Recall that the miner's action can only depend on the set of bidders who submit a bid (in the miner-gatekeeper model), and in the modified environment, we stipulate that all users submit exactly one bid.
Hence,  the miner's action is a constant.
Fixing the set of fabricated bids, and the set of bidders whose bids are forwarded to the block-building process, it is straightforward to see that truthful bidding is a dominant strategy for each user.
Thus, for every deviation $\widetilde{s}^\onCG_\mi$ the miner might consider, $\sigma^\onCG$ is still a user BNE when the miner plays $\widetilde{s}^\onCG_\mi$; 
hence, \autoref{thm:Regular-RevOpt} applies in the modified environment to show that the miner's expected utility cannot possibly increase under $\widetilde{s}^\onCG_\mi$.
In particular, no miner strategy $\widetilde{s}^\onCG_\mi$ gets them higher utility than in $\sigma^\onCG$ when users all play according to $\sigma^\onCG$.

However, $\sigma^\onCG$ in C$(k+1)$PA is not strongly collusion proof.

\begin{proposition} \label{thm:C2PACol}
    There exist regular distributions $\TypeDistr$ of user values such that the strategy profile $\sigma^\onCG = (s_{\mi}^{\mathsf{comp}}(a_\mi), v_1, \allowbreak \dots, \allowbreak v_n)$, for the monopoly reserve $a_\mi$, is not strongly collusion proof in the on-chain game $\onCG$ induced by C$(k+1)$PA.
\end{proposition}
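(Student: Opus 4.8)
The plan is to exploit the key structural feature that distinguishes C$(k+1)$PA from EIP-1559: all user payments are transferred to the miner rather than burned. Consequently, when the miner colludes with a single user $i$, the transaction fee $P_i^{\onCG}$ that user $i$ pays is simultaneously counted in the miner's revenue and subtracted from user $i$'s utility, so it cancels in the cartel's joint objective. Substituting $\rev^\onCG = \sum_j P_j^\onCG$ into \autoref{def:strong-collusion-proof}, the cartel's joint utility collapses to $\Es{v_{-i}}{\sum_{j \neq i} P_j^{\onCG} + v_i \cdot X_i^{\onCG}}$, i.e.\ the payments collected from the \emph{non-cartel} users plus the full \emph{welfare} (not virtual welfare) that user $i$ contributes when allocated. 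The honest profile $\sigma^\onCG$ allocates user $i$ only when $v_i \ge a_\mi$. The crux is that, from the cartel's perspective, this reserve is ``too high'': for $v_i \in (0, a_\mi)$ the honest auction excludes user $i$ and forgoes the positive welfare $v_i$, even though including user $i$ costs the cartel nothing.

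Concretely, first I would fix any regular distribution whose monopoly reserve satisfies $a_\mi > 0$ and which places positive mass below $a_\mi$ (for instance the uniform distribution on $[0,1]$, for which $\phi(v) = 2v-1$ and $a_\mi = 1/2$), and fix a value $v_i \in (0, a_\mi)$ for the colluding user. For the deviation, I would have the miner play the strategy $\widetilde{s}_\mi^\onCG$ that censors every bidder other than $i$ and posts reserve $0$ --- both legal moves in the miner-gatekeeper model, since censoring depends only on bidders' identities and the reserve is a valid advice --- while user $i$ bids $v_i$. Under this deviation user $i$ faces no competition and a zero reserve, so for any $k \ge 1$ user $i$ is trivially among the top $k$ bids and above the reserve, hence allocated at price $0$; every other user is censored and pays $0$. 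Thus the cartel's joint utility on the deviation equals $v_i > 0$ for \emph{every} realization of $v_{-i}$.

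Finally I would compare against the honest value. Since $v_i < a_\mi$, user $i$'s bid falls below the reserve in $\sigma^\onCG$, so $v_i \cdot X_i^\onCG = 0$ and the honest joint utility equals exactly the expected payment collected from the other $n-1$ users. The cleanest witness takes $n = 1$: then there are no other users, this expected payment is $0$, and the deviation strictly improves the joint utility from $0$ to $v_i > 0$, directly contradicting \autoref{def:strong-collusion-proof}. I expect the only genuine subtlety --- and the main point to get right --- to be the general-$n$ comparison, where censoring the other bidders forfeits the revenue they would have paid the miner honestly, so one must argue the welfare gain $v_i$ strictly exceeds this forgone expected revenue. This is why I would either present the argument at $n = 1$, or else choose a distribution under which the other bidders clear the reserve with low enough probability that their expected honest contribution stays below $v_i$. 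Either choice yields a regular $\TypeDistr$ and a value $v_i$ witnessing that $\sigma^\onCG$ is not strongly collusion proof.
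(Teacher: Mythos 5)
Your starting point is the same as the paper's: because all payments in C$(k+1)$PA are transferred to the miner, the transaction fee paid by the colluding user cancels out of the cartel's joint objective, so the cartel effectively values user $i$'s inclusion at his full value $v_i$ rather than his virtual value. Where you diverge is in the deviation you construct. You have the miner censor every other bidder and zero out the reserve, which makes user $i$'s win free but also forfeits the entire expected revenue $\Es{v_{-i}}{\sum_{j\ne i} P_j^{\onCG}}$ that the compliant miner would have collected from the non-colluding users. As you yourself note, this only closes cleanly at $n=1$ (where that forgone revenue is zero); for general $n$ you are left needing a distribution whose honest revenue from $n-1$ bidders is below some $v_i < a_\mi$, which you do not verify and which becomes increasingly delicate as $n$ grows. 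Since $n$ is a free parameter of the strategy profile in the proposition (and the impossibility results later need these claims along a whole sequence of $n$), the $n=1$ witness alone leaves a real gap.

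The paper's deviation avoids this entirely by \emph{not censoring anyone}: the miner continues to play $s_{\mi}^{\mathsf{comp}}(a_\mi)$, and the colluding user $i$ simply overbids to $\phi^{-1}(v_i) > a_\mi$. Treating the cartel as a ``grand auctioneer'' with an outside option worth $v_i$ (in the style of Bulow--Rothschild), the payment identity rewrites the cartel's objective as $v_i \cdot \1{i \text{ included}} + \sum_{j\ne i}\phi(v_j)\cdot\1{j \text{ included}}$, and bidding $\phi^{-1}(v_i)$ optimizes this pointwise over realizations of $v_{-i}$: user $i$ displaces another bidder only when that bidder's virtual value is below $v_i$, and fills an empty slot whenever one is available. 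Hence the cartel's utility weakly increases for every $v_{-i}$ and strictly increases with positive probability, for every $n$ and every regular $\TypeDistr$ with $\phi^{-1}(v_i) > a_\mi$ on a positive-measure set of $v_i$. If you want to salvage your argument for general $n$, the minimal repair is exactly this: drop the censorship, keep the miner compliant, and have user $i$ bid above the reserve (e.g.\ $a_\mi$ or $\phi^{-1}(v_i)$); then the revenue from the other bidders is essentially preserved, the internal payment cancels, and the cartel pockets $v_i$ on the positive-probability event that fewer than $k$ other bidders clear the reserve.
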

\begin{proof}
    We describe the optimal collusion strategy between the miner and a user $i$ with value $v_i$, and observe that this optimal collusion strategy typically requires the miner and user $i$ to deviate away from $\sigma^\onCG$.
    In fact, our argument parallels a result for revenue optimality provided in \citet{BulowR89} for settings where the auctioneer has a non-zero value for the good.

    To state this argument, we treat the cartel containing the miner and a user $i$ to be a single entity, who we refer to as the \emph{grand auctioneer}.
    Fix the value $v_i$ of user $i$.
    Imagine that the grand auctioneer can always allocate user $i$, provided that at lease one of the $k$ slots is not filled by other bidders.
    Then the grand auctioneer incurs a reward $v_i$ equal to the value of user $i$ if all $k$ slots in the block are not filled. 
    
    As argued above for on-chain miner simplicity, any strategy played by the grand auctioneer will result in a DSIC on-chain game for all users apart from $i$ (formally, when constraining all users to submitting exactly one bid, similar to the discussion surrounding \autoref{fn:OneBid}).
    Thus, if the grand auctioneer induces an allocation rule $(\widetilde{X}_j)_{j\ne i}$ for each bidder other than $i$, the expected revenue is given by (applying \autoref{thm:payment-identity})
    \[
    \Es{v_{-i} \sim \TypeDistr^{n-1}}{
       v_i \cdot \1{ {\textstyle \sum_{j\ne i}} \widetilde{X}_j(v_{-i}) \le k} 
       + \sum_{j \neq i} \phi(v_j) \cdot  \1{\widetilde{X}_j(v_{-i})} }
    \]

    The grand auctioneer can optimize the above pointwise (i.e., for all $(v_j)_{j\ne i}$) by
    ensuring that user $i$ is included 
    whenever the $k$\textsuperscript{th}-largest bid (among users other than $i$) has a virtual value less than $v_i$ (or, if there are less than $k$ bids).
    This can be achieved by the colluding user $i$ placing a bid equal to $\phi^{-1}(v_i)$ (and the miner continuing to play $s_{\mi}^{\mathsf{comp}}(a_\mi)$).
    Under the above mentioned strategy profile, whenever the $k$\textsuperscript{th} largest bid has a value less than $\phi^{-1}(v_i)$, it will have a virtual value smaller than $v_i$ (by monotonicity) and will be outbid by user $i$; thereby optimizing the grand auctioneer's (and the cartel's) objective. 
    Indeed, for typical regular distributions, there will be many values $v_i$ such that $v_i \ne \phi^{-1}(v_i) > a_\mi $, and hence the utility of the grand auctioneer strictly increases via collusion.
\end{proof}

\paragraph{Revisiting the notions of collusion resistance.}

The Cryptographic $(k+1)\textsuperscript{th}$-Price Auction satisfies all the desiderata highlighted in \autoref{sec:Definitions}, except for strong collusion proofness.
We now revisit our arguments on the appropriateness of strong collusion proofness from \autoref{sec:ColProof}, and argue that
C$(k+1)$PA may be intuitively considered fairly resistant to collusion.
We argue that this holds for two main reasons.

First, we consider a notion of \emph{weak} collusion proofness, which says that when the miner colludes with some user $i$, user $i$ will participate \emph{strategically} in the specific contract of collusion / profit sharing proposed by the miner.
Intuitively, such a situation is equivalent to the miner running a modified mechanism which treats user $i$ differently, and (since the colluding user best-responds, and the mechanism is still DSIC for other users) users play a BNE of this modified mechanism.
Hence, the miner cannot increase her expected revenue in this situation.
We defer a full discussion to \autoref{sec:AlternateCollusion}, where
we formally define weak collusion proofness (\autoref{def:WeakColProof}). 
We also consider another variant of this definition (which we call trustless collusion proofness) in which users other than $i$ can strategically respond to the collusion between the miner and $i$ (\autoref{def:TrustlessColProof}).
We show C-$(k+1)$PA satisfies both properties in \autoref{thm:C2PAColPos} and \autoref{thm:OffCIP-TCP}.

Second, we consider situations where a user and a miner engage in ``strong collusion'' as in \autoref{def:strong-collusion-proof} (for instance, when the miner herself also submits transactions), and observe that (when the miner and user collude optimally) such collusion does not impact other users' best response.
In particular, we showed in the proof of \autoref{thm:C2PACol} that when the miner has value $v_\mi^{(i)}$ for being included, she would want to place a bid equal to $\phi^{-1}(v_\mi^{(i)})$, but otherwise implement a normal C$(k+1)$PA.
We observe that users not in the coalition best respond by bidding truthfully.
Thus, even when the miner engages in non-truthful behaviour that increases her revenue, other users do not want to deviate from bidding truthfully in the TFM.

\begin{remark}
    Ethereum burns money in EIP-1559 not just to ensure simple participation for users, but also to maintain a deflationary pressure on the cryptocurrency.
    The C$(k+1)$PA can be supplemented with an (exogenous) burn too, while still allowing the user to set a reserve on top of the burn.
    All of our analysis of the C$(k+1)$PA continues to hold with very minor modifications.
    In particular, there exists an optimal reserve that the miner can set such that truthful bidding continues to remain on-chain simple and off-chain collusion proof.
    Getting the exogenously set reserve price so that the demand does not exceed the capacity of the block is necessary for EIP-1559 to satisfy on-chain user simplicity.
    On the other hand, C$(k+1)$PA satisfies on-chain simplicity and off-chain influence proofness irrespective of the burn fee.
    The burn fee has to now satisfy one less constraint---the capacity meeting the demand.
\end{remark}

\subsection{The Winner-Pays-Bid Mechanism} \label{sec:FPA}

We now consider the winner-pays-bid mechanism, the generalization of the first-price auction to a block capacity $k$, which was the status quo in Ethereum before EIP-1559 and still the TFM deployed in Bitcoin.

The block-building algorithm $\bBuild$ of the winner-pays-bid mechanism is identical to the C$(k+1)$PA, except the payment rule---instead of paying the maximum amongst the reserve or the $(k+1)\textsuperscript{th}$ largest bid, users included in the block pay their bid. Further, the winner-pays-bid mechanism can be implemented in both, the plaintext and the miner-gatekeeper cryptographic models.

\begin{definition}[Winner-Pays-Bid Mechanism with Reserve] \label{def:FPA}
    The block building process $\bBuild$ of the winner-pays-bid mechanism is as follows:
    \begin{itemize}
        \item The miner sets a reserve price $a_\mi$ as part of its advice, i.e, $\mathsf{Adv}_\mi = \R_{\ge 0}$.
        \item $\mathsf{Bid}_{\usr} = \R_{\ge 0}$. Bids can be submitted in the plaintext model, or the miner-gatekeeper model.
        \item All bids $b_i$ larger than the reserve price, up to $k$ bids are included in the block. In other words, for a set of bids $b$ and a reserve $a_\mi$,
        \begin{equation}
            \notag
            \begin{split}
                X_i(a_\mi; b_i, b_{-i}) =
                \begin{cases}
                    1 & \text{if } b_i \geq a_\mi \text{ and } |\{j \neq i | b_j > b_i\}| \leq k-1 \\
                    0 & \text{otherwise}
                \end{cases}
            \end{split}
        \end{equation}
        \item All included users pay their bid. For a set of bids $b$,
        \begin{equation}
            \notag
            \begin{split}
                P_i(a_\mi; b_i, b_{-i}) =
                \begin{cases}
                    b_i & \text{if } b_i \geq a_\mi \text{ and } |\{j \neq i | b_j > b_i\}| \leq k-1 \\
                    0 & \text{otherwise}
                \end{cases}                
            \end{split}
        \end{equation}
        \item All collected payments are transferred to the miner.
        \[
        \mathsf{Rev}(a_\mi; b_i, b_{-i}) = \sum_i P_i(a_\mi; b_i, b_{-i})
        \]
    \end{itemize}
\end{definition}

We observe two similar looking, yet different user BNEs satisfying different set of desiderata.\footnote{
    The existence of the two user Bayes-Nash equilibria ${\sigma}^\onCG_{\mathsf{val}}$ and ${\sigma}^\onCG_{0}$ in the winner-pays-bid mechanism that satisfy two different set of simplicity assumptions motivate describing simplicity as a property of the equilibrium as opposed to being a property of the mechanism (see \autoref{rem:on-chain-vs-previous}).
    While the two equilibria discussed here for the Winner-Pays-Bid mechanism are similar, see \autoref{sec:BOMB} for an example of an auction which induces two very different looking equilibria satisfying two very different set of simplicity definitions. 
}
In both the user BNEs, we will not be considering truthful bidding by users since it is a strictly dominated strategy (indeed, bidding truthfully always results in a zero utility for users irrespective of whether they are included or excluded).
We consider equilibria where users shade their bids.

Given a reserve $a_\mi$, \citet{chawlaH13} show the existence of a strategy profile $s_\usr^{\onCG, \TypeDistr, a_\mi}$ which is a user BNE.
Further, they prove that $s_{\usr,i}^{\onCG, \TypeDistr, a_\mi}(v_i)$ is unique for $v_i \geq a_\mi$.
We will discuss two variants of the above strategy that differ on the bidding behaviour of users with value less than the reserve (who will get allocated probability zero under best response).
Users with value $v_i \geq a_\mi$ bid according to the strategy $s_\usr^{\onCG, \TypeDistr, a_\mi}$ in both the equilibria.
In the first equilibrium, which we denote $\sigma^\onCG_{\mathsf{val}}$, users with value less than $a_\mi$ bid their value.
In the second equilibrium, which we denote $\sigma^\onCG_{0}$, users with value less than $a_\mi$ place a bid equal to zero.
Let the above mentioned user strategies in $\sigma^\onCG_{\mathsf{val}}$ and $\sigma^\onCG_{0}$ be denoted $s_{\usr, \mathsf{val}}^{\onCG, \TypeDistr, a_\mi}$ and $s_{\usr, 0}^{\onCG, \TypeDistr, a_\mi}$ respectively. 
In both equilibria, the miner plays the compliant strategy with advice $a_\mi$.

\begin{proposition}\label{thm:FPAValueEquilibriumSummary}
    Fix a regular distribution $\TypeDistr$ of user values, and let $a_\mi$ be the monopoly reserve of $\TypeDistr$.
    Then, the user BNE $\sigma^\onCG_{\mathsf{val}} = (s_{\mi}^{\mathsf{comp}}(a_\mi), \allowbreak s_{\usr, \mathsf{val}}^{\onCG, \TypeDistr, a_\mi}, \allowbreak \allowbreak \dots, \allowbreak s_{\usr, \mathsf{val}}^{\onCG, \TypeDistr, a_\mi})$ is not on-chain user simple and on-chain miner simple and also not strong collusion proof. However, $\sigma^\onCG_{\mathsf{val}}$ satisfies off-chain influence proofness.
\end{proposition}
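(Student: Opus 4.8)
The plan is to dispatch the three negative claims by exhibiting profitable deviations, and to prove off-chain influence proofness via revenue equivalence together with Myerson optimality, mirroring the off-chain argument for C$(k+1)$PA (\autoref{thm:C2PAPosSummary}). On-chain user simplicity fails immediately: in $\sigma^\onCG_{\mathsf{val}}$ each user with $v_i \ge a_\mi$ shades, bidding $s_{\usr,\mathsf{val}}^{\onCG,\TypeDistr,a_\mi}(v_i) < v_i$ rather than truthfully, so the requirement $s_{\usr,i}^\onCG(v_i)=v_i$ of \autoref{def:OnChainSimple} is violated (equivalently, the winner-pays-bid rule admits no dominant strategy).

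On-chain miner simplicity fails because the compliant advice $a_\mi$ is not a best response to the users' shaded bids. The key deviation is for the miner to lower the reserve, say to $s_\mi^{\mathsf{comp}}(0)$. Holding the users' bids fixed, lowering the reserve never ejects a high bid from the top-$k$ slots, but whenever fewer than $k$ bids exceed $a_\mi$ it fills the remaining capacity with below-reserve bids --- which in $\sigma^\onCG_{\mathsf{val}}$ are exactly the users with $v_i<a_\mi$ submitting their positive value. Because winner-pays-bid hands each winner's bid to the miner, these extra inclusions add strictly positive revenue on a positive-probability event (for $k=1$, the event that all $n$ draws fall below $a_\mi$, of probability $F_{\TypeDistr}(a_\mi)^n>0$), so the miner strictly profits.

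The very same deviation refutes strong collusion proofness. Fix any user $i$ and value $v_i$, let the miner switch to reserve $0$, and let user $i$ keep its bid $s_{\usr,\mathsf{val}}^{\onCG,\TypeDistr,a_\mi}(v_i)$. The newly included below-reserve bids never displace user $i$, so its allocation and payment --- and hence its utility --- are unchanged, while the miner's revenue strictly increases on the same event; thus the cartel's joint utility strictly rises, contradicting \autoref{def:strong-collusion-proof}. (Alternatively, the ``grand auctioneer'' argument of \autoref{thm:C2PACol} applies verbatim, with the colluding user bidding to win precisely when its value beats the competing bids.)

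For off-chain influence proofness I would verify the two conditions of \autoref{def:OffChainInfluenceProof}. Condition~1 holds since $\sigma^\onCG_{\mathsf{val}}$ is strictly on-chain (trivial $\Moff$) and is a user BNE by \citet{chawlaH13}. For condition~2, monotonicity of the equilibrium bidding map shows the interim allocation of $\sigma^\onCG_{\mathsf{val}}$ allocates to the top-$k$ users with value at least $a_\mi$, which is exactly the interim allocation of the $(k+1)$-th price auction with monopoly reserve; by revenue equivalence (\autoref{thm:RevenueEquivalence}) its expected miner revenue therefore equals the Myerson-optimal revenue (\autoref{thm:Regular-RevOpt}). Conversely, any off-chain mechanism and any user BNE induce, by the direct revelation principle (\autoref{thm:DirectRevelation}), a feasible BIC mechanism with the same interim rules. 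The crucial and hardest step is to observe that winner-pays-bid burns nothing, so \emph{all} money --- on-chain payments plus off-chain transfers --- accrues to the miner; hence $\rev^\offCG$ equals that BIC mechanism's total expected payment, which by \autoref{thm:revenue-equals-virtual-welfare} and \autoref{thm:Regular-RevOpt} is at most the Myerson-optimal revenue. No off-chain deviation can then beat $\sigma^\onCG_{\mathsf{val}}$. This no-burning observation is precisely what distinguishes winner-pays-bid from EIP-1559 (\autoref{thm:EIPOffChain}), where burning pushes on-chain revenue strictly below the optimum.
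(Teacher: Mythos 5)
Your proposal is correct and follows essentially the same route as the paper: non-truthful shading refutes user simplicity, the reserve-to-zero deviation (profitable on the positive-probability event that all values fall below $a_\mi$, since below-reserve users bid their values and winners pay their bids) refutes both miner simplicity and strong collusion proofness, and off-chain influence proofness follows from revenue equivalence with the truth-telling $(k+1)$st-price BNE plus Myerson optimality over all user BNEs. The only quibble is your claim that the colluding user's ``allocation and payment'' are unchanged under the deviation --- for a user with $v_i<a_\mi$ both can change, though its utility stays at zero, which is all the argument needs and is exactly how the paper phrases it.
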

\begin{proof}
    Users are not bidding truthfully, and thus, $\sigma^\onCG_{\mathsf{val}}$ is not on-chain user simple.

    We argue off-chain influence proofness next.
    Similar to off-chain influence in C2PA, we will argue that the revenue optimal equilibrium is already induced on-chain.
    For the monopoly reserve $a_\mi$, observe that the highest $k$ bids, conditional on being larger than $a_\mi$ get allocated.
    Since all users play the same strategy $s_{\usr, \mathsf{val}}^{\onCG, \TypeDistr, a_\mi}$, the users with the $k$ largest values (conditional on being larger than the reserve $a_\mi$) are allocated.
    Thus, $\sigma^\onCG_{\mathsf{val}}$ has the same interim allocation rule as the truth-telling BNE in the $(k+1)$\textsuperscript{th}-price auction with reserve $a_\mi$ and by revenue equivalence (\autoref{thm:RevenueEquivalence}), has an identical expected revenue. 
    But, from \autoref{thm:Regular-RevOpt}, the truth-telling user BNE with reserve $a_\mi$ is revenue optimal over all possible user Bayes-Nash equilibria.
    Thus, $\sigma^\onCG_{\mathsf{val}}$ is a revenue optimal user BNE, and the miner will not want to induce a different equilibria via an off-chain mechanism.

    Next, we show the lack of strong collusion proofness and the lack of on-chain miner simplicity.
    In fact, these two statements will follow from the same arguments.
    Consider the miner deviating and setting the reserve to zero (but continues to not fabricate or censor bids).
    At a high-level, this can potentially include a user with value $v_i < a_\mi$.
    In detail, with a non-zero probability, all users are going to have a value less than $a_\mi$.
    In such scenarios, the miner gets a non-zero revenue when setting the reserve to zero, but a zero revenue when setting the reserve to $a_\mi$.
    However, since winners pay their bids, all users with values larger than the reserve continue to pay their bids.
    Thus, the miner gets a strictly larger revenue by setting the reserve to zero, showing that $\sigma^\onCG_{\mathsf{val}}$ is not on-chain miner simple.
    Moreover, a colluding user $i$ with a value $v_i < a_\mi$ always gets zero utility (either by paying $v_i$ and getting included, or by being excluded).
    Thus, by increasing her revenue, the miner also increases the joint utility of the cartel.
    Thus, $\sigma^\onCG_{\mathsf{val}}$ is also not strong collusion proof.
\end{proof}

\begin{proposition}\label{thm:FPAZeroEquilibriumSummary}
    Fix a regular distribution $\TypeDistr$ of user values, and let $a_\mi$ be the monopoly reserve of $\TypeDistr$.
    Then, the user BNE $\sigma^\onCG_{0} = (s_{\mi}^{\mathsf{comp}}(a_\mi), s_{\usr, 0}^{\onCG, \TypeDistr, a_\mi}, \allowbreak \dots, \allowbreak s_{\usr, 0}^{\onCG, \TypeDistr, a_\mi})$ is not on-chain user simple and not strong collusion proof. However, $\sigma^\onCG_{0}$ satisfies on-chain miner simplicity and off-chain influence proofness.
\end{proposition}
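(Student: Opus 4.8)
The plan is to establish the four claims separately, reusing the arguments behind \autoref{thm:FPAValueEquilibriumSummary} where they transfer and supplying a genuinely new argument for the one property that differs, namely on-chain miner simplicity. Two claims are immediate. First, $\sigma^\onCG_0$ fails on-chain user simplicity because users shade, bidding $s_\usr^{\onCG,\TypeDistr,a_\mi}(v_i) \ne v_i$, exactly as before. Second, $\sigma^\onCG_0$ satisfies off-chain influence proofness by the same revenue-equivalence argument: since every user with $v_i \ge a_\mi$ plays the common monotone strategy $s_\usr^{\onCG,\TypeDistr,a_\mi}$ while every user with $v_i < a_\mi$ bids $0 < a_\mi$, the winners under $\sigma^\onCG_0$ are exactly the (at most $k$) highest-value users whose value exceeds $a_\mi$. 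Hence $\sigma^\onCG_0$ induces the same interim allocation rule as the truthful equilibrium of the $(k+1)$-th price auction with monopoly reserve $a_\mi$, so by revenue equivalence (\autoref{thm:RevenueEquivalence}) it collects the same expected revenue, which is optimal over all user BNE by \autoref{thm:Regular-RevOpt}; since any off-chain mechanism induces some user BNE whose revenue is at most this optimum, no off-chain deviation helps.

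The crux, and the one place $\sigma^\onCG_0$ differs from $\sigma^\onCG_{\mathsf{val}}$, is on-chain miner simplicity. The key structural observation is that, under $s_{\usr,0}^{\onCG,\TypeDistr,a_\mi}$, every submitted bid is either at least $a_\mi$ or exactly $0$, with nothing in the open interval $(0,a_\mi)$. Fixing the realized bid profile, I would bound the revenue of an arbitrary miner deviation (reserve $a$, censoring, fabrication): because winners pay their own bids, any strategy earns at most the sum of the $k$ largest real bids, as fabricated bids contribute their payment but cost the miner the same amount (net zero) and can only displace a real bid, while censoring a winning bid merely promotes a strictly lower-paying one. The compliant strategy with reserve $a_\mi$ already seats the top-$k$ real bids that are $\ge a_\mi$ (all positive bids, up to capacity) and attains this bound. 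Lowering the reserve below $a_\mi$ only makes the $0$-bids eligible, but they pay $0$ and sit below every positive bid, so they add nothing; raising the reserve only excludes positive bids and loses their payments. This is precisely why the deviation that broke $\sigma^\onCG_{\mathsf{val}}$ (dropping the reserve to capture positive sub-reserve bids) fails here: sub-reserve users now bid $0$. Hence $s_{\mi}^{\mathsf{comp}}(a_\mi)$ is a revenue-maximizing best response, and the conclusion holds a fortiori in the miner-gatekeeper model, where the miner's options are strictly more limited.

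For the failure of strong collusion proofness I would exhibit a profitable joint deviation for the cartel of the miner and a single user $i$ with value $v_i \ge a_\mi$, holding the miner compliant and letting user $i$ \emph{un-shade} by bidding $v_i$ instead of $s_\usr^{\onCG,\TypeDistr,a_\mi}(v_i) < v_i$. Treating the cartel as a grand auctioneer, user $i$'s payment is an internal transfer and cancels, so the joint utility equals $\sum_{j \neq i} P_j^\onCG + v_i \cdot X_i^\onCG$; the marginal cost of seating $i$ is the displaced bidder's payment $b^{(k)}_{-i}$ (the $k$-th largest bid among the others), so the cartel optimum seats $i$ exactly when $v_i > b^{(k)}_{-i}$. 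Bidding $v_i$ realizes this rule, whereas honest shaded play seats $i$ only when $s_\usr^{\onCG,\TypeDistr,a_\mi}(v_i) > b^{(k)}_{-i}$, a strictly higher threshold. On the event $s_\usr^{\onCG,\TypeDistr,a_\mi}(v_i) < b^{(k)}_{-i} < v_i$ the deviation seats $i$ where honest play does not, changing the joint utility by $v_i - b^{(k)}_{-i} > 0$ while leaving user $i$'s own utility at $0$; since shading is strict for $v_i > a_\mi$ and the marginal bid has positive density in the resulting gap, this event has positive probability (for $n \ge 2$ and regular $\TypeDistr$ with mass above $a_\mi$), so the joint utility strictly increases.

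I expect the main obstacle to be the on-chain miner simplicity step: unlike the other three claims it does not follow from the preceding proposition, and it requires verifying that no combination of reserve manipulation, censoring, and fabrication beats the compliant strategy, which only becomes clean once one exploits the dichotomy that all bids are either $\ge a_\mi$ or $0$. A secondary subtlety is pinning down the exact support/regularity hypotheses guaranteeing a strictly positive collusion gain, which I would handle by restricting attention to the positive-probability gap event identified above.
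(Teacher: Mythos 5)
Your proposal is correct, and for three of the four claims it matches the paper's proof: the failure of on-chain user simplicity and the off-chain influence proofness argument are carried over verbatim from \autoref{thm:FPAValueEquilibriumSummary}, and your on-chain miner simplicity argument rests on exactly the observation the paper uses (every submitted bid is either $0$ or at least $a_\mi$, so lowering the reserve only admits zero-paying bids, while raising it, censoring, or fabricating cannot help in a pay-your-bid auction) --- you merely spell out the censoring/fabrication cases that the paper leaves implicit. Where you genuinely diverge is the failure of strong collusion proofness. The paper's cartel consists of the miner and a \emph{low}-value user $v_i < a_\mi$: the miner drops the reserve to zero and the user bids his value, which gains $v_i > 0$ on the positive-probability event that all other users are below the reserve (this is exactly the place where $\sigma^\onCG_0$ and $\sigma^\onCG_{\mathsf{val}}$ differ --- under $\sigma^\onCG_0$ the low-value user must also change his bid from $0$ to $v_i$, whereas under $\sigma^\onCG_{\mathsf{val}}$ the miner's reserve cut alone suffices). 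Your cartel instead uses a \emph{high}-value user who un-shades to $v_i$ while the miner stays compliant, gaining $v_i - b^{(k)}_{-i} > 0$ on the event $s_\usr^{\onCG,\TypeDistr,a_\mi}(v_i) < b^{(k)}_{-i} < v_i$. Both deviations are valid; the paper's is somewhat more robust in that its positive-probability event needs only that $\TypeDistr$ places mass on $(0,a_\mi)$, whereas yours additionally needs $n > k$ (otherwise user $i$ is always seated and un-shading changes nothing) and a positive-density argument for the marginal competing bid landing in the shading gap --- conditions you correctly flag but that require the slightly more delicate bookkeeping you anticipate.
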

\begin{proof}
    The arguments regarding lack of on-chain user simplicity and off-chain influence proofness are identical to $\sigma^\onCG_{\mathsf{val}}$ (\autoref{thm:FPAValueEquilibriumSummary}).

    Proof of lack of strong collusion non-proofness is also similar, except the miner sets a reserve zero and the colluding user $i$ with a value $v_i < a_\mi$ bids his value.
    Following the above deviation allocates user $i$, for instance, when all other users have a value less than the reserve (which happens with a non-zero probability).
    Instead of getting a joint utility zero, the cartel gets a joint utility $v_i$.
    In other scenarios, the above mentioned deviation does not decrease the joint utility since included users pay their bids.
    User $i$ may get included, increasing the joint utility, or get excluded, which does not decrease the cartel's utility.

    Finally, we observe that $\sigma^\onCG_{0}$ is on-chain miner simple.
    By decreasing the reserve, the miner can potentially include users who would not have been included at a reserve $a_\mi$.
    However, all such users would be bidding $0$, and thus, does not increase the miner's revenue.
\end{proof}

\begin{remark}
    Interestingly, the notions of collusion resistance proposed in \citet{Roughgarden20} (off-chain agreement proof, OCA-proof) and \citet{ChungS23} (side contract proofness, SCP) both classify the winner-pays-bid mechanism as collusion resistant.
    However, we show above that the winner-pays-bid mechanism is not strongly collusion proof (i.e., our adaptation of OCA proofness and SCP).
    This difference comes from the fact that OCA-proofness and SCP both assume that users play truth-telling strategies.
    We view this as a further reason to consider simplicity properties of \emph{particular equilibria}, rather than the mechanism itself, as discussed in \autoref{rem:on-chain-vs-previous}.
    (As an aside, we note that by our \autoref{thm:OffCIP-TCP} in \autoref{sec:AlternateCollusion}, the winner-pays-bid mechanism satisfies a weaker form of collusion resistance which we call trustless collusion proofness (\autoref{def:TrustlessColProof}), albeit for a different reason than suggested by prior frameworks).
\end{remark}

\begin{remark}
    Note that the optimal shading strategy for users $s_\usr^{\onCG, \TypeDistr, a_\mi}(v_i)$ is dependent on the number of users participating in the mechanism.
    For instance, if $k=1$, $\TypeDistr$ equals $U[0, 1]$ and the reserve $a_\mi$ is set to zero, $s_\usr^{\onCG, \TypeDistr, a_\mi}(v_i) = \frac{(n-1)}{n} \cdot v_i$ \citep{Myerson81, chawlaH13} where $n$ is the number of participants.
    However, playing the optimal strategy might be unrealistic since users will not precisely know the number of participants in a decentralized environment.\footnote{
        On the other hand, users might have a belief on the number of participants in the auction. Informally speaking, users can thus shade their bids in best response to the distribution $\TypeDistr$ of other users' values and $\mathcal{N}$, the believed distribution on the number of participants.}
\end{remark}

In \autoref{sec:OtherEg}, we discuss various other standard mechanisms like the plaintext $(k+1)\textsuperscript{th}$-price auction and the posted-price mechanism.
We also study a variety of auctions like the bonus-on-matching bids auctions and the squared revenue second price auction so that the equilibria in our examples satisfy all possible combinations of our core definitions---on-chain user simplicity, on-chain miner simplicity and off-chain collusion proofness.

In all the mechanisms considered in \autoref{sec:specific-cases} and \autoref{sec:OtherEg}, we either assume the plaintext model with no cryptography or black-box the cryptographic tools and the associated implementation details in the miner-gatekeeper model.
In \autoref{sec:DRA}, we consider implementing the miner-gatekeeper model via the deferred-revelation auction \citep{FerreiraW20, ChitraFK23} which only makes use of commitment schemes and digital signatures.

\section{Impossibility Results} \label{sec:impossibility}

In this section, we present our main impossibility result: except for trivial cases, no TFM has an equilibrium which is simultaneously on-chain simple, off-chain influence proof, and collusion proof.
We note that, unlike the impossibility result proved by \citet{ChungS23}, who prove that no TFM can satisfy UIC and $1$-SCP for a finite block-capacity, 
our impossibility result holds even in the case of an infinite-supply.

At a high level, we first show that no equilibrium that satisfies on-chain simplicity and off-chain influence proofness can award the miner a non-constant revenue.
Second, we show that a miner awarded constant revenue on-chain can achieve a larger revenue by running a separate auction off-chain.

\subsection{On-Chain Simplicity and Strong Collusion Proofness Implies Constant Expected Revenue}

We begin by showing an impossibility result similar to \citet{ChungS23}, who show that any mechanism satisfying user incentive compatibility and side contract proofness will have to award the miner a zero revenue.
Analogously, we show that the expected revenue of a miner (expectation taken over the randomness of the mechanism and values of users) must be independent of the number of users participating in the mechanism if the mechanism is on-chain simple and collusion proof. 

\begin{lemma} \label{thm:expected-constant-rev}
    For a distribution $\TypeDistr$ of user values, consider a block-building process $\bBuild$ and a sequence of user Bayes-Nash equilibria $\Big((s_\mi^\onCG, s_{\usr, 1}^{\onCG, n}, \dots, s_{\usr, n}^{\onCG, n})\Big)_{n \in \N \cup \{0\}}$ in the on-chain game $\onCG$, where the strategy profile $(s_\mi^\onCG, s_{\usr, 1}^{\onCG, n}, \dots, s_{\usr, n}^{\onCG, n})$ corresponds to the equilibrium induced on-chain when $n$ users participate.
    Suppose for each $n \in \N \cup \{0\}$, $(s_\mi^\onCG, s_{\usr, 1}^{\onCG, n}, \dots, s_{\usr, n}^{\onCG, n})$ is on-chain (user and miner) simple, and strongly collusion-proof. Then:
    \begin{enumerate}
        \item In expectation over users' values and the randomness of the TFM, the miner's expected revenue is independent of the number of users participating in the TFM.
        \item The expected miner revenue is invariant to the number of users even after conditioning on the bid of some single user.
    \end{enumerate}
\end{lemma}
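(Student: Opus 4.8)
The plan is to prove the stronger statement, item~2, and recover item~1 by integrating over the conditioned user's value. Write $a_\mi$ for the fixed advice of the compliant miner strategy $s_\mi^\onCG$, and set $R_n := \Es{v\sim\TypeDistr^n}{\rev^\onCG(s_\mi^\onCG; s_\usr^\onCG(v))}$ for the honest $n$-user revenue. Since at the fixed advice $a_\mi$ the induced mechanism is DSIC for users (on-chain user simplicity), the payment identity (\autoref{thm:payment-identity}) applies to the users' fees throughout. I would establish item~1 as the two-sided bound $R_{n-1}\le R_n\le R_{n-1}$. The lower bound is immediate from on-chain miner simplicity: in the $n$-user game the miner may deviate to the strategy $(a_\mi,\,H\setminus\{n\},\,\emptyset)$ that censors user $n$ and otherwise plays compliantly; this feeds only the $n-1$ remaining truthful bids into $\bBuild$, with no fabricated payments to subtract, so its expected revenue is exactly $R_{n-1}$. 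As the compliant strategy is a best response, $R_n\ge R_{n-1}$.

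The upper bound $R_n\le R_{n-1}$ is the crux, and is where strong collusion proofness enters. I would form the cartel of the miner together with user $n$; by strong collusion proofness (\autoref{def:strong-collusion-proof}) the honest profile maximizes the cartel's joint utility over all deviations. On one hand, on-chain miner simplicity applied to a single fabricated bid drawn from $\TypeDistr$ (indistinguishable to $\bBuild$ from a genuine bidder in the miner-gatekeeper model, and equally available in the plaintext model) gives $R_n\le R_{n-1}+\bar P^{(n)}$, where $\bar P^{(n)}$ is the expected per-user transaction fee. On the other hand, the point I would argue is that any fee extracted from user $n$ that is not burned is recoverable by the cartel, so on the margin user $n$ can contribute no net revenue: using the grand-auctioneer (Bulow--Roberts) accounting already employed in the proof of \autoref{thm:C2PACol}, specialized to the infinite-supply regime where each slot is free, collusion-optimality of the honest profile forces the non-burned part of $\bar P^{(n)}$ out of the net revenue. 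Combining the two inequalities yields $R_n=R_{n-1}$.

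For item~2 I would repeat the argument conditioned on a distinguished user's value, which is legitimate precisely because \autoref{def:strong-collusion-proof} is stated per value $v_i$ rather than only ex ante. Fixing user $1$'s value to $v$ and colluding the miner with user $1$, the censoring deviation yields the conditional monotonicity $\hat\rho_n(v)\ge \hat\rho_{n-1}(v)$, where $\hat\rho_m(v)$ denotes the expected revenue with user $1$ pinned at $v$ and $m-1$ further truthful users; the interim-utility terms of the pinned user cancel because, in the infinite-supply regime, its own allocation and payment do not depend on the number of other participants. Integrating item~1 over $v$ gives $\Es{v}{\hat\rho_n(v)}=R_n=R_{n-1}=\Es{v}{\hat\rho_{n-1}(v)}$, and a nonnegative integrand with vanishing integral is zero almost everywhere, so $\hat\rho_n(v)=\hat\rho_{n-1}(v)$ — exactly invariance of the revenue after conditioning on a single user's bid.

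The main obstacle is the upper bound. For a general block-building process the miner's revenue $\rev$ is decoupled from the users' transaction fees (money may be burned or created), so one cannot simply identify revenue with virtual welfare, and the fabrication deviation is loose by exactly the fabricated bid's fee. The delicate step is to certify, via strong collusion proofness, that every unit of fee not burned could be recaptured by a colluding user and hence cannot be genuine marginal revenue; making this quantitative — and verifying that the pinned user's interim utility really does cancel once allocation rules are permitted to depend on the entire bid profile rather than on a single coordinate — is where the real work lies.
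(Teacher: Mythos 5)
Your lower bound is fine: censoring user $n$ is a legitimate on-chain deviation whose expected revenue is exactly $R_{n-1}$ (the remaining $n-1$ bids are truthful by on-chain user simplicity), so on-chain miner simplicity gives $R_n \ge R_{n-1}$. But the upper bound --- which you yourself identify as the crux --- is not proven. Fabricating a bid drawn from $\TypeDistr$ gives only $R_n \le R_{n-1} + \bar P^{(n)}$, and your proposed removal of the slack $\bar P^{(n)}$ (``collusion-optimality of the honest profile forces the non-burned part of $\bar P^{(n)}$ out of the net revenue'') is a restatement of the conclusion you want, not an argument: strong collusion proofness is an inequality about joint cartel utility, and nothing in your sketch converts it into a bound on the fabricated bid's fee. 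Worse, both here and in your item~2 argument (``each slot is free,'' ``its own allocation and payment do not depend on the number of other participants'') you invoke an unlimited-supply assumption that the lemma does not make: the statement is for an \emph{arbitrary} block-building process $\bBuild$, with arbitrary burning or minting, and the paper's point is precisely that the impossibility holds even in (not only in) the unlimited-supply case. The conditional monotonicity $\hat\rho_n(v)\ge\hat\rho_{n-1}(v)$ is likewise unestablished: censoring the pinned user yields the \emph{unconditional} revenue $R_{n-1}$ plus a leftover interim-utility term $u_1(v)$, not $\hat\rho_{n-1}(v)$, so the quantities in your inequality do not even match up.

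The idea you are missing, and which the paper's proof turns on, is to apply Myerson's payment identity \emph{twice with the same allocation rule}. By on-chain user simplicity, the interim payment satisfies $p_i^{(n)}(v_i) = \int_0^{v_i} z\, x_i^{(n)\prime}(z)\,dz + c$. Writing the strong-collusion-proofness inequality with the miner's strategy held fixed and only user $i$'s bid varying, and defining the cartel's net expenditure $\mathsf{exp}^{(n)}(v_i) = p_i^{(n)}(v_i) - \mathsf{rev}^{(n)}(v_i)$, that inequality is \emph{literally} the BIC condition for the pair $\big(x_i^{(n)}, \mathsf{exp}^{(n)}\big)$, so the payment identity applies again: $\mathsf{exp}^{(n)}(v_i) = \int_0^{v_i} z\, x_i^{(n)\prime}(z)\,dz + \widetilde c$. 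Subtracting shows the interim revenue $\mathsf{rev}^{(n)}(v_i)$ is \emph{constant in $v_i$}, which is the substance of item~2 and the step your plan never reaches. With constancy in hand, the cross-$n$ comparison is done at the single point $v_i = 0$: a cartel with a value-$0$ user (who pays nothing by individual rationality) gives $\mathsf{rev}^{(n)}(0) \ge \mathsf{rev}^{(n-1)}(0)$ via censoring, and fabricating a \emph{zero} bid in the $(n-1)$-user game --- which again pays nothing --- gives $\mathsf{rev}^{(n)}(0) \le \mathsf{rev}^{(n-1)}(0)$ via miner simplicity. Because the comparison bid is $0$ rather than a draw from $\TypeDistr$, no payment slack like $\bar P^{(n)}$ ever appears; this is exactly the term your approach cannot eliminate.
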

\begin{proof}
    Fix the distribution $\TypeDistr$ of user values.
    Consider the strategy profile $\sigma^{\onCG, n} = (s_\mi^\onCG, s_{\usr, 1}^{\onCG, n}, \dots, s_{\usr, n}^{\onCG, n})$.
    Let $x_i^{(n)}$ and $p_i^{(n)}$ be the interim allocation and payment rules (\autoref{def:Interim}) induced by $\sigma^{\onCG, n}$ for user $i$.

    First, for each user $i$, we prove that the miner revenue, in expectation over the values of all other users, is independent of user $i$'s value.
    By on-chain user simplicity, we have that the induced mechanism is DSIC (and therefore BIC) and thus,
    \begin{equation} \label{eqn:user-bic-payment-id}
    p_i^{(n)}(v_i) = \int_0^{v_i} z x_i^{(n)\prime}(z) \,dz + c
    \end{equation}
    for some constant $c$ (by the payment identity, \autoref{thm:payment-identity}).

    Let $\mathsf{rev}^{(n)}(v_i) = \Es{v_{-i} \sim \TypeDistr^{n-1}}{\rev^{\onCG}(s_\mi^\onCG; v_i, v_{-i})}$ be the miner's ``interim'' revenue when user $i$ bids $v_i$. 
    We now apply collusion-proofness on the cartel containing the miner and the user $i$.
    Fixing the miner's strategy $s_\mi^\onCG$ but varying user $i$'s bid $b_i$, strong collusion-proofness means that the cartel optimizes its joint utility when user $i$ bids his value $v_i$:
    \begin{equation} \label{eqn:CartelUtil}
        v_i \cdot x_i^{(n)}(v_i) - p_i^{(n)}(v_i) + \mathsf{rev}^{(n)}(v_i)
    \ge
    v_i \cdot x_i^{(n)}(\widetilde{v}_i) - p_i^{(n)}(\widetilde{v}_i) + \mathsf{rev}^{(n)}(\widetilde{v}_i)
    \end{equation}
    for all possible bids $\widetilde{v}_i$ that user $i$ can submit.
    Define the expected net \emph{expenditure} of the cartel, $\mathsf{exp}^{(n)}(v_i) =  p_i^{(n)}(v_i) - \mathsf{rev}^{(n)}(v_i)$, to be the difference in money held by the cartel before and after the mechanism.
    Then, \autoref{eqn:CartelUtil} becomes
    \begin{equation} \label{eqn:Cartel-BIC-like-utility}
    v_i \cdot x_i^{(n)}(v_i) - \mathsf{exp}^{(n)}(v_i)
    \ge
    v_i \cdot x_i^{(n)}(\widetilde{v}_i) - \mathsf{exp}^{(n)}(\widetilde{v}_i).
    \end{equation}

    Now, the main observation towards our proof is the following. 
    If we pretend that user $i$ pays all of the expenditure $\mathsf{exp}^{(n)}$, then \emph{\autoref{eqn:Cartel-BIC-like-utility}  is precisely the definition of BIC}.
    Thus, the payment identity (\autoref{thm:payment-identity}) can be applied to the expenditure $\mathsf{exp}^{(n)}$ to get
    \[
    \mathsf{exp}^{(n)}(v_i) = \int_0^{v_i} z x_i^{(n)\prime}(z) \,dz + \widetilde{c}.
    \]
    for a constant $\widetilde{c}$, maybe different from $c$.
    Thus, applying \autoref{eqn:user-bic-payment-id} and the definition of 
     $\mathsf{exp}^{(n)}$, we get
    \begin{equation}
        \notag
        \begin{split}
            p_i^{(n)}(v_i) - C = \mathsf{exp}^{(n)}(v_i)
            = p_i^{(n)}(v_i) - \mathsf{rev}^{(n)}(v_i),
        \end{split}
    \end{equation}
    and thus, $\mathsf{rev}^{(n)}(v_i) = C$ for some constant $C$ and for all $v_i \in \R_{\ge 0}$.

    Next, we prove that the $\mathsf{rev}^{(n)}(0) = \mathsf{rev}^{(n-1)}(0)$.
    Since $\mathsf{rev}^{(n)}(v_i) = \mathsf{rev}^{(n)}(0)$ and $\mathsf{rev}^{(n-1)}(v_i) = \mathsf{rev}^{(n-1)}(0)$, this will prove the second claim in \autoref{thm:expected-constant-rev}.
    Intuitively, this follows due to on-chain miner simplicity (and strong-collusion proofness), since the miner (and the cartel, respectively) has no incentive to add a zero bid (or drop a zero bid, respectively).
    We now go through this argument fully.
    
    To begin, we argue $\mathsf{rev}^{(n)}(0) \ge \mathsf{rev}^{(n-1)}(0)$.
    Suppose that the miner is colluding with a user $i$ with value $0$.
    The cartel's expected joint utility when user $i$ bids truthfully equals $\mathsf{rev}^{(n)}(0) + 0 \times  x_i^{(n)}(0) - p_i^{(n)}(0) = \mathsf{rev}^{(n)}(0) - p_i^{(n)}(0)$.
    By individual rationality that follows due to on-chain user simplicity, user $i$ will make no payments by bidding zero. Thus, the joint utility of the cartel equals $\mathsf{rev}^{(n)}(0)$.
    Consider the deviation from $\sigma^{\onCG, n}$ where the miner censors user $i$.
    User $i$ continues to contribute nothing to the joint utility, while the miner's revenue becomes $\Es{v_{-i} \sim \TypeDistr^{n-1}}{\rev(s_\mi^\onCG; v_{-i})}$.
    Thus, by collusion proofness, we have
    \[
    \mathsf{rev}^{(n)}(0) \ge \Es{v_{-i} \sim \TypeDistr^{n-1}}{\rev(s_\mi^\onCG; v_{-i})}.
    \]
    On the other hand, we have proved $\mathsf{rev}^{(n-1)}(v_i)$ is a constant, and \[\Es{v_i \sim \TypeDistr}{\mathsf{rev}^{(n-1)}(v_i)} = \Es{v_{-i} \sim \TypeDistr^{n-1}}{\rev(s_\mi^\onCG; v_{-i})},\]
    by definition. Thus, $\mathsf{rev}^{(n-1)}(0) = \Es{v_{-i} \sim \TypeDistr^{n-1}}{\rev(s_\mi^\onCG; v_{-i})}$.
    Combining the above arguments, we have $\mathsf{rev}^{(n)}(0) \ge \mathsf{rev}^{(n-1)}(0)$.

    It remains to prove $\mathsf{rev}^{(n)}(0) \le \mathsf{rev}^{(n-1)}(0)$.
    As argued above $\mathsf{rev}^{(n-1)}(0) = \Es{v_{-i} \sim \TypeDistr^{n-1}}{\rev(s_\mi^\onCG; v_{-i})}$.
    Consider a miner considering fabricating a bid $\tilde{b} = 0$ with $n-1$ participating users.
    Prior to fabricating the bid, the miner received an expected revenue equal to $\Es{v_{-i} \sim \TypeDistr^{n-1}}{\rev(s_\mi^\onCG; v_{-i})}$.
    By fabricating $\tilde{b}$, the miner makes no payments (by on-chain user simplicity) but receives an expected revenue equal to $\mathsf{rev}^{(n)}(0)$.
    By on-chain miner simplicity,
    \[
    \mathsf{rev}^{(n)}(0) \le \Es{v_{-i} \sim \TypeDistr^{n-1}}{\rev(s_\mi^\onCG; v_{-i})} = \mathsf{rev}^{(n-1)}(0).
    \]

    The above concludes the proof that 
    $\mathsf{rev}^{(n)}(0)  = \mathsf{rev}^{(n-1)}(0)$, and hence the proof
    of the second claim in \autoref{thm:expected-constant-rev}. From here, the first claim is straightforward.
    \[
    \mathsf{rev}^{(n)}(v_i) = \mathsf{rev}^{(n-1)}(v_i)
    \]
    for all $v_i$, and taking expectation over $v_i$ on both sides, the proof follows.
\end{proof}

As a quick sanity check, remember that the strategy profile $(s_{\mi}^{\mathsf{comp}}(\emptyset), v_1, \dots, v_n)$ in EIP-1559 with unlimited supply satisfies on-chain simplicity and strong collusion proofness (\autoref{thm:EIP1559Rou}). As \autoref{thm:expected-constant-rev} claims, the miner gets a constant block reward, independent of the number of users.

\begin{remark}
    To be precise, \autoref{thm:expected-constant-rev} claims that, in expectation over users' bids, the miner's revenue is independent over the number of users.
    We believe %
    that the miner revenue must be independent of the bids submitted to the on-chain game too.
    More formally, we conjecture that if a sequence of user Bayes-Nash equilibria $\Big( \sigma^{\onCG, n}\Big)_{n \in \N \cup \{0\}}$ satisfy on-chain simplicity and collusion proofness, then the miner revenue must be a constant with probability 1 over the bids placed by users. %
    We leave this conjecture as an open problem for future work.
\end{remark}

\subsection{On-Chain Simplicity, Strong Collusion Proofness, and Off-Chain Influence Proofness Implies No Non-Trivial Mechanism}

Next, building on \autoref{thm:expected-constant-rev}, we show that if a TFM satisfies previously-considered desiderata as well as off-chain influence proofness, then the mechanism must be trivial (namely, the mechanism cannot award allocate any user with positive probability).
Intuitively, this follows because off-chain influence proofness claims that the miner is already receiving the optimal revenue over all user Bash-Nash equilibria that can be induced in the on-chain game $\onCG$ over the block-building process $\bBuild$, and this clashes with the claim of \autoref{thm:expected-constant-rev} that claims the miner revenue is independent of the number of users participating in the TFM.

Our proof proceeds in two steps.
First, we show that in a user BNE satisfying on-chain simplicity, off-chain influence proofness, and strong collusion proofness, all users must get zero utility.
Second, we prove that any mechanism where users are always awarded a zero utility to users must be trivial in in a formal sense.

\begin{lemma}
    For a given distribution $\TypeDistr$ of user values, and a sequence of on-chain simple and collusion proof user Bayes-Nash equilibria $\Big((s_\mi^\onCG, s_{\usr, 1}^{\onCG, n}, \dots, s_{\usr, n}^{\onCG, n})\Big)_{n \in \N \cup \{0\}}$ in the on-chain game $\onCG$ 
    Then, for all $n \in \N$ and all users $i$, the expected utility of user $i$ equals zero. Formally,
    \[
    \Es{v \sim \TypeDistr^n}{v_i \cdot x_i^{(n)}(v_i) - p_i^{(n)}(v_i)} = 0.
    \]
\end{lemma}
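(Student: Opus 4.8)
The plan is to combine the constant-revenue conclusion of \autoref{thm:expected-constant-rev} with off-chain influence proofness, which I read as part of the hypotheses here: the bare combination of on-chain simplicity and strong collusion proofness is \emph{not} enough, since unlimited-supply EIP-1559 satisfies both yet leaves users with strictly positive utility (cf.\ \autoref{thm:EIP1559Rou}), so off-chain influence proofness must be what rules this out.

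First I would reuse the payment identity already extracted in the proof of \autoref{thm:expected-constant-rev}. On-chain user simplicity makes the induced auction DSIC, so \autoref{thm:payment-identity} gives $p_i^{(n)}(v_i) = \int_0^{v_i} z\,x_i^{(n)\prime}(z)\,dz + c_i^{(n)}$. Integrating by parts, user $i$'s interim utility is $u_i^{(n)}(v_i) = \int_0^{v_i} x_i^{(n)}(z)\,dz - c_i^{(n)}$, and individual rationality forces $u_i^{(n)}(v_i) \ge 0$ for all $v_i$, hence $c_i^{(n)} = p_i^{(n)}(0) \le 0$ and $\Es{v \sim \TypeDistr^n}{u_i^{(n)}(v_i)} \ge 0$. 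It therefore remains only to prove the matching upper bound $\Es{v \sim \TypeDistr^n}{u_i^{(n)}(v_i)} \le 0$.

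For the upper bound I would argue the contrapositive, generalizing the EIP-1559 attack of \autoref{thm:EIPOffChain}: if some user retains positive expected utility, the miner can strictly raise her revenue off-chain, contradicting off-chain influence proofness (\autoref{def:OffChainInfluenceProof}). Concretely, the miner commits to an off-chain mechanism that elicits each user's value, and then for a small threshold $\gamma > 0$ includes (via $\bBuild$, exactly as in the on-chain equilibrium) only those users whose equilibrium surplus is at least $\gamma$, charges each such user an additional off-chain fee $\gamma$, and censors the rest; truthful reporting remains a best response because the on-chain rule is DSIC and the fee is a fixed constant. Two ingredients make this profitable: (i) unlimited supply, which makes each user's on-chain allocation and payment independent of the others, so censoring low-surplus users does not disturb the retained users; and (ii) \autoref{thm:expected-constant-rev}, which pins the miner's on-chain revenue to a constant invariant to the number of participants and to any single user's bid, so discarding the low-value users costs nothing on-chain while the fee $\gamma$ is collected from the positive mass of retained users. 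As $\gamma \to 0$ the off-chain revenue exceeds the on-chain revenue by a positive amount proportional to the mass of users with surplus above $\gamma$, which stays bounded below whenever $\Es{v \sim \TypeDistr^n}{u_i^{(n)}(v_i)} > 0$.

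The main obstacle is making the revenue bookkeeping of this value-dependent censoring rigorous: \autoref{thm:expected-constant-rev} directly controls revenue only when passing from $n$ to $n-1$ i.i.d.\ users or conditioning on a single bid, whereas the deviation censors an endogenous, value-correlated subset. I would handle this by using unlimited-supply independence to decompose the miner's revenue additively across users and, via \autoref{thm:revenue-equals-virtual-welfare}, to argue that the censored low-value users contribute non-positive virtual welfare, so their removal only weakly increases on-chain revenue, after which the entry fees make the increase strict. A secondary check is that the abstention option built into the off-chain model (\autoref{def:model}, \autoref{item:model-offCG}) gives the censored users no profitable alternative, which holds because those users already obtain zero utility under the equilibrium treatment. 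Combining the two bounds yields $\Es{v \sim \TypeDistr^n}{v_i \cdot x_i^{(n)}(v_i) - p_i^{(n)}(v_i)} = 0$.
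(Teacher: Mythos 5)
Your high-level skeleton matches the paper's: individual rationality (from on-chain user simplicity) gives the lower bound, and you correctly inferred that off-chain influence proofness must be an implicit hypothesis---the paper's own proof indeed derives a contradiction with it, and your EIP-1559 observation is exactly the right sanity check. However, the deviation you construct for the upper bound has a genuine gap. You censor the value-dependent set of \emph{all} users with surplus below $\gamma$, and you yourself flag that \autoref{thm:expected-constant-rev} does not control the miner's on-chain revenue after such censoring: it only gives invariance to the participant count and to conditioning on a \emph{single} user's bid. Your two patches do not close this. First, you invoke unlimited supply and per-user independence, but neither is a hypothesis of the lemma: the result is stated for an arbitrary block-building process $\bBuild$ (the phrase ``even with unlimited supply'' in the paper describes the strength of the impossibility, not an assumption). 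Second, you invoke \autoref{thm:revenue-equals-virtual-welfare} to argue that removing the censored users weakly increases on-chain revenue, but that theorem equates virtual welfare with expected user \emph{payments}, not with miner revenue; in this model $\mathsf{Rev}$ may differ arbitrarily from the sum of payments (burning and minting are allowed---EIP-1559 being the canonical example), so non-positive virtual welfare of censored users implies nothing about the change in $\mathsf{Rev}$. Moreover, the users you censor are those with low \emph{surplus}, not low virtual value, so even the payments claim is false: a user facing a payment just below a high value has high virtual value but near-zero surplus.

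The fix---and the paper's actual proof---is to make the deviation target only the single user $i$ whose expected utility is positive: the miner announces that $i$ will be censored unless $i$ pays a fixed off-chain fee of $\tfrac{1}{2}\Es{v_i \sim \TypeDistr}{\mathsf{util}_i^{(n)}(v_i)}$, while the miner plays $s_\mi^\onCG$ on-chain regardless, so all other users still face a DSIC game and keep bidding truthfully. With positive probability $i$'s realized surplus exceeds the fee, so the expected off-chain payment is strictly positive, and the second part of \autoref{thm:expected-constant-rev} applies \emph{exactly} to this single-user threat: the miner's expected on-chain revenue, whether conditioned on $i$'s bid or after dropping $i$ entirely, is the same constant. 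Hence total revenue strictly increases, contradicting off-chain influence proofness, with no multi-user bookkeeping and no assumption on $\bBuild$ required.
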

\begin{proof}
    Let $\mathsf{util}_i^{(n)}(v_i) = v_i \cdot x_i^{(n)}(v_i) - p_i^{(n)}(v_i)$ be user $i$'s interim utility function.
    Suppose there exists a user $i$ with an expected utility $\Es{v_i \sim \TypeDistr}{\mathsf{util}_i^{(n)}(v_i)} > 0$.
    The miner runs the following off-chain mechanism $\Moff$.
    The miner asks user $i$ (and only user $i$) to pay $(1/2)\cdot{\Es{v_i \sim \TypeDistr}{\mathsf{util}_i^{(n)}(v_i)}}$ off-chain, barring which user $i$ would be censored by the miner in the on-chain game.
    Irrespective of user $i$'s strategy, the miner plays $s_\mi^\onCG$ on-chain.

    From the perspective of all user apart from $i$, conditioned on the miner playing the strategy $s_\mi^\onCG$ on-chain, the resulting game is on-chain user simple.
    Thus, bidding $v_{-i}$ continues to be the best response in the off-chain game induced by the miner's off-chain mechanism $\Moff$.

    From user $i$'s perspective, $i$ receives an expected utility $\Es{v_i \sim \TypeDistr}{\mathsf{util}_i^{(n)}(v_i)}$, and thus, with positive probability, must be willing to pay $(1/2)\cdot{\Es{v_i \sim \TypeDistr}{\mathsf{util}_i^{(n)}(v_i)}}$ to be included in the block.
    Thus, $\Es{v \sim \TypeDistr^n}{p_{\mathsf{off}, i}(\Moff; v_i, v_{-i})} > 0$.

    From the miner's perspective, the miner receives the same on-chain revenue after announcing $\Moff$ as playing the trivial off-chain mechanism corresponding to $s_\mi^\onCG$.
    This is because, (since the user Bayes-Nash equilibria satisfy on-chain simplicity and collusion proofness) from the second part of \autoref{thm:expected-constant-rev},
    \[
    \Es{v_{-i} \sim \TypeDistr^{n-1}}{\rev(s_\mi^\onCG; v_i, v_{-i})} =     \Es{v \sim \TypeDistr^{n}}{\rev(s_\mi^\onCG; v_i, v_{-i})}
    \]
    and thus, in expectation over the values $v_{-i}$ of other users, the miner's revenue is independent of user $i$'s on-chain strategy.
    However, the miner also receives strictly positive revenue off-chain, and thus, prefers the equilibrium induced by $\Moff$.

    Thus, $(s_\mi^\onCG, s_{\usr, 1}^{\onCG, n}, \dots, s_{\usr, n}^{\onCG, n})$
    is not off-chain influence proof, as required.
\end{proof}

Finally, we argue that in general auction environments, if every user gets an expects a zero utility in a mechanism, then the mechanism allocates with a positive probability only if the user draws a value equal to the supremum of the distribution $\TypeDistr$.

\begin{lemma} \label{thm:PartConverse}
    Consider some DSIC and individually rational mechanism given by an allocation and payment rule $(X, P)$. Let users' values be distributed according to $\TypeDistr$ with a supremum $v_{\sup} \in \R \cup \{\infty\}$.\footnote{As is standard, the supremum of a distribution $\mathcal{T}$ equals $\inf \{s: Pr_{x \sim \mathcal{T}}(x \leq s) = 1 \}$}
    \begin{enumerate}
        \item If user $i$ receives a zero expected utility in $(X, P)$ (in expectation over the randomness in the mechanism and values of all users, including user $i$), i.e,
        \[
        \Es{v \sim \TypeDistr^n}{v_i \cdot x_i(v_i, v_{-i}) - p_i(v_i, v_{-i})} = 0,
        \]
        then $\Es{v_{-i} \sim \TypeDistr^{n-1}}{x_i(v_i, v_{-i})} = 0$ for all $v_i \neq v_{\sup}$.
        \item Suppose $\Es{v_{-i} \sim \TypeDistr^{n-1}}{x_i(v_i, v_{-i})} = 0$ for all $v_i \neq v_{\sup}$. Then, there exists a user that can receive a non-zero expected utility (in expectation over the outcomes of the mechanism and the values $v$ of all users including $i$) only if the following conditions are met:
        \begin{itemize}
            \item $\Ps{v_i \sim \TypeDistr}{v_i = v_{\sup}} > 0$ and $\Es{v_{-i} \sim \TypeDistr^{n-1}}{x_i(v_{\sup}, v_{-i})} > 0$
            \item The supremum of the distribution $\widetilde{\TypeDistr}$ given by $\TypeDistr$ conditioned on the draw being less than $v_{\sup}$. Then, the supremum $\tilde{v}_{\sup}$ of $\widetilde{\TypeDistr}$ must be strictly less than $v_{\sup}$.
        \end{itemize}
    \end{enumerate}
\end{lemma}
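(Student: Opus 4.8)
The plan is to treat this as a purely single-agent, Myersonian statement about a fixed user $i$ and to run the whole argument through the interim allocation and payment rules $x_i(v_i)=\Es{v_{-i}\sim\TypeDistr^{n-1}}{x_i(v_i,v_{-i})}$ and $p_i(v_i)=\Es{v_{-i}\sim\TypeDistr^{n-1}}{p_i(v_i,v_{-i})}$, together with the interim utility $u_i(v_i)=v_i x_i(v_i)-p_i(v_i)$. Since $(X,P)$ is DSIC (hence BIC) and individually rational, I would first invoke the payment identity (\autoref{thm:payment-identity}) in its envelope form $u_i(v_i)=u_i(0)+\int_0^{v_i} x_i(z)\,dz$, with $x_i$ monotone non-decreasing and $u_i\ge 0$. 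For Part~(1), the hypothesis $\Es{v\sim\TypeDistr^n}{u_i(v_i)}=0$ combined with $u_i\ge 0$ forces $u_i(v_i)=0$ for $\TypeDistr$-almost every $v_i$, i.e.\ on the support. Monotonicity then upgrades this to $u_i\equiv 0$ on $[0,v_i]$ for every support point $v_i<v_{\sup}$: if $u_i$ were positive at some support point below $v_{\sup}$, monotonicity would make it positive on a neighbourhood of positive $\TypeDistr$-measure, contradicting $u_i=0$ a.e. Because $x_i$ is (a version of) the derivative of the convex function $u_i$, vanishing of $u_i$ on $[0,v_i]$ gives $x_i(z)=0$ for all $z<v_{\sup}$; values $v_i>v_{\sup}$ occur with probability zero and may be ignored, which is the claimed conclusion.

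For Part~(2) I would first dispatch the allocation content. Under the hypothesis $x_i(v_i)=0$ for $v_i\ne v_{\sup}$, the ex-ante probability that $i$ is allocated is $\Es{v_i\sim\TypeDistr}{x_i(v_i)}=\Ps{v_i\sim\TypeDistr}{v_i=v_{\sup}}\cdot x_i(v_{\sup})$, since every value strictly below $v_{\sup}$ contributes nothing and the single point $v_{\sup}$ contributes only through its atom. Hence a positive allocation probability — the only possible source of positive utility — already forces both $\Ps{v_i\sim\TypeDistr}{v_i=v_{\sup}}>0$ and $x_i(v_{\sup})>0$, which is condition~(a). To pass to utility, I would decompose $\Es{v_i\sim\TypeDistr}{u_i(v_i)}$ into the mass strictly below $v_{\sup}$ and the atom at $v_{\sup}$. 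On the support below $v_{\sup}$ the allocation vanishes, so $u_i(v_i)=-p_i(v_i)$ is constant there (this follows from writing the two DSIC inequalities between any two such types and using $x_i=0$), and this constant equals $0$ under individual rationality together with the standard non-subsidizing normalisation $p_i\ge 0$. Thus the bulk contributes nothing and $\Es{v_i\sim\TypeDistr}{u_i(v_i)}=\Ps{v_i\sim\TypeDistr}{v_i=v_{\sup}}\cdot u_i(v_{\sup})$.

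It remains to bound $u_i(v_{\sup})$, and this is exactly where the gap condition~(b) enters. The envelope integral is blind to a jump of $x_i$ at the single point $v_{\sup}$, so rather than the integral form I would argue directly from DSIC: a support type $v'$ just below $v_{\sup}$ must not prefer to report $v_{\sup}$, which gives $0=u_i(v')\ge v' x_i(v_{\sup})-p_i(v_{\sup})=u_i(v_{\sup})-(v_{\sup}-v')x_i(v_{\sup})$, i.e.\ $u_i(v_{\sup})\le (v_{\sup}-v')x_i(v_{\sup})$. Letting $v'\to\tilde{v}_{\sup}$, the supremum of $\TypeDistr$ restricted below $v_{\sup}$, yields $u_i(v_{\sup})\le (v_{\sup}-\tilde{v}_{\sup})\,x_i(v_{\sup})$. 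Consequently $u_i(v_{\sup})>0$ requires both $x_i(v_{\sup})>0$ and $v_{\sup}-\tilde{v}_{\sup}>0$, the latter being precisely condition~(b).

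I expect the main obstacle to be the careful bookkeeping around the atom at $v_{\sup}$. The standard integral form of the payment identity cannot ``see'' the surplus the winning top type extracts, so the positive-utility slack at the top has to be recovered from the no-upward-deviation DSIC constraint of the highest non-top types; it is exactly the gap $v_{\sup}-\tilde{v}_{\sup}$ that leaves those constraints slack enough to permit a top type to pay strictly below $v_{\sup}$ while the mechanism remains DSIC. A secondary, more pedestrian subtlety is justifying that the below-$v_{\sup}$ types contribute zero utility; I would lean on individual rationality plus the non-subsidizing normalisation (equivalently, one reads the utility claim through the positive-allocation statement), and note that in the downstream application the zero-expected-utility hypothesis of Part~(1) already pins this constant to $0$.
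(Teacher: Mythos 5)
Your proposal is correct, and its two halves relate to the paper's proof differently. For Part~(1), the paper argues from a single direct DSIC deviation: if $x_i(v_0) > 0$ for some $v_0 < v_{\sup}$, then any type $\tilde{v}_i > v_0$ could mimic $v_0$ and (using IR at $v_0$) secure utility $\tilde{v}_i \cdot x_i(v_0) - p_i(v_0) > 0$, so DSIC forces strictly positive truthful utility on the event $\{ v_i > v_0 \}$, which has positive probability since $v_0 < v_{\sup}$ --- contradicting zero expected utility. Your route instead goes through the envelope/payment-identity machinery ($u_i$ convex and non-decreasing, zero $\TypeDistr$-a.e., hence identically zero on $[0, v_{\sup})$, hence $x_i \equiv 0$ there); this is a genuinely different derivation, somewhat heavier but equally valid, and it correctly uses monotonicity of $x_i$ to upgrade ``a.e.\ zero'' to ``everywhere zero'' below $v_{\sup}$. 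For Part~(2), your argument is essentially the paper's: both hinge on the no-upward-deviation constraint of types below $v_{\sup}$ against the report $v_{\sup}$. The paper runs it as a contradiction (a type in the interval $(p, v_{\sup})$, where $p = p_i(v_{\sup})/x_i(v_{\sup})$ is the payment per unit allocation at the top, would profitably deviate to reporting $v_{\sup}$), while you phrase the identical constraint as the quantitative bound $u_i(v_{\sup}) \le (v_{\sup} - \tilde{v}_{\sup})\, x_i(v_{\sup})$; these are equivalent.

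One point worth flagging, in your favour: both your proof and the paper's require payments to be non-negative (no subsidies) in Part~(2). The step ``types below $v_{\sup}$ get exactly zero utility'' does not follow from IR alone --- a subsidized type would enjoy positive utility with zero allocation, falsifying the lemma as literally stated. The paper asserts this silently (``By individual rationality, $i$ nets a zero utility''), whereas you make the normalisation $p_i \ge 0$ explicit; this is a hypothesis the lemma statement omits, not a gap in your argument relative to the paper's. Similarly, like the paper's proof, your Part~(1) only establishes $x_i(v_i) = 0$ for $v_i < v_{\sup}$ and is silent about reports strictly above $v_{\sup}$ (which never occur as true values), so the lemma's phrasing ``for all $v_i \neq v_{\sup}$'' is loose in the same way under both arguments.
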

We will unpack the second claim before we prove the lemma.
The second claim is a partial converse to the first, and helps characterize all mechanisms where each user gets an expected utility equal to zero.
Consider the distribution $\TypeDistr$ given by
\[
\Ps{v_i \sim \TypeDistr}{v_i = 0} = 1/2
\qquad
\qquad
\Ps{v_i \sim \TypeDistr}{v_i = 1} = 1/2
\]
Conditioned on the draw from $\TypeDistr$ not being equal to $1$, its supremum, the resulting distribution is just the constant distribution at $0$, whose supremum is indeed less than $1$.
If a user with value distributed according to $\TypeDistr$ is posted a price $\frac{1}{2}$, he receives an expected positive utility.
The second part in the lemma claims a user can potentially receive a non-zero utility only if the supremum of the distribution $\TypeDistr$ has a non-zero probability of getting realized, and is not a limit point of the support of the distribution $\widetilde{\TypeDistr}$. 

\begin{proof}
    \begin{enumerate}
        \item Suppose a user $i$ with value $v_i$ is included in the block with a non-zero probability $x_i(v_i)$.
        By monotonicity of the allocation rule, whenever the user has a value $\tilde{v}_i > v_i$, he has to be included with at least the probability $x_i(v_i)$.
        Further, if the user pretends to have a value $v_i$, he receives a utility
        \[\tilde{v}_i \cdot x_i(v_i) - p_i(v_i) > v_i \cdot x_i(v_i) - p_i(v_i) \ge 0\]
        where $p_i$ is the interim payment rule for user $i$.
        Thus, user $i$ gets strictly positive utility whenever he has a value larger than $v_i$.
        If he receives an expected utility of zero, then, the probability of realizing a value larger than $v_i$ must be zero.
        \item Suppose that $\Es{v_{-i} \sim \TypeDistr^{n-1}}{x_i(v_{\sup}, v_{-i})} = 0$ for all $v_i \neq v_{\sup}$ and user $i$ receives a strictly positive expected utility (in expectation over the values $v$ of all users).
        Whenever $i$ has a value strictly smaller than $v_{\sup}$, he gets included with probability zero.
        By individual rationality, $i$ nets a zero utility.
        Thus, user $i$ receives a positive utility when his value $v_i = v_{\sup}$.
        For his expected utility to be positive, he needs to make an expected payment $p$ per unit probability of getting included, strictly smaller than $v_{\sup}$.

        Assuming $Pr_{v_i \sim \TypeDistr}(p < v_i < v_{\sup}) > 0$ for contradiction, then user $i$ can bid $v_{\sup}$ to make an expected payment $p$ per unit probability of getting included and receive a positive utility.
        This is a contradiction, since the mechanism is DSIC.
    \end{enumerate}
\end{proof}

From here, proving our main impossibility result becomes straightforward.

\begin{theorem} \label{thm:PriorImpossibility}
    For a given distribution $\TypeDistr$ of user values, and a sequence of on-chain simple, off-chain influence proof, and strongly collusion proof user Bayes-Nash equilibria $\Big((s_\mi^\onCG, s_{\usr, 1}^{\onCG, n}, \dots, \allowbreak s_{\usr, n}^{\onCG, n})\Big)_{n \in \N \cup \{0\}}$.
    Then, for all $n \in \N$ and all users $i$, user $i$ is allocated with probability zero whenever his value $v_i$ is smaller than the supremum $v_{\sup}$ of $\TypeDistr$.
\end{theorem}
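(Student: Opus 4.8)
The plan is to obtain \autoref{thm:PriorImpossibility} as an immediate corollary of the two lemmas that directly precede it, by chaining them together; essentially all of the real work has already been done in those lemmas, and what remains is to line up their hypotheses. The overall structure is: first deduce that every user earns zero expected utility under the given equilibria, and then feed this conclusion (together with the facts that the user-facing auction is DSIC and individually rational) into the first part of \autoref{thm:PartConverse} to conclude that the interim allocation must vanish below the supremum.

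First I would invoke the lemma stated immediately above, whose conclusion is that every user has zero expected utility. Its proof constructs an off-chain mechanism extracting strictly positive payment from any user with positive interim utility, so it requires exactly the hypotheses assumed here: on-chain simplicity, strong collusion proofness, and off-chain influence proofness. Applying it gives, for every $n$ and every user $i$,
\[
\Es{v \sim \TypeDistr^n}{v_i \cdot x_i^{(n)}(v_i) - p_i^{(n)}(v_i)} = 0.
\]
Next I would verify the structural hypotheses needed for \autoref{thm:PartConverse}. By the on-chain user simplicity half of on-chain simplicity (\autoref{def:OnChainSimple}), conditioning on the miner's strategy $s_\mi^\onCG$ yields an auction over the users that is DSIC and individually rational, whose interim allocation and payment rules are precisely $x_i^{(n)}$ and $p_i^{(n)}$. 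Thus $(x_i^{(n)}, p_i^{(n)})$ is exactly of the form required by \autoref{thm:PartConverse}, and applying part (1) of that lemma to the zero-utility conclusion above gives $\Es{v_{-i} \sim \TypeDistr^{n-1}}{x_i^{(n)}(v_i, v_{-i})} = 0$ for all $v_i \neq v_{\sup}$. Since the values are supported on $[0, v_{\sup}]$, the condition $v_i \neq v_{\sup}$ coincides with $v_i < v_{\sup}$, which is exactly the asserted conclusion that user $i$ is allocated with probability zero whenever $v_i < v_{\sup}$.

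There is no genuine mathematical obstacle remaining; the only points demanding care are bookkeeping around hypotheses and notation. In particular, I would make sure that the DSIC and individually rational object to which \autoref{thm:PartConverse} is applied is the \emph{user-facing} auction obtained after fixing the compliant miner strategy $s_\mi^\onCG$ (so that the earlier lemmas' interim rules $x_i^{(n)}, p_i^{(n)}$ are the correct ones), rather than the joint game that also treats the miner as strategic. I would also note explicitly that $v_{\sup}$ may be infinite, in which case $v_i \neq v_{\sup}$ holds for every realizable $v_i$ and the conclusion says the interim allocation is identically zero — matching the informal reading that no nontrivial mechanism survives all three desiderata.
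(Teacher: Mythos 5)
Your proposal is correct and follows exactly the route the paper intends: chain the zero-expected-utility lemma (which uses on-chain simplicity, strong collusion proofness via \autoref{thm:expected-constant-rev}, and off-chain influence proofness) with part (1) of \autoref{thm:PartConverse} applied to the DSIC, individually rational user-facing auction. The paper itself treats this as an immediate consequence of those two lemmas, so no further comment is needed.
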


The partial converse in \autoref{thm:PartConverse} also rules out any user being allocated with a positive probability when the supremum of the distribution does not have a point mass.
Of course, most commonly occurring continuous distributions like the uniform distribution, the normal distribution, the exponential distribution, etc., satisfy the above property.

\begin{theorem}
\label{thm:main-impossibility}
    There can exist no prior-independent block-building process $\bBuild$ such that for all distribution $\TypeDistr$, there exist strategy profiles that satisfy on-chain simplicity, off-chain influence proofness and collusion proofness and also includes users with a positive probability.
\end{theorem}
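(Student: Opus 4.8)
The plan is to obtain \autoref{thm:main-impossibility} as a direct consequence of \autoref{thm:PriorImpossibility} together with the partial converse in \autoref{thm:PartConverse}, by instantiating the (otherwise arbitrary) value distribution at one whose supremum carries no point mass. I would argue by contradiction: suppose some prior-independent block-building process $\bBuild$ admits, for \emph{every} distribution $\TypeDistr$, a sequence of profiles $\big((s_\mi^\onCG, s_{\usr,1}^{\onCG,n},\dots,s_{\usr,n}^{\onCG,n})\big)_{n\in\N\cup\{0\}}$ that are simultaneously on-chain simple, off-chain influence proof, and strongly collusion proof, and under which some user is allocated with positive probability.

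First I would fix a convenient witness distribution. Because $\bBuild$ is prior-independent, the assumed guarantee must hold in particular for a continuous distribution with no atom at its supremum; concretely one may take the exponential distribution (for which $v_{\sup}=\infty$, so no value ever equals $v_{\sup}$) or the uniform distribution $U[0,1]$ (for which $v_{\sup}=1$ but $\Ps{v_i\sim\TypeDistr}{v_i=v_{\sup}}=0$). Both are regular, hence legitimate instances, and indeed every distribution in our model is continuous and therefore atomless. Applying \autoref{thm:PriorImpossibility} to the hypothesized sequence of equilibria for this $\TypeDistr$ yields, for every $n\in\N$ and every user $i$, that the interim allocation satisfies $\Es{v_{-i}\sim\TypeDistr^{n-1}}{x_i(v_i,v_{-i})}=0$ for all $v_i<v_{\sup}$.

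I would then close the argument using atomlessness. The ex-ante probability that user $i$ is allocated equals $\Es{v_i\sim\TypeDistr}{\Es{v_{-i}\sim\TypeDistr^{n-1}}{x_i(v_i,v_{-i})}}$, which splits into the contribution from the event $v_i<v_{\sup}$ (zero by the previous step) and the contribution from $v_i=v_{\sup}$ (zero because $\TypeDistr$ places no mass at $v_{\sup}$). Hence every user is allocated with probability zero, contradicting the assumption that under $\bBuild$ some user is allocated with positive probability. This is exactly the content flagged immediately after \autoref{thm:PriorImpossibility}: the partial converse of \autoref{thm:PartConverse} forbids positive allocation precisely when the supremum has no point mass.

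The genuine mathematical work is carried out upstream---the revenue-invariance of \autoref{thm:expected-constant-rev}, the zero-utility lemma derived from off-chain influence proofness, and \autoref{thm:PartConverse}---so the final step is essentially a packaging argument, and I do not expect it to pose real difficulty. The one point deserving care is that \autoref{thm:PriorImpossibility} (through \autoref{thm:expected-constant-rev}) compares the $n$- and $(n-1)$-user interim revenues \emph{under a common miner strategy}, so I would note that the hypothesis must furnish a single consistent sequence across all $n$ sharing the miner's compliant advice. Prior-independence of $\bBuild$ and the compliant (constant-advice, no-censoring, no-fabrication) form of the miner's on-chain strategy make this consistency available, so the chain of lemmas applies verbatim and the contradiction goes through.
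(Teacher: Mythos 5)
Your proposal is correct and matches the paper's intended argument exactly: the paper derives \autoref{thm:main-impossibility} directly from \autoref{thm:PriorImpossibility} together with the observation (stated just after that theorem) that the partial converse in \autoref{thm:PartConverse} rules out positive allocation probability whenever the supremum of $\TypeDistr$ carries no point mass, which holds for the continuous distributions in the model. Your additional remark about needing a single consistent sequence of equilibria across $n$ is a fair reading of the hypotheses of \autoref{thm:expected-constant-rev} and does not change the argument.
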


\begin{remark}
    We believe an impossibility result similar to \autoref{thm:main-impossibility} can be achieved by assuming only on-chain miner simplicity, strong collusion-proofness, and off-chain influence-proofness.
    While users need not bid truthfully, the corresponding interim allocation and payment rules (\autoref{def:ValueInterim}) are BIC.
    We believe that the BIC guarantee over the value allocation rule, coupled with on-chain miner simplicity, collusion proofness and off-chain influence proofness should lead to an impossibility result similar to \autoref{thm:main-impossibility}.
\end{remark}

\section{Conclusion}

We propose a set of desiderata for TFMs, and under these desiderata, users can be advised to simply bid their values, without worrying about either the bids placed by other users, or the strategy played by the miner, including any off-chain chatter.
If a user BNE in a TFM is off-chain influence proof, even a miner with unchecked commitment power would not want to induce a different equilibrium since she is receiving the optimal revenue over all user BNEs.
Moreover, analogous to UIC and MMIC, on-chain user simplicity guarantees a DSIC mechanism to users conditioned on the miner's \emph{expected} behaviour, and on-chain miner simplicity says the miner will not deviate on-chain from the expected on-chain behaviour.

We suggest that TFMs consider the cryptographic $(k+1)\textsuperscript{th}$ auction, which satisfies on-chain simplicity and off-chain influence proofness, but not strong collusion proofness. This is in contrast to the truth-telling user BNE in EIP-1559, which satisfies on-chain simplicity and strong collusion proofness, but fails off-chain influence proofness.
We prove that this tradeoff between strong collusion proofness and off-chain influence proofness is unavoidable, and
argue that off-chain influence proofness has innate collusion resistance guarantees which may suffice by themselves.
Still, even after our paper, the optimal TFM design should remain in consideration; for one, because the best explicit cryptographic or dynamic implementations of a second-price auction is not clear.
Moreover, in the era of ``MEV'', novel new design considerations for TFMs may emerge beyond the ones we consider here (see e.g. \cite{BahraniGR23}), and we hope that our framework provides relevant insights into future investigations.

\bibliographystyle{ACM-Reference-Format}

\bibliography{Bib}{}


\begin{thebibliography}{38}


\ifx \showCODEN    \undefined \def \showCODEN     #1{\unskip}     \fi
\ifx \showDOI      \undefined \def \showDOI       #1{#1}\fi
\ifx \showISBNx    \undefined \def \showISBNx     #1{\unskip}     \fi
\ifx \showISBNxiii \undefined \def \showISBNxiii  #1{\unskip}     \fi
\ifx \showISSN     \undefined \def \showISSN      #1{\unskip}     \fi
\ifx \showLCCN     \undefined \def \showLCCN      #1{\unskip}     \fi
\ifx \shownote     \undefined \def \shownote      #1{#1}          \fi
\ifx \showarticletitle \undefined \def \showarticletitle #1{#1}   \fi
\ifx \showURL      \undefined \def \showURL       {\relax}        \fi
\providecommand\bibfield[2]{#2}
\providecommand\bibinfo[2]{#2}
\providecommand\natexlab[1]{#1}
\providecommand\showeprint[2][]{arXiv:#2}

\bibitem[Abraham et~al\mbox{.}(2006)]%
        {abraham2006distributed}
\bibfield{author}{\bibinfo{person}{Ittai Abraham}, \bibinfo{person}{Danny Dolev}, \bibinfo{person}{Rica Gonen}, {and} \bibinfo{person}{Joe Halpern}.} \bibinfo{year}{2006}\natexlab{}.
\newblock \showarticletitle{Distributed computing meets game theory: robust mechanisms for rational secret sharing and multiparty computation}. In \bibinfo{booktitle}{\emph{Proceedings of the twenty-fifth annual ACM symposium on Principles of distributed computing}}. \bibinfo{pages}{53--62}.
\newblock


\bibitem[Akbarpour and Li(2020)]%
        {AkbarpourL20}
\bibfield{author}{\bibinfo{person}{Mohammad Akbarpour} {and} \bibinfo{person}{Shengwu Li}.} \bibinfo{year}{2020}\natexlab{}.
\newblock \showarticletitle{Credible Auctions: A Trilemma}.
\newblock \bibinfo{journal}{\emph{Econometrica}} \bibinfo{volume}{88}, \bibinfo{number}{2} (\bibinfo{year}{2020}), \bibinfo{pages}{425--467}.
\newblock


\bibitem[Arnosti and Weinberg(2022)]%
        {arnosti2022bitcoin}
\bibfield{author}{\bibinfo{person}{Nick Arnosti} {and} \bibinfo{person}{S~Matthew Weinberg}.} \bibinfo{year}{2022}\natexlab{}.
\newblock \showarticletitle{Bitcoin: A natural oligopoly}.
\newblock \bibinfo{journal}{\emph{Management Science}} \bibinfo{volume}{68}, \bibinfo{number}{7} (\bibinfo{year}{2022}), \bibinfo{pages}{4755--4771}.
\newblock


\bibitem[Bahrani et~al\mbox{.}(2023)]%
        {BahraniGR23}
\bibfield{author}{\bibinfo{person}{Maryam Bahrani}, \bibinfo{person}{Pranav Garimidi}, {and} \bibinfo{person}{Tim Roughgarden}.} \bibinfo{year}{2023}\natexlab{}.
\newblock \showarticletitle{Transaction fee mechanism design with active block producers}.
\newblock \bibinfo{journal}{\emph{arXiv preprint arXiv:2307.01686}} (\bibinfo{year}{2023}).
\newblock


\bibitem[Basu et~al\mbox{.}(2019)]%
        {BasuEOS19}
\bibfield{author}{\bibinfo{person}{Soumya Basu}, \bibinfo{person}{David Easley}, \bibinfo{person}{Maureen O'Hara}, {and} \bibinfo{person}{Emin~G{\"{u}}n Sirer}.} \bibinfo{year}{2019}\natexlab{}.
\newblock \showarticletitle{Towards a Functional Fee Market for Cryptocurrencies}.
\newblock \bibinfo{journal}{\emph{CoRR}}  \bibinfo{volume}{abs/1901.06830} (\bibinfo{year}{2019}).
\newblock
\showeprint[arxiv]{1901.06830}
\urldef\tempurl%
\url{http://arxiv.org/abs/1901.06830}
\showURL{%
\tempurl}


\bibitem[Budish(2023)]%
        {budish2023trust}
\bibfield{author}{\bibinfo{person}{Eric Budish}.} \bibinfo{year}{2023}\natexlab{}.
\newblock \bibinfo{booktitle}{\emph{Trust at scale: The economic limits of cryptocurrencies and blockchains}}.
\newblock \bibinfo{type}{{T}echnical {R}eport}. \bibinfo{institution}{Working Paper, U. of Chicago}.
\newblock


\bibitem[Bulow and Roberts(1989)]%
        {BulowR89}
\bibfield{author}{\bibinfo{person}{Jeremy Bulow} {and} \bibinfo{person}{John Roberts}.} \bibinfo{year}{1989}\natexlab{}.
\newblock \showarticletitle{The simple economics of optimal auctions}.
\newblock \bibinfo{journal}{\emph{Journal of political economy}} \bibinfo{volume}{97}, \bibinfo{number}{5} (\bibinfo{year}{1989}), \bibinfo{pages}{1060--1090}.
\newblock


\bibitem[Buterin et~al\mbox{.}(2019)]%
        {Buterin20}
\bibfield{author}{\bibinfo{person}{Vitalik Buterin}, \bibinfo{person}{Eric Conner}, \bibinfo{person}{Rick Dudley}, \bibinfo{person}{Matthew Slipper}, \bibinfo{person}{Ian Norden}, {and} \bibinfo{person}{Abdelhamid Bakhta}.} \bibinfo{year}{2019}\natexlab{}.
\newblock \showarticletitle{EIP-1559: Fee market change for ETH 1.0 chain}.
\newblock \bibinfo{howpublished}{Ethereum Improvement Proposals, no. 1559}.
\newblock  (\bibinfo{date}{April} \bibinfo{year}{2019}).
\newblock
\urldef\tempurl%
\url{https://eips.ethereum.org/EIPS/eip-1559}
\showURL{%
\tempurl}


\bibitem[Canetti et~al\mbox{.}(2023)]%
        {canetti2023zero}
\bibfield{author}{\bibinfo{person}{Ran Canetti}, \bibinfo{person}{Amos Fiat}, {and} \bibinfo{person}{Yannai~A Gonczarowski}.} \bibinfo{year}{2023}\natexlab{}.
\newblock \showarticletitle{Zero-Knowledge Mechanisms}.
\newblock \bibinfo{journal}{\emph{arXiv preprint arXiv:2302.05590}} (\bibinfo{year}{2023}).
\newblock


\bibitem[Chawla and Hartline(2013)]%
        {chawlaH13}
\bibfield{author}{\bibinfo{person}{Shuchi Chawla} {and} \bibinfo{person}{Jason~D Hartline}.} \bibinfo{year}{2013}\natexlab{}.
\newblock \showarticletitle{Auctions with unique equilibria}. In \bibinfo{booktitle}{\emph{Proceedings of the fourteenth ACM conference on Electronic commerce}}. \bibinfo{pages}{181--196}.
\newblock


\bibitem[Chen et~al\mbox{.}(2024)]%
        {ChenSZZ24}
\bibfield{author}{\bibinfo{person}{Xi Chen}, \bibinfo{person}{David Simchi-Levi}, \bibinfo{person}{Zishuo Zhao}, {and} \bibinfo{person}{Yuan Zhou}.} \bibinfo{year}{2024}\natexlab{}.
\newblock \showarticletitle{Bayesian mechanism design for blockchain transaction fee allocation}.
\newblock \bibinfo{journal}{\emph{arXiv preprint arXiv:2209.13099}} (\bibinfo{year}{2024}).
\newblock


\bibitem[Chitra et~al\mbox{.}(2023)]%
        {ChitraFK23}
\bibfield{author}{\bibinfo{person}{Tarun Chitra}, \bibinfo{person}{Matheus V.~X. Ferreira}, {and} \bibinfo{person}{Kshitij Kulkarni}.} \bibinfo{year}{2023}\natexlab{}.
\newblock \bibinfo{title}{Credible, Optimal Auctions via Blockchains}.
\newblock \bibinfo{howpublished}{Cryptology ePrint Archive, Paper 2023/114}.
\newblock
\urldef\tempurl%
\url{https://eprint.iacr.org/2023/114}
\showURL{%
\tempurl}


\bibitem[Chung et~al\mbox{.}(2024)]%
        {ChungRS24}
\bibfield{author}{\bibinfo{person}{Hao Chung}, \bibinfo{person}{Tim Roughgarden}, {and} \bibinfo{person}{Elaine Shi}.} \bibinfo{year}{2024}\natexlab{}.
\newblock \showarticletitle{Collusion-resilience in transaction fee mechanism design}.
\newblock \bibinfo{journal}{\emph{EC 2024}} (\bibinfo{year}{2024}).
\newblock
\newblock
\shownote{arXiv preprint arXiv:2402.09321}.


\bibitem[Chung and Shi(2023)]%
        {ChungS23}
\bibfield{author}{\bibinfo{person}{Hao Chung} {and} \bibinfo{person}{Elaine Shi}.} \bibinfo{year}{2023}\natexlab{}.
\newblock \showarticletitle{Foundations of Transaction Fee Mechanism Design}. In \bibinfo{booktitle}{\emph{Proceedings of the 2023 Annual ACM-SIAM Symposium on Discrete Algorithms (SODA)}}. \bibinfo{pages}{3856--3899}.
\newblock


\bibitem[Daian et~al\mbox{.}(2020)]%
        {DaianGKLZBBJ19}
\bibfield{author}{\bibinfo{person}{Philip Daian}, \bibinfo{person}{Steven Goldfeder}, \bibinfo{person}{Tyler Kell}, \bibinfo{person}{Yunqi Li}, \bibinfo{person}{Xueyuan Zhao}, \bibinfo{person}{Iddo Bentov}, \bibinfo{person}{Lorenz Breidenbach}, {and} \bibinfo{person}{Ari Juels}.} \bibinfo{year}{2020}\natexlab{}.
\newblock \showarticletitle{Flash boys 2.0: Frontrunning in decentralized exchanges, miner extractable value, and consensus instability}. In \bibinfo{booktitle}{\emph{2020 IEEE Symposium on Security and Privacy (SP)}}. IEEE, \bibinfo{pages}{910--927}.
\newblock


\bibitem[Essaidi et~al\mbox{.}(2022)]%
        {EssaidiFW22}
\bibfield{author}{\bibinfo{person}{Meryem Essaidi}, \bibinfo{person}{Matheus~VX Ferreira}, {and} \bibinfo{person}{S~Matthew Weinberg}.} \bibinfo{year}{2022}\natexlab{}.
\newblock \showarticletitle{Credible, strategyproof, optimal, and bounded expected-round single-item auctions for all distributions}.
\newblock \bibinfo{journal}{\emph{arXiv preprint arXiv:2205.14758}} (\bibinfo{year}{2022}).
\newblock


\bibitem[Ferreira and Weinberg(2020)]%
        {FerreiraW20}
\bibfield{author}{\bibinfo{person}{Matheus~VX Ferreira} {and} \bibinfo{person}{S~Matthew Weinberg}.} \bibinfo{year}{2020}\natexlab{}.
\newblock \showarticletitle{Credible, truthful, and two-round (optimal) auctions via cryptographic commitments}. In \bibinfo{booktitle}{\emph{Proceedings of the 21st ACM Conference on Economics and Computation}}. \bibinfo{pages}{683--712}.
\newblock


\bibitem[Ferreira et~al\mbox{.}(2021)]%
        {FerreiraMPS21}
\bibfield{author}{\bibinfo{person}{Matheus V.~X. Ferreira}, \bibinfo{person}{Daniel~J. Moroz}, \bibinfo{person}{David~C. Parkes}, {and} \bibinfo{person}{Mitchell Stern}.} \bibinfo{year}{2021}\natexlab{}.
\newblock \showarticletitle{Dynamic posted-price mechanisms for the blockchain transaction-fee market}. In \bibinfo{booktitle}{\emph{{AFT} '21: 3rd {ACM} Conference on Advances in Financial Technologies}}. \bibinfo{publisher}{{ACM}}, \bibinfo{pages}{86--99}.
\newblock


\bibitem[Gafni and Yaish(2022)]%
        {GafniY22}
\bibfield{author}{\bibinfo{person}{Yotam Gafni} {and} \bibinfo{person}{Aviv Yaish}.} \bibinfo{year}{2022}\natexlab{}.
\newblock \showarticletitle{Greedy Transaction Fee Mechanisms for (Non-) myopic Miners}.
\newblock \bibinfo{journal}{\emph{arXiv preprint arXiv:2210.07793}} (\bibinfo{year}{2022}).
\newblock


\bibitem[Gafni and Yaish(2024)]%
        {GafniY24}
\bibfield{author}{\bibinfo{person}{Yotam Gafni} {and} \bibinfo{person}{Aviv Yaish}.} \bibinfo{year}{2024}\natexlab{}.
\newblock \showarticletitle{Barriers to collusion-resistant transaction fee mechanisms}.
\newblock \bibinfo{journal}{\emph{EC 2024}} (\bibinfo{year}{2024}).
\newblock
\newblock
\shownote{arXiv preprint arXiv:2402.08564}.


\bibitem[Gans and Halaburda(2024)]%
        {gans2024zero}
\bibfield{author}{\bibinfo{person}{Joshua~S Gans} {and} \bibinfo{person}{Hanna Halaburda}.} \bibinfo{year}{2024}\natexlab{}.
\newblock \showarticletitle{“Zero Cost” Majority Attacks on Permissionless Proof of Work Blockchains}.
\newblock \bibinfo{journal}{\emph{Management Science}} (\bibinfo{year}{2024}).
\newblock


\bibitem[Garay et~al\mbox{.}(2013)]%
        {garay2013rational}
\bibfield{author}{\bibinfo{person}{Juan Garay}, \bibinfo{person}{Jonathan Katz}, \bibinfo{person}{Ueli Maurer}, \bibinfo{person}{Bj{\"o}rn Tackmann}, {and} \bibinfo{person}{Vassilis Zikas}.} \bibinfo{year}{2013}\natexlab{}.
\newblock \showarticletitle{Rational protocol design: Cryptography against incentive-driven adversaries}. In \bibinfo{booktitle}{\emph{2013 IEEE 54th Annual Symposium on Foundations of Computer Science}}. IEEE, \bibinfo{pages}{648--657}.
\newblock


\bibitem[Halpern and Teague(2004)]%
        {HalpernT04}
\bibfield{author}{\bibinfo{person}{Joseph Halpern} {and} \bibinfo{person}{Vanessa Teague}.} \bibinfo{year}{2004}\natexlab{}.
\newblock \showarticletitle{Rational secret sharing and multiparty computation}. In \bibinfo{booktitle}{\emph{Proceedings of the thirty-sixth annual ACM symposium on Theory of computing}}. \bibinfo{pages}{623--632}.
\newblock


\bibitem[Huberman et~al\mbox{.}(2021)]%
        {HubermanLM17}
\bibfield{author}{\bibinfo{person}{Gur Huberman}, \bibinfo{person}{Jacob~D Leshno}, {and} \bibinfo{person}{Ciamac Moallemi}.} \bibinfo{year}{2021}\natexlab{}.
\newblock \showarticletitle{Monopoly without a monopolist: An economic analysis of the bitcoin payment system}.
\newblock \bibinfo{journal}{\emph{The Review of Economic Studies}} \bibinfo{volume}{88}, \bibinfo{number}{6} (\bibinfo{year}{2021}), \bibinfo{pages}{3011--3040}.
\newblock


\bibitem[Izmalkov et~al\mbox{.}(2011)]%
        {izmalkov2011perfect}
\bibfield{author}{\bibinfo{person}{Sergei Izmalkov}, \bibinfo{person}{Matt Lepinski}, {and} \bibinfo{person}{Silvio Micali}.} \bibinfo{year}{2011}\natexlab{}.
\newblock \showarticletitle{Perfect implementation}.
\newblock \bibinfo{journal}{\emph{Games and Economic Behavior}} \bibinfo{volume}{71}, \bibinfo{number}{1} (\bibinfo{year}{2011}), \bibinfo{pages}{121--140}.
\newblock


\bibitem[Kol and Naor(2008)]%
        {kol2008cryptography}
\bibfield{author}{\bibinfo{person}{Gillat Kol} {and} \bibinfo{person}{Moni Naor}.} \bibinfo{year}{2008}\natexlab{}.
\newblock \showarticletitle{Cryptography and game theory: Designing protocols for exchanging information}. In \bibinfo{booktitle}{\emph{Theory of cryptography conference}}. Springer, \bibinfo{pages}{320--339}.
\newblock


\bibitem[Lavi et~al\mbox{.}(2019)]%
        {LaviSZ19}
\bibfield{author}{\bibinfo{person}{Ron Lavi}, \bibinfo{person}{Or Sattath}, {and} \bibinfo{person}{Aviv Zohar}.} \bibinfo{year}{2019}\natexlab{}.
\newblock \showarticletitle{Redesigning Bitcoin's fee market}. In \bibinfo{booktitle}{\emph{The World Wide Web Conference, {WWW} 2019, San Francisco, CA, USA, May 13-17, 2019}}. \bibinfo{pages}{2950--2956}.
\newblock


\bibitem[Leonardos et~al\mbox{.}(2021)]%
        {LeonardosMRSP21}
\bibfield{author}{\bibinfo{person}{Stefanos Leonardos}, \bibinfo{person}{Barnab{\'{e}} Monnot}, \bibinfo{person}{Dani{\"{e}}l Reijsbergen}, \bibinfo{person}{Stratis Skoulakis}, {and} \bibinfo{person}{Georgios Piliouras}.} \bibinfo{year}{2021}\natexlab{}.
\newblock \showarticletitle{Dynamical Analysis of the {EIP-1559} Ethereum Fee Market}.
\newblock  (\bibinfo{year}{2021}), \bibinfo{pages}{114--126}.
\newblock


\bibitem[Leonardos et~al\mbox{.}(2023)]%
        {LeonardosRMP23}
\bibfield{author}{\bibinfo{person}{Stefanos Leonardos}, \bibinfo{person}{Dani{\"{e}}l Reijsbergen}, \bibinfo{person}{Barnab{\'{e}} Monnot}, {and} \bibinfo{person}{Georgios Piliouras}.} \bibinfo{year}{2023}\natexlab{}.
\newblock \showarticletitle{Optimality Despite Chaos in Fee Markets}. In \bibinfo{booktitle}{\emph{International Conference on Financial Cryptography and Data Security}}. \bibinfo{publisher}{Springer}, \bibinfo{pages}{346--362}.
\newblock


\bibitem[Myerson(1981)]%
        {Myerson81}
\bibfield{author}{\bibinfo{person}{Roger~B. Myerson}.} \bibinfo{year}{1981}\natexlab{}.
\newblock \showarticletitle{{Optimal Auction Design}}.
\newblock \bibinfo{journal}{\emph{Mathematics of Operations Research}} \bibinfo{volume}{6}, \bibinfo{number}{1} (\bibinfo{year}{1981}), \bibinfo{pages}{58--73}.
\newblock


\bibitem[Penn et~al\mbox{.}(2005)]%
        {pennPT2005}
\bibfield{author}{\bibinfo{person}{Michal Penn}, \bibinfo{person}{Maria Polukarov}, {and} \bibinfo{person}{Moshe Tennenholtz}.} \bibinfo{year}{2005}\natexlab{}.
\newblock \showarticletitle{Congestion games with failures}. In \bibinfo{booktitle}{\emph{Proceedings of the 6th ACM Conference on Electronic Commerce}}. \bibinfo{pages}{259--268}.
\newblock


\bibitem[Roughgarden(2020)]%
        {Roughgarden20}
\bibfield{author}{\bibinfo{person}{Tim Roughgarden}.} \bibinfo{year}{2020}\natexlab{}.
\newblock \showarticletitle{Transaction fee mechanism design for the Ethereum blockchain: An economic analysis of EIP-1559}.
\newblock \bibinfo{journal}{\emph{arXiv preprint arXiv:2012.00854}} (\bibinfo{year}{2020}).
\newblock


\bibitem[Roughgarden(2021)]%
        {Roughgarden21}
\bibfield{author}{\bibinfo{person}{Tim Roughgarden}.} \bibinfo{year}{2021}\natexlab{}.
\newblock \showarticletitle{Transaction fee mechanism design}.
\newblock \bibinfo{journal}{\emph{ACM SIGecom Exchanges}} \bibinfo{volume}{19}, \bibinfo{number}{1} (\bibinfo{year}{2021}), \bibinfo{pages}{52--55}.
\newblock
\newblock
\shownote{Full version at \url{https://arxiv.org/abs/2106.01340}}.


\bibitem[Shi and Chung(2023)]%
        {ChungSW23}
\bibfield{author}{\bibinfo{person}{Elaine Shi} {and} \bibinfo{person}{Hao Chung}.} \bibinfo{year}{2023}\natexlab{}.
\newblock \showarticletitle{What can cryptography do for decentralized mechanism design}.
\newblock \bibinfo{journal}{\emph{Innovations in Theoretical Computer Science}} (\bibinfo{year}{2023}).
\newblock


\bibitem[Shi et~al\mbox{.}(2023)]%
        {ShiCW23}
\bibfield{author}{\bibinfo{person}{Elaine Shi}, \bibinfo{person}{Hao Chung}, {and} \bibinfo{person}{Ke Wu}.} \bibinfo{year}{2023}\natexlab{}.
\newblock \showarticletitle{What can cryptography do for decentralized mechanism design}.
\newblock \bibinfo{journal}{\emph{Innovations in Theoretical Computer Science (ITCS)}} (\bibinfo{year}{2023}).
\newblock


\bibitem[Vickrey(1961)]%
        {Vickrey61}
\bibfield{author}{\bibinfo{person}{William Vickrey}.} \bibinfo{year}{1961}\natexlab{}.
\newblock \showarticletitle{{Counterspeculations, Auctions, and Competitive Sealed Tenders}}.
\newblock \bibinfo{journal}{\emph{Journal of Finance}} \bibinfo{volume}{16}, \bibinfo{number}{1} (\bibinfo{year}{1961}), \bibinfo{pages}{8--37}.
\newblock


\bibitem[Wu et~al\mbox{.}(2024)]%
        {WuSC24}
\bibfield{author}{\bibinfo{person}{Ke Wu}, \bibinfo{person}{Elaine Shi}, {and} \bibinfo{person}{Hao Chung}.} \bibinfo{year}{2024}\natexlab{}.
\newblock \showarticletitle{Maximizing Miner Revenue in Transaction Fee Mechanism Design}.
\newblock \bibinfo{journal}{\emph{ITCS}} (\bibinfo{year}{2024}).
\newblock


\bibitem[Yao(2018)]%
        {Yao18}
\bibfield{author}{\bibinfo{person}{Andrew~Chi{-}Chih Yao}.} \bibinfo{year}{2018}\natexlab{}.
\newblock \showarticletitle{An Incentive Analysis of some Bitcoin Fee Designs}.
\newblock \bibinfo{journal}{\emph{CoRR}}  \bibinfo{volume}{abs/1811.02351} (\bibinfo{year}{2018}).
\newblock
\showeprint[arxiv]{1811.02351}
\urldef\tempurl%
\url{http://arxiv.org/abs/1811.02351}
\showURL{%
\tempurl}


\end{thebibliography}

\appendix

\section{A Zoo of Simplicity Definitions} \label{sec:Zoo}

\subsection{Weaker Notions of Influence Proofness}
\label{sec:weaker-influence-proof}

In this appendix, we explore weaker versions of off-chain influence proofness.
Our primary purpose is not to provide thorough investigations into any of these notions, but merely to emphasize that off-chain influence proofness gives a very strong garantee on the incentives of the miner: she cannot raise additional revenue, even if she is able to influence the way users behave in a wide variety of ways, so long as users play in a user BNE (consistent with the block-building process).

Recall that off-chain influence proofness stipulates that (i) the users and miner stay on-chain, while (ii) the miner gets her optimal revenue even when considering alternative miner-strategies that use nontrivial off-chain mechanisms.
To capture the different alternative miner strategies (i.e., ways the miner may try to ``game the system''), we focus on part (ii).
Thus, we phrase this subsection as giving various \emph{optimality} properties.
See \autoref{fig:deviating-sets-of-equilibria} for a list of different sets of strategy profiles, 
$\Sigma^\onCG_{\text{Simple-and-Equilibrium}} 
,\Sigma^\onCG_{\text{Equilibrium}} 
,\Sigma^\onCG_{\text{On-Chain}} 
,\Sigma^\offCG_{\text{Off-Chain}}
$.
The set of equilibria respectively correspond to: 
truthful user BNEs in which the miner's strategy is on-chain, compliant, and a best-response to the users' strategies; 
arbitrary user BNEs in which the miner's strategy is on-chain and a best-response to the users' strategies;
arbitrary user BNEs in which the miner's strategy is on-chain, but \emph{need not} be a best response;
and finally arbitrary user BNEs of arbitrary miner strategies in the off-chain game.
(Note that $\Sigma^\onCG_{\text{Simple-and-Equilibrium}}$ may be empty if all equilibria of $\onCG$, but all other sets can be shown to be nonempty by general results.)

We consider optimality conditions for each such set, as follows:

\begin{definition}
    For each $E \in \{ \text{Simple-and-Equilibrium}
,  \text{Equilibrium}
,  \text{On-Chain} 
,  \text{Off-Chain}
\}$,
    We say that a user-BNE $\sigma$ (in either on on-chain or off-chain game) is \emph{$E$-miner-optimal} if $\sigma\in \Sigma_E$, and moreover, the miner's revenue under $\sigma$ is at least as high as the miner's revenue in any other strategy profile in $\Sigma_E$.
\end{definition}

\begin{figure}
    \centering
    \begin{align*}
       \Sigma^\onCG_{\text{Simple-and-Equilibrium}} 
       & = \big\{ \sigma = (s_\mi^\onCG, s_{\usr,1}^\onCG, \ldots, s_{\usr,n}^\onCG)
       \\ & \qquad \big|\text{ $\sigma$ is a user-BNE, $s_\mi^\onCG$ is compliant and 
            is the miner's best-response  }
       \\ & \qquad \ \ \text{to $(s_{\usr,1},\ldots,s_{\usr,n})$, and $s_{\usr,i}$ is truth-telling and DSIC for each $i=1,\ldots,n$} \big\}
       \\
       & \rotatebox{-90}{ \scalebox{2}{$\subseteq$} }
       \\
       \Sigma^\onCG_{\text{Equilibrium}} 
       & = \big\{ \sigma = (s_\mi^\onCG, s_{\usr,1}^\onCG, \ldots, s_{\usr,n}^\onCG) 
       \\ & \qquad \big|\text{ $\sigma$ is a user-BNE and $s_\mi^\onCG$ is 
            the miner's best-response to $(s_{\usr,1},\ldots,s_{\usr,n})$ } \big\}
       \\
       & \rotatebox{-90}{ \scalebox{2}{$\subseteq$} }
       \\
       \Sigma^\onCG_{\text{On-Chain}} 
       & = \big\{ \sigma = (s_\mi^\onCG, s_{\usr,1}^\onCG, \ldots, s_{\usr,n}^\onCG) 
       \\ & \qquad \big|\text{ $\sigma$ is a user-BNE } \big\}
       \\
       & \rotatebox{-90}{ \scalebox{2}{$\subseteq$} }
       \\
       \Sigma^\offCG_{\text{Off-Chain}} 
       & = \big\{ \sigma = (\Moff, s_{\usr,1}^\onCG, \ldots, s_{\usr,n}^\onCG) 
       \\ & \qquad \big|\text{ $\sigma$ is a user-BNE and off-chain influence proof } \big\}
    \end{align*}
    \caption{Sets of strategies a miner might be able to influence the equilibrium to.}
    \label{fig:deviating-sets-of-equilibria}
\end{figure}

Observe that a strategy profile $\sigma$ is off-chain influence proof if and only if it is Off-Chain-miner-optimal, and also, the miner's strategy in $\sigma$ is completely on-chain.
As mentioned in \autoref{sec:Definitions}, 
we thus observe that off-chain influence proofness gives a stronger grantee than all of the other optimality notions we consider.
Moreover, we observe that when $\sigma$ is on-chain (miner and user) simple, and also off-chain influence proof, then the optimal miner revenue from $\Sigma^\onCG_{\text{Simple-and-Equilibrium}} \ni \sigma$ equals the optimal miner revenue from 
$\Sigma^\offCG_{\text{Off-Chain}}$.
We believe this gives compelling reasons that everyone will ``just follow $\sigma$''; perhaps more compelling than any previous set of desiderata.

\subsection{Alternate Notions of Collusion Resistance} \label{sec:AlternateCollusion}

In this appendix, we explore various (\emph{weaker}) notions of collusion resistance than the one given by \autoref{def:strong-collusion-proof}.
All the notions discussed below capture collusion where agents in the cartel continue to remain strategic with each other even while colluding.

To begin with, we sketch an example of a mechanism that is not collusion proof (via \autoref{def:strong-collusion-proof}) but how agents in the cartel being strategic with each other could make colluding difficult.
We consider a miner and a user colluding in a simple posted-price mechanism for a good with an unlimited supply, in which all payments are transferred to the miner (which we define formally in \autoref{sec:PostedPrice}).
We argue that truthful bidding in the posted-price mechanism is not collusion-proof.
For a price $p$ posted by the mechanism, the miner can collude with a user $i$ whose value $v_i$ is less than $p$, via the following ``contract'' of collusion.
The miner recommends the user to bid $p$, thereby procuring the good.
The miner keeps (say) $\frac{v_i}{2}$ from the on-chain payment made by the user and rebates the rest (i.e., transfers $\frac{v_i}{2}$).
The user gets included and gets a positive utility even though his value is less than the price, and the miner is able to achieve a non-zero revenue by transacting from user $i$.
Both the agents in the cartel are better off from colluding, which shows that this mechanism fails strong collusion proofness.

However, when colluding under the above (informally specified) contract of collusion, truthfully reporting their value is no longer an equilibrium for the users in the cartel. Colluding users would be incentivized to tell the miner that their value is very close to zero, even if their values are well above the price. This way, the miner would only dock a payment close to zero, which is far lesser than the price $p$ the user would have to pay if he were truthful.

\subsubsection{Trustless Collusion Proofness} \label{sec:TrustlessColProof}

For the following definition and the other alternate notion of collusion resistance in \autoref{sec:WeakColProof}, we find defining collusion resistance through the off-chain game much more natural than the on-chain game.\footnote{Similar to off-chain influence proof, both weaker notions of collusion resistance can be viewed partly as a property of an on-chain user BNE since we claim no cartel containing the miner and a user would want to communicate off-chain. We will, therefore, abuse notation to say a user BNE in the on-chain game $\onCG$ satisfies trustless collusion proofness and weak collusion proofness (defined in \autoref{sec:WeakColProof}).}

\begin{definition}[1-1-Trustless Collusion Proofness] \label{def:TrustlessColProof}
    For some off-chain game $\offCG$ and distribution $\TypeDistr$,
    a user BNE $(\Moff, s_{\usr, 1}^{\offCG, \Moff}, \dots, s_{\usr, n}^{\offCG, \Moff})$ where $\Moff$ is a trivial off-chain mechanism is 1-1-trustless collusion proof\footnote{Similar to collusion proofness (\autoref{fn:abColProof}), the above definition can be extended to $a$-$b$-trustless collusion proofness.} if, for any off-chain mechanism $\widetilde{\mathcal{M}}_{\mathsf{off}}$ (called the \emph{collusion contract}) and any user Bayes-Nash equilibrium $(\widetilde{\mathcal{M}}_{\mathsf{off}},\widetilde{s}_{\usr, 1}^{\offCG, \widetilde{\mathcal{M}}_{\mathsf{off}}}, \dots, \widetilde{s}_{\usr, n}^{\offCG, \widetilde{\mathcal{M}}_{\mathsf{off}}})$ in $\widetilde{\mathcal{M}}_{\mathsf{off}}$, at least one of the following conditions holds:
    \begin{enumerate}
        \item The miner prefers the the trivial mechanism $\Moff$ over the  collusion contract $\widetilde{\mathcal{M}}$.
        \[
            \Es{v \sim \TypeDistr}{\rev^\offCG(\Moff; s_\usr^{\offCG, \Moff}(v))}
            \ge
            \Es{v \sim \TypeDistr}{\rev^\offCG(\widetilde{\mathcal{M}}_{\mathsf{off}}; \widetilde{s}_\usr^{\offCG, \widetilde{\mathcal{M}}_{\mathsf{off}}}(v))}.
        \]
        \item No user $i$, irrespective of his value $v_i$, prefers the collusion contract $\widetilde{\mathcal{M}}$ over the trivial mechanism $\Moff$. Thus, for all users $i$,
        \begin{equation*}
            \notag
            \begin{split}
                &\Es{v_{-i} \sim \TypeDistr^{n-1}}{v_i \cdot X_i^{\offCG}(\Moff; s_\usr^{\offCG, \Moff}(v)) - P_i^{\offCG}(\Moff; s_\usr^{\offCG, \Moff}(v))} \\
                & \qquad \ge
                \Es{v_{-i} \sim \TypeDistr^{n-1}}{v_i \cdot X_i^{\offCG}(\widetilde{\mathcal{M}}_{\mathsf{off}};\widetilde{s}_{\usr}^{\offCG, \widetilde{\mathcal{M}}_{\mathsf{off}}}(v)) - P_i^{\offCG}(\widetilde{\mathcal{M}};\widetilde{s}_{\usr}^{\offCG, \widetilde{\mathcal{M}}_{\mathsf{off}}}(v))}.
            \end{split}
        \end{equation*}
    \end{enumerate}
\end{definition}
Summarizing the above definition, a user Bayes-Nash equilibrium is trustless collusion proof if signing a collusion contract leads to an equilibrium that is adverse either to the miner or to the user colluding with the miner.

Suppose that a strategy profile $(\Moff, s_{\usr, 1}^{\offCG, \Moff}, \dots, s_{\usr, n}^{\offCG, \Moff})$ in the off-chain game $\offCG$ for some distribution $\TypeDistr$ satisfies off-chain influence proof.
Then, the miner prefers $(\Moff, s_{\usr, 1}^{\offCG, \Moff}, \dots, s_{\usr, n}^{\offCG, \Moff})$ over any other user BNE induced by any alternate off-chain mechanism.
The first condition in the definition of trustless collusion proofness (\autoref{def:TrustlessColProof}) is automatically satisfied, leading to the following conclusion.

\begin{corollary} \label{thm:OffCIP-TCP}
    If $\sigma = (\Moff, s_{\usr, 1}^{\offCG, \Moff}, \dots, s_{\usr, n}^{\offCG, \Moff})$ is off-chain influence proof for some off-chain game $\offCG$ and distribution $\TypeDistr$ of values, then, $\sigma$ is also trustless collusion proof.
\end{corollary}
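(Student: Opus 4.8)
The plan is to observe that this corollary is essentially definitional: the entire argument reduces to checking that off-chain influence proofness supplies every ingredient required by trustless collusion proofness. First I would verify the structural preconditions. Trustless collusion proofness (\autoref{def:TrustlessColProof}) is a property of a user BNE $\sigma = (\Moff, s_{\usr,1}^{\offCG,\Moff}, \dots, s_{\usr,n}^{\offCG,\Moff})$ in which $\Moff$ is a \emph{trivial} off-chain mechanism. Both of these facts—that $\Moff$ is trivial and that $\sigma$ is a user BNE—are guaranteed directly by \autoref{item:OCIP-on-chain} of \autoref{def:OffChainInfluenceProof}. Hence $\sigma$ is a legitimate candidate to which the definition of trustless collusion proofness can even be applied.

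Next I would unpack the core disjunction in \autoref{def:TrustlessColProof}: for every alternative off-chain mechanism $\widetilde{\mathcal{M}}_{\mathsf{off}}$ (the ``collusion contract'') and every user BNE it induces, at least one of two inequalities must hold—either the miner weakly prefers the trivial mechanism $\Moff$, or no single user $i$ prefers the contract $\widetilde{\mathcal{M}}_{\mathsf{off}}$. The key observation is that the first of these two alternatives is word-for-word identical to the revenue-optimality inequality in part~2 of \autoref{def:OffChainInfluenceProof}, namely $\Es{v\sim\TypeDistr}{\rev^\offCG(\Moff; s_\usr^{\offCG,\Moff}(v))} \ge \Es{v\sim\TypeDistr}{\rev^\offCG(\widetilde{\mathcal{M}}_{\mathsf{off}}; \widetilde{s}_\usr^{\offCG,\widetilde{\mathcal{M}}_{\mathsf{off}}}(v))}$. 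Since $\sigma$ is off-chain influence proof, this inequality holds for \emph{every} $\widetilde{\mathcal{M}}_{\mathsf{off}}$ and \emph{every} user BNE it induces. Therefore the first disjunct always holds, so a fortiori at least one of the two disjuncts holds, which is exactly what trustless collusion proofness demands.

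There is essentially no obstacle here—the content is entirely a matter of matching conditions across the two definitions. The only point requiring care is the alignment of quantifiers: I would confirm that the universally-quantified object ``for all off-chain mechanisms $\widetilde{\mathcal{M}}_{\mathsf{off}}$ and all induced user BNE'' in off-chain influence proofness is the same object ranged over in trustless collusion proofness, which it is. The conceptual takeaway worth emphasizing in the writeup is that because trustless collusion proofness only requires that \emph{one} of its two disjuncts hold, the strong revenue guarantee furnished by off-chain influence proofness renders the second disjunct (concerning the colluding user's preferences) entirely superfluous: the miner never stands to gain from any off-chain collusion contract, irrespective of how the colluding users would respond.
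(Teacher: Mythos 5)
Your proposal is correct and matches the paper's own argument exactly: the paper likewise observes that part~2 of \autoref{def:OffChainInfluenceProof} is precisely the first disjunct of \autoref{def:TrustlessColProof}, quantified over the same alternative off-chain mechanisms and induced user BNEs, so that disjunct always holds and the second is never needed. Your additional check that \autoref{item:OCIP-on-chain} supplies the structural preconditions (triviality of $\Moff$ and the user-BNE property) is a harmless and correct elaboration of what the paper leaves implicit.
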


\subsubsection{Weak Collusion Proofness} \label{sec:WeakColProof}

In this section, we address the issue of secrecy with trustless collusion proofness and the natural collusion resistance provided by off-chain influence proofness.
Trustless collusion proofness models situations where users outside a cartel learn about the collusion and adjust their strategies to optimize their utilities in the subsequent off-chain game following the miner's announcement of the collusion contract.
However, if the cartel's formation is kept secret, assuming all users adjust their strategies based on the collusion contract might not be accurate.
To address this, we assume that agents outside the cartel cannot modify their actions based on the miner's announcement.
Note that we define weak collusion proofness only for user BNE induced by trivial off-chain mechanisms.
In this scenario, all users, particularly those outside the cartel, would play an abstaining bid.
An abstaining bid is a valid strategy for users outside the cartel in any off-chain mechanism declared by the miner.
Therefore, assuming users cannot modify their strategies based on the miner's announcement does not contradict the syntax of the framework.

\begin{definition}[1-1-Weak Collusion Proofness] \label{def:WeakColProof}
    Let $\offCG$ be the off-chain game induced by some block-building process for some distribution $\TypeDistr$ of user values. A strategy profile $(\Moff, s_{\usr, 1}^{\offCG, \Moff}, \allowbreak \dots, \allowbreak s_{\usr, n}^{\offCG, \Moff})$ in user BNE for a trivial off-chain mechanism $\Moff$ is 1-1-weak collusion proof (or weak collusion proof) if for all off-chain mechanisms $\widetilde{\mathcal{M}}_\mathsf{off}$ and  all users $i$, the following property holds. Consider any strategy $\widetilde{s}_{\usr, 1}^{\offCG, \widetilde{\mathcal{M}}_{\mathsf{off}}}$ for $i$ in $\widetilde{\mathcal{M}}_{\mathsf{off}}$ %
    which is a best response for user $i$ to $\widetilde{\mathcal{M}}_{\mathsf{off}}$ conditioned on other users being oblivious to the non-trivial off-chain component and playing the strategy $s_{\usr, -i}^{\offCG, \Moff}$.
    In other words, suppose that for all strategies $\hat{s}_{\usr, i}^{\offCG, \widetilde{\mathcal{M}}_{\mathsf{off}}}$ for user $i$, we have
    \begin{equation}
        \notag
        \begin{split}
            &\Es{v_{-i} \sim \TypeDistr^{n-1}}{v_i \cdot X_i^{\offCG}(\widetilde{\Moff}; \widetilde{s}_{\usr, i}^{\offCG, \widetilde{\mathcal{M}}_{\mathsf{off}}}(v_i), s_\usr^{\offCG, \Moff}(v_{-i})) - P_i^{\offCG}(\Moff; \widetilde{s}_{\usr, i}^{\offCG, \widetilde{\mathcal{M}}_{\mathsf{off}}}(v_i), s_\usr^{\offCG, \Moff}(v_{-i}))} \\
            & \qquad\ge \Es{v_{-i} \sim \TypeDistr^{n-1}}{v_i \cdot X_i^{\offCG}(\widetilde{\Moff}; \hat{s}_{\usr, i}^{\offCG, \widetilde{\mathcal{M}}_{\mathsf{off}}}(v_i), s_\usr^{\offCG, \Moff}(v_{-i})) - P_i^{\offCG}(\Moff; \hat{s}_{\usr, i}^{\offCG, \widetilde{\mathcal{M}}_{\mathsf{off}}}, s_\usr^{\offCG, \Moff}(v_{-i}))}.
        \end{split}
    \end{equation}
    Then, user $i$ and miner cannot both receive higher utility under the colluding strategies; formally, one of the following two properties must hold: 
    \begin{enumerate}
        \item the miner has a higher expected revenue in $\Moff$ than in $\widetilde{\mathcal{M}}_{\mathsf{off}}$, i.e,
        \[
            \Es{v \sim \TypeDistr}{\rev^\offCG(\Moff; s_\usr^{\offCG, \Moff}(v))}
            \ge
            \Es{v \sim \TypeDistr}{\rev^\offCG(\widetilde{\mathcal{M}}_{\mathsf{off}}; \widetilde{s}_{\usr, i}^{\offCG, \widetilde{\mathcal{M}}_{\mathsf{off}}}(v_i), s_{\usr, -i}^{\offCG, \Moff}(v_{-i}))},
        \]
        \item or, user $i$ has a higher expected utility in $\Moff$ than in $\widetilde{\mathcal{M}}_{\mathsf{off}}$, i.e,
        \begin{equation*}
            \notag
            \begin{split}
                &\Es{v_{-i} \sim \TypeDistr^{n-1}}{v_i \cdot X_i^{\offCG}(\Moff; s_\usr^{\offCG, \Moff}(v)) - P_i^{\offCG}(\Moff; s_\usr^{\offCG, \Moff}(v))} \\
                & \qquad \ge
                \Es{v_{-i} \sim \TypeDistr^{n-1}}{v_i \cdot X_i^{\offCG}(\widetilde{\mathcal{M}}_{\mathsf{off}};\widetilde{s}_{\usr, i}^{\offCG, \widetilde{\mathcal{M}}_{\mathsf{off}}}(v_i), s_{\usr, -i}^{\offCG, \Moff}(v_{-i})) - P_i^{\offCG}(\widetilde{\mathcal{M}};\widetilde{s}_{\usr, i}^{\offCG, \widetilde{\mathcal{M}}_{\mathsf{off}}}(v_i), s_{\usr, -i}^{\offCG, \Moff}(v_{-i}))}.
            \end{split}
        \end{equation*}
    \end{enumerate}
\end{definition}

Weak collusion proofness is similar to the notion of a semi-strong Nash equilibrium \citep{pennPT2005}.
A Nash equilibrium is semi-strong if fixing the strategies of all agents not in the coalition and for any strategy played by agents in the collusion, at least one of the following two conditions are satisfied:
\begin{enumerate}
    \item At least one of the agents in the coalition achieve a larger utility prior to colluding that after.
    \item If all agents in the coalition bag a larger utility, at least one colluding agent has an incentive to betray the cartel and deviate further, increasing their individual utility.
\end{enumerate}

To our knowledge,
weak collusion proofness is not a straightforward consequence of off-chain influence proofness.
While off-chain influence proofness compares the miner's revenue in two user BNEs, weak collusion proofness compares the miner's (and the colluding user's) revenue between a user BNE, and a deviation from the user BNE just for the miner and the colluding user.
Other users' strategy profiles need not be an equilibrium given the new strategies played by the miner and the colluding user.
On the other hand, strong collusion proofness implies weak collusion proofness.

Truthful bidding by users and the miner setting the monopoly reserve ($a_\mi$ such that the virtual value $\phi(a_\mi) = 0$) in the cryptographic $(k+1)\textsuperscript{th}$-price auction for a regular distribution $\TypeDistr$ is an example of a user Bayes-Nash equilibrium that is not strong collusion proof, but satisfies weak collusion proofness (\autoref{thm:C2PAColPos}).

\subsubsection{Example: Trustless and Weak Collusion Proofness in C(k+1)PA}

In this section, we explore the weaker collusion resistance guarantees in the truth-telling equilibrium in the C$(k+1)$PA.
Remember that $\sigma^\onCG$ is the user BNE corresponding to truthful bidding by users and the miner playing the compliant strategy corresponding to the monopoly reserve of the regular value distribution $\TypeDistr$.

We will argue that $\sigma^\onCG$ satisfies both trustless and weak collusion proofness.
Trustless collusion proofness is straightforward since $\sigma^\onCG$ satisfies off-chain influence proofness (\autoref{thm:OffCIP-TCP}).
Next, we argue that $\sigma^\onCG$ satisfies weak collusion proofness.
Consider a coalition between the miner and a user $i$. An analogous argument to on-chain miner simplicity (\autoref{thm:C2PAPosSummary}) would prove that truthful bidding for users $j \neq i$ is still $j$'s optimal response in a modified environment where all users apart from $i$ have to submit exactly one bid, regardless of the actions played by the miner and user $i$.
Since user $i$ knows about the collusion contract, user $i$ would play a best response to the miner's strategy and truthful bidding by users apart from $i$.
Thus, the strategy profile played by users is a user Bayes-Nash equilibrium in some mechanism (where the colluding user $i$ is treated differently from others).
The miner achieves the optimal revenue over all user Bayes-Nash equilibria by playing $s_{\mi}^{\mathsf{comp}}(a_\mi)$, where $a_\mi$ is the monopoly reserve of the regular value distribution $\TypeDistr$.
Thus, the miner cannot increase her expected revenue through collusion.

\begin{proposition} \label{thm:C2PAColPos}
$(s_{\mi}^{\mathsf{comp}}(a_\mi), v_1, \allowbreak \dots, \allowbreak v_n)$ is both trustless collusion proof and weak collusion proof for a regular distribution $\TypeDistr$ of values, where $a_\mi$ is the monopoly reserve.
\end{proposition}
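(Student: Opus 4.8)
The plan is to handle the two properties separately, leveraging the structural results already established for C$(k+1)$PA. Trustless collusion proofness is immediate: \autoref{thm:C2PAPosSummary} shows that $\sigma^\onCG$ is off-chain influence proof, and \autoref{thm:OffCIP-TCP} establishes that off-chain influence proofness implies trustless collusion proofness. Concretely, for any proposed collusion contract $\widetilde{\mathcal{M}}_{\mathsf{off}}$ and any induced user BNE, off-chain influence proofness guarantees that the miner's revenue cannot strictly exceed its revenue under $\sigma^\onCG$, so the first disjunct of \autoref{def:TrustlessColProof} is always satisfied.

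For weak collusion proofness, the key is to reuse the dominant-strategy argument from the proof of on-chain miner simplicity. Fix any proposed collusion contract $\widetilde{\mathcal{M}}_{\mathsf{off}}$ and a colluding user $i$, and work in the modified environment in which every non-colluding user $j \neq i$ submits exactly one bid (as in the discussion surrounding \autoref{fn:OneBid}). Because the block-building process operates in the miner-gatekeeper model, the miner's on-chain action --- which bids to forward, which to censor, and which fake bids to inject --- can depend only on the identifiers of submitted bids, not their contents. When every non-colluding user submits exactly one bid, the set of such identifiers is fixed, so the miner's action toward the non-colluding users is effectively constant. Fixing this constant action together with the behavior of the colluding user $i$, the induced allocation and payment rules for each $j \neq i$ are those of a $(k+1)$-th price auction with some reserve, and hence truthful bidding remains a dominant strategy for $j$. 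This is precisely the hypothesis required by \autoref{def:WeakColProof}: the non-colluding users' truthful strategies are best responses even while they are oblivious to the collusion.

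Next I would assemble these best responses into a single auction and invoke Myerson optimality. Since each non-colluding user best-responds by bidding truthfully and the colluding user $i$ plays its best response $\widetilde{s}_{\usr,i}^{\offCG,\widetilde{\mathcal{M}}_{\mathsf{off}}}$ to the miner's contract and the others' truthful play, the entire profile constitutes a user BNE of the mechanism induced by $\widetilde{\mathcal{M}}_{\mathsf{off}}$. This induced mechanism always outputs a feasible allocation (at most $k$ users included), so it is a valid auction over the $k$-capacity environment; moreover all payments (on-chain and off-chain) flow to the miner, so the miner's total revenue equals the total expected payment of the real users. By the revelation principle (\autoref{thm:DirectRevelation}) the induced interim rules coincide with those of a BIC mechanism, and by \autoref{thm:Regular-RevOpt} the revenue-optimal BIC mechanism over a regular distribution is exactly the $(k+1)$-th price auction with monopoly reserve, i.e., $\sigma^\onCG$ itself. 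Therefore the miner's revenue under the collusion contract cannot exceed its revenue under $\sigma^\onCG$, which establishes the first disjunct of \autoref{def:WeakColProof} and completes the proof.

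The main obstacle I anticipate is the careful handling of the modified ``exactly one bid'' environment. Outside this modification, a non-colluding user could inject fabricated bids to manipulate the miner's identifier-dependent action (see \autoref{fn:OneBid}), which would break the claim that truthful bidding is dominant for the non-colluding users and hence the BNE-and-revenue-optimality argument. Verifying that this restriction is innocuous --- that the strategies in $\sigma^\onCG$ transfer faithfully to the modified environment and that the revenue comparison is unaffected --- is the delicate point, exactly as in the proof of on-chain miner simplicity in \autoref{thm:C2PAPosSummary}.
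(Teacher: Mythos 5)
Your proposal is correct and follows essentially the same route as the paper: trustless collusion proofness via \autoref{thm:C2PAPosSummary} combined with \autoref{thm:OffCIP-TCP}, and weak collusion proofness by reusing the ``exactly one bid'' modified-environment argument from on-chain miner simplicity to show the non-colluding users still bid truthfully, so the colluding profile is a user BNE of some mechanism whose revenue is bounded by \autoref{thm:Regular-RevOpt}. Your explicit note that all on-chain and off-chain payments accrue to the miner (so Myerson's revenue bound applies to the miner's total take) is a useful detail the paper leaves implicit.
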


\section{Separating the Simplicity Definitions through Examples} \label{sec:OtherEg}

In this section, we run over various standard mechanisms like the posted-price mechanism and the (plaintext) $(k+1)\textsuperscript{th}$-price auction, as well as other more artificial examples.
The primary focus of the examples discussed below
is to complete a separation of the three core definitions in \autoref{sec:Definitions}---on-chain user simplicity, on-chain miner simplicity, and off-chain influence proofness---i.e., show that the properties are logically independent.

See \autoref{tab:all-combinations-of-properties} for a summary of the separation.

\begin{table}[htb]
    \begingroup
    \renewcommand{\arraystretch}{1.5} 
    \begin{center}
    \begin{tabular}{cccc}
    \toprule
    & \ \ \multirow{2}{*}{\makecell{Off-Chain\\}} & \multicolumn{2}{c}{On-Chain} 
    \\
    \cmidrule{3-4}
    Mechanism &  \ \ \makecell{Influence \\Proof} & \makecell{User\\Simple} \ \ & \makecell{Miner\\Simple} \\
    \midrule
    C2PA  & \cmark & \cmark & \cmark \\
    EIP-1559  & \xmark & \cmark & \cmark \\
    P2PA  & \cmark & \cmark & \xmark \\
    BoMB, posted-price equilibrium & \cmark & \xmark & \cmark \\
    Winner-pays-bid mechanism, $\sigma^{\onCG}_{0}$ & \cmark & \xmark & \xmark \\
    SR2PA, second-price equilibrium & \xmark & \cmark & \xmark \\
    BoMB, winner-pays-bid equilibrium & \xmark & \xmark & \cmark \\
    SR2PA, first-price equilibrium & \xmark & \xmark & \xmark \\
    \bottomrule
    \end{tabular}
    \end{center}
    \endgroup
    
    \caption{Different auctions and equilibria satisfying all combinations of our core desiderata.}
    \label{tab:all-combinations-of-properties}

\begin{minipage}{\textwidth}
    \footnotesize
    \textbf{Notes:}
    The block-building processes of C2PA (\autoref{sec:C2PA}), P2PA (\autoref{sec:P2PA}), the winner-pays-bid mechanism (\autoref{sec:FPA}) and SR2PA (\autoref{sec:SR2PA}) allow the miner to set a reserve on-chain, while EIP-1559 (\autoref{sec:EIP1559}) and BoMB (\autoref{sec:BOMB}) admit no advice from the miner.
    In C2PA, P2PA, the winner-pays-bid mechanism and the second-price equilibrium on SR2PA (\autoref{sec:SR2PA2PA}) we consider the miner's compliant strategy of setting the optimal reserve. In both equilibria of BoMB and EIP-1559, we consider the miner playing the compliant strategy by giving no advice, while in the first-price equilibrium of SR2PA (\autoref{sec:SR2PA1PA}), we consider the miner fabricating a fake bid just below the highest bid.
    In C2PA, EIP-1559, P2PA and the second-price equilibrium of SR2PA we consider the truth-telling strategy of users. In the winner-pays-bid mechanism, the winner-pays-bid equilibrium in BoMB (\autoref{sec:BOMBWPB}) and SR2PA, we consider user shading their bid optimally according to the equilibrium in the winner-pays-bid mechanism from classical settings \citep{chawlaH13}. In the posted-price equilibrium in BoMB (\autoref{sec:BOMBPP}), users willing to get included bid the reserve price while the rest bid zero.
\end{minipage}

\end{table}

\subsection{The Plaintext $(k+1)\textsuperscript{th}$-price Auction} \label{sec:P2PA}

The \emph{plaintext} $(k+1)\textsuperscript{th}$ \emph{price auction} (P$(k+1)$PA) is identical to C$(k+1)$PA except that the bids are submitted in the plaintext model.

\begin{proposition} \label{thm:P2PASummary}
     For all regular distributions $\TypeDistr$ with a monopoly reserve $a_\mi$, the strategy profile $\sigma^\onCG = (s_{\mi}^{\mathsf{comp}}(a_\mi), v_1, \allowbreak \dots, \allowbreak v_n)$ in the on-chain game $\onCG$ induced by P$(k+1)$PA is on-chain user simple and off-chain influence proof.
     However, $\sigma^\onCG$ is not necessarily on-chain miner simple and collusion proof for all regular distributions $\TypeDistr$.
\end{proposition}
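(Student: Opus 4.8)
The plan is to observe that the two positive claims are inherited essentially verbatim from the analysis of C$(k+1)$PA, while the two negative claims exploit the single feature that distinguishes the plaintext model: the miner can condition her fabricated bids on the actual submitted bids.

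For \emph{on-chain user simplicity} and \emph{off-chain influence proofness}, I would note that neither argument in \autoref{thm:C2PAPosSummary} uses the miner-gatekeeper assumption. Conditioned on the compliant miner strategy $s_{\mi}^{\mathsf{comp}}(a_\mi)$, the block-building process $\bBuild$ is exactly the $(k+1)$st-price auction with reserve $a_\mi$, which is DSIC and individually rational for users; hence $\sigma^\onCG$ is a truthful user BNE and is on-chain user simple. For off-chain influence proofness, recall that P$(k+1)$PA transfers all collected payments to the miner and burns nothing, so in any off-chain mechanism $\Moff$ and any induced user BNE the miner's revenue equals the total expected payment of the (at most $k$) real winners. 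By the direct revelation principle and \autoref{thm:revenue-equals-virtual-welfare}, this total payment equals the expected virtual welfare of a feasible $k$-unit allocation rule, which is maximized pointwise by allocating the users with the top $k$ positive virtual values --- exactly what $\sigma^\onCG$ does, by \autoref{thm:Regular-RevOpt}. Thus $\sigma^\onCG$ already attains the optimal revenue over all user BNEs, so no off-chain deviation can improve it.

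For the failure of \emph{on-chain miner simplicity}, I would exhibit the classic shill-bid deviation, working with $n \ge k+1$ and a typical regular distribution such as $\TypeDistr = U[0,1]$. Freeze the users at their truthful bids. Whenever the $k$th-largest submitted bid $b^{(k)}$ strictly exceeds $\max\{b^{(k+1)}, a_\mi\}$ --- an event of positive probability --- the miner fabricates a single bid $b^*$ with $\max\{b^{(k+1)}, a_\mi\} < b^* < b^{(k)}$. This fake bid becomes the new $(k+1)$st-highest bid, so it wins nothing and the miner pays nothing for it, while each of the $k$ real winners now pays $b^*$ instead of $\max\{b^{(k+1)}, a_\mi\}$. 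The miner's revenue $\rev^\onCG$ strictly increases, so $s_{\mi}^{\mathsf{comp}}(a_\mi)$ is not a best response to truthful users and $\sigma^\onCG$ is not on-chain miner simple. The conceptual point I would emphasize --- and the subtlety that makes this example genuinely informative --- is that this does \emph{not} contradict off-chain influence proofness: on-chain miner simplicity pits the deviating miner against users frozen at their truthful strategies (who are now out of equilibrium, hence exploitable), whereas off-chain influence proofness forces the users to re-equilibrate, at which point they shade their bids and revenue equivalence (\autoref{thm:RevenueEquivalence}) caps the miner at the Myerson optimum.

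Finally, for the failure of \emph{strong collusion proofness}, I would reuse the same deviation. Consider a colluding user $i$ with value $v_i < a_\mi$, so that $i$ is never allocated and contributes zero to the cartel's joint utility under $\sigma^\onCG$. Let the miner play the shill-bid strategy $\widetilde{s}_\mi^\onCG$ above against the remaining (truthful) users while $i$ bids truthfully; since $i$'s bid is below reserve it does not affect the top-$k$ ranking. With positive probability over $v_{-i}$ the miner's revenue strictly increases exactly as before, while user $i$'s utility stays zero, so the cartel's joint utility strictly increases, violating \autoref{def:strong-collusion-proof}. (Alternatively, the grand-auctioneer argument of \autoref{thm:C2PACol} carries over unchanged.) The main obstacle throughout is not any calculation but articulating cleanly why a revenue-improving on-chain deviation available to a plaintext miner is consistent with off-chain influence proofness; once the ``users must re-equilibrate'' distinction is made precise, the remaining steps are routine.
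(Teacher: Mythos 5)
Your proposal is correct, and the two positive claims plus the reduction of collusion-failure to the C$(k+1)$PA analysis match the paper, which simply asserts that those three arguments carry over verbatim (noting, as you do, that off-chain influence proofness is insensitive to the cryptographic model because any off-chain mechanism still induces a user BNE whose total payments are capped by the Myerson-optimal virtual welfare). The one place you genuinely diverge is the witness for the failure of on-chain miner simplicity: you use the classic shill bid $b^*$ wedged into the interval $\big(\max\{b^{(k+1)}, a_\mi\},\, b^{(k)}\big)$, which only bites when the block is full and hence needs $n\ge k$ bidders above the reserve with positive probability, whereas the paper has the miner set the reserve advice ex post to $\tilde{a}_\mi = \argmax_{1\le i\le k} i\cdot b^{(i)}$, i.e., the revenue-optimal posted price among the realized bids. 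The paper's deviation is weakly improving pointwise and strictly improving for any $n\ge 1$ (even a single bidder above $a_\mi$ gets squeezed to pay her bid), so it is the more robust counterexample; yours is equally valid for the stated ``there exists a regular distribution'' claim once $n$ is large enough, and has the pedagogical advantage of isolating exactly why plaintext access matters. Your route (1) for collusion failure (a colluding user with $v_i<a_\mi$ free-riding on the shill deviation) is also a sound alternative to the paper's $\phi^{-1}(v_i)$ grand-auctioneer argument, and your closing remark on why an on-chain revenue-improving deviation against \emph{frozen} truthful users is consistent with off-chain influence proofness against \emph{re-equilibrating} users is precisely the right reconciliation.
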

The proofs for on-chain user simplicity, off-chain influence proofness and lack of collusion proofness are identical to the analogous claims for C$(k+1)$PA.
Note that off-chain influence proofness does not depend on the cryptographic model.
Since the miner commits to her behaviour in the off-chain game and users best respond to the miner's commitment, a user BNE is always induced in the off-chain game.
Since $\sigma^\onCG$ is the revenue optimal equilibrium, the miner would not want to induce a different equilibrium in the off-chain game.

On-chain miner simplicity on the other hand depends on the cryptographic model.
By being able to look at the bids before playing her action, the miner can fabricate bids to increase her revenue \citep{AkbarpourL20}.
Given a set of bids $b$, the miner sets the reserve $\tilde{a}_\mi = \arg \max_{1 \le i \le k} i \cdot b^{(i)}$, where $b^{(i)}$ is the $i$\textsuperscript{th}-highest bid.
The miner gets a revenue equal to $\max_{1 \le i \le k} i \cdot b^{(i)} \ge k \cdot b^{(k)} \ge k \cdot b^{(k+1)}$, the miner's revenue if all $k$ slots get filled.
Even if only $j$ bids are larger than the monopoly reserve $a_\mi$ (i.e, $b^{(j)} \ge a_\mi \ge b^{(j+1)}$), $\max_{1 \le i \le k} i \cdot b^{(i)} \ge k \cdot b^{(k)} \ge j \cdot b^{(j)} \ge j \cdot a_\mi$.

Before we conclude the section, we showcase a different user Bayes-Nash equilibrium in the on-chain game when $k = 1$.
Suppose that the miner sets the reserve just below the largest bid and users best respond to the above miner's strategy.
The highest bid wins and since the reserve is (just below) the highest bid, the winner pays his bid.
Thus, the mechanism reduces to the first-price auction.
Thus, the user BNE equilibrium discussed in the winner-pays-bid mechanism (\autoref{sec:FPA}) is induced on-chain.

\subsection{The Posted-Price Mechanism} \label{sec:PostedPrice}

We discuss the posted-price mechanism for a block with unlimited supply.
The posted-price mechanism differs from EIP-1559 in two ways---the miner sets the price $a_\mi$ for inclusion in the block as part of her advice, and all payments collected from users are transferred to the miner, as opposed to getting burnt.
We discuss the posted-price mechanism in both the plaintext and the miner-gatekeeper model.

\begin{definition}[Posted-Price Mechanism with Unlimited Supply] \label{def:PostedPrice}
The block-building algorithm $\bBuild$ of the posted-price mechanism is as follows:
    \begin{itemize}
        \item The miner informs the mechanism of the price $a_\mi$ she wants to post. $\mathsf{Adv}_{\mi} = \R_{\ge 0}$.
        
        \item $\mathsf{Bid}_{\usr} = \R_{\ge 0}$. Bids consist of a non-negative real number each. The \emph{plaintext posted-price mechanism} elicits bids in the plaintext model, while the \emph{cryptographic posted-price mechanism} collects bids in the miner-gatekeeper model.
        
        \item All bids greater than the price $a_\mi$ are included in the block. For a set of bids $b$,
        \begin{equation*}
            \notag
            \begin{split}
                X_i(a_\mi ;b_i, b_{-i}) = \begin{cases}
                    1 & \text{if } b_i \geq a_\mi \\
                    0 & \text{otherwise}
                \end{cases}
            \end{split}
        \end{equation*}
        \item All included bids pay the price $a_\mi$ posted by the mechanism. In other words, for a set of bid $b$ submitted to the mechanism,
        \begin{equation*}
            \notag
            \begin{split}
                P_i(a_\mi ;b_i, b_{-i}) = \begin{cases}
                    p & \text{if } b_i \geq a_\mi \\
                    0 & \text{otherwise}
                \end{cases}
            \end{split}
        \end{equation*}
        \item All payments made by the users are transferred to the miner.
        \[
        \mathsf{Rev}(a_\mi; b) = \sum_i P_i(a_\mi ;b)
        \]
    \end{itemize}
\end{definition}

We will analyze the simplicity notions of the strategy profile $\sigma^\onCG = (s_{\mi}^{\mathsf{comp}}(a_\mi), v_1, \dots, v_n)$ for the monopoly reserve $a_\mi$ for regular distributions $\TypeDistr$ of user values in both the cryptographic models.
\subsubsection{The Cryptographic Posted-Price Mechanism} \label{sec:CPostedPrice}

\begin{proposition} \label{thm:CPostedSummary}
    The strategy profiles $\sigma^\onCG$ is on-chain simple and off-chain influence proof for all regular distributions $\TypeDistr$.
    However, there exist regular distributions $\TypeDistr$ such that $\sigma^\onCG$ is not strongly collusion proof.
\end{proposition}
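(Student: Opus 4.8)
The plan is to handle the three claims separately while reusing the machinery developed for the cryptographic $(k+1)$-th price auction, since the unlimited-supply posted-price mechanism (\autoref{def:PostedPrice}) is exactly the C$(k+1)$PA in the regime $k \ge n$: with at most $n \le k$ bids the cardinality constraint never binds, the $(k+1)$-th highest bid is $0$, and every included user pays only $\max\{0, a_\mi\} = a_\mi$. First, for on-chain user simplicity I would note that, conditioned on the miner playing $s_{\mi}^{\mathsf{comp}}(a_\mi)$, the induced mechanism is a posted price at $a_\mi$: a user with $v_i \ge a_\mi$ is included and pays $a_\mi \le v_i$ (nonnegative utility), while a user with $v_i < a_\mi$ stays out with utility $0$, so bidding $v_i$ is a weakly dominant and individually rational strategy. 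Hence $\sigma^\onCG$ is a user BNE and on-chain user simple.

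For on-chain miner simplicity, the crucial point is that the miner-gatekeeper model forbids the miner from conditioning her reserve on the realized bid values---this is exactly what separates the cryptographic posted price from the plaintext one, and I would flag explicitly that the plaintext variant \emph{fails} this step, since there the miner could raise the price adaptively when realized demand is high. Following the argument of \autoref{thm:C2PAPosSummary}, I would pass to the modified environment in which each user submits exactly one bid; there the miner's action depends only on the (now constant) set of bidders, so truthful bidding remains a dominant strategy for the users against any miner deviation $\widetilde{s}_\mi^\onCG$. Thus $\sigma^\onCG$ stays a user BNE under every miner strategy, and by Myerson optimality (\autoref{thm:Regular-RevOpt} specialized to $k \ge n$, which names the monopoly-reserve posted price as the revenue-optimal mechanism for regular i.i.d.\ values) no miner deviation can raise revenue. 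Concretely: fabricated bids net zero (their payments are subtracted from miner revenue), censoring a real bidder forfeits expected revenue $a_\mi(1 - F_{\TypeDistr}(a_\mi)) > 0$, and any alternative reserve $r$ yields per-user revenue $r(1 - F_{\TypeDistr}(r))$, whose derivative $1 - F_{\TypeDistr}(r) - r f_{\TypeDistr}(r)$ vanishes precisely when $\phi(r)=0$, i.e.\ at the monopoly reserve.

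Off-chain influence proofness follows from the same revenue-optimality fact combined with the direct revelation principle (\autoref{thm:DirectRevelation}) and revenue equivalence (\autoref{thm:RevenueEquivalence}): any off-chain mechanism $\widetilde{\mathcal{M}}_{\mathsf{off}}$ and any induced user BNE correspond to a BIC mechanism with the same interim rules and hence the same expected revenue, which cannot exceed that of the revenue-optimal monopoly-reserve posted price already implemented by $\sigma^\onCG$. So the miner has no profitable off-chain deviation, and $\sigma^\onCG$ is off-chain influence proof.

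Finally, for the failure of strong collusion proofness I would exhibit a regular distribution putting positive mass on values in $(0, a_\mi)$---for instance $U[0,1]$, whose monopoly reserve is $a_\mi = 1/2$, so that $\Pr(0 < v_i < a_\mi) = 1/2 > 0$. Because all user payments are transferred to the miner, a payment from a colluding user is an internal transfer within the cartel, and no rebate is even needed. Keeping the miner compliant, I would let a colluding user $i$ with value $v_i \in (0, a_\mi)$ deviate from bidding $v_i$ to bidding $a_\mi$: now user $i$ is included and pays $a_\mi$ to the miner, so the miner's revenue rises by $a_\mi$ while user $i$'s utility changes by $v_i - a_\mi$, for a net joint gain of $v_i > 0$ over the baseline (where user $i$ stays out and contributes nothing), violating \autoref{def:strong-collusion-proof}. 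The main obstacle is the on-chain miner simplicity step, specifically the technical point---shared with \autoref{thm:C2PAPosSummary}---that truthful bidding need not remain a best response against \emph{every} miner strategy in the original environment, which is why the reduction to the exactly-one-bid environment is necessary before invoking \autoref{thm:Regular-RevOpt}.
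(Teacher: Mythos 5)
Your proposal is correct and follows essentially the same route as the paper: it identifies the cryptographic posted-price mechanism with the C$(k+1)$PA in the unbounded-capacity regime, reuses the argument of \autoref{thm:C2PAPosSummary} (including the reduction to the exactly-one-bid environment and Myerson optimality) for on-chain simplicity and off-chain influence proofness, and exhibits the same collusion deviation in which a user with $v_i < a_\mi$ bids $a_\mi$ to raise the cartel's joint utility from $0$ to $v_i$. The extra detail you supply (the explicit reserve calculus and the $U[0,1]$ instance) is consistent with, and slightly more concrete than, the paper's sketch.
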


We observe that the cryptographic posted-price mechanism is identical to the cryptographic $(k+1)\textsuperscript{th}$-price auction when $k = \infty$.
Thus, arguments similar to the proofs of \autoref{thm:C2PAPosSummary} will prove on-chain simplicity and off-chain influence proofness of the cryptographic posted-price mechanism.

The argument that $\sigma^\onCG$ is not strongly collusion proof is much simpler.
For a given price $a_\mi > 0$, consider a user $i$ with value $v_i < a_\mi$.
The miner and the user can increase their joint utility by the user bidding $a_\mi$ on-chain.
By not colluding, the joint utility of the cartel from $i$'s transaction equals zero, while the joint utility increases to $v_i$ if the user bids $a_\mi$.
The revenue from other users remains constant irrespective user $i$'s behaviour.
Thus, $\sigma^\onCG$ is not strongly collusion proof.

\subsubsection{The plaintext posted-price mechanism}

\begin{proposition} \label{thm:PPostedSummary}
    The strategy profiles $\sigma^\onCG$ is on-chain user simple and off-chain influence proof for all regular distributions $\TypeDistr$.
    However, there exist regular distributions $\TypeDistr$ such that $\sigma^\onCG$ is not on-chain miner simple and not collusion proof.
\end{proposition}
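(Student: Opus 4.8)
The plan is to inherit the two positive claims—on-chain user simplicity and off-chain influence proofness—essentially verbatim from the analysis of the cryptographic posted-price mechanism (\autoref{thm:CPostedSummary}), since neither property depends on the cryptographic model, and to establish the two negative claims by exploiting the fact that in the plaintext model the miner observes all submitted bids before committing to a price.

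First I would verify on-chain user simplicity. Fixing the compliant miner strategy $s_{\mi}^{\mathsf{comp}}(a_\mi)$, the induced mechanism is exactly a posted-price mechanism at price $a_\mi$, which is DSIC and individually rational: a user with value $v_i \ge a_\mi$ maximizes utility by any bid securing inclusion (in particular $v_i$) and pays the fixed price $a_\mi \le v_i$, while a user with $v_i < a_\mi$ optimally abstains and nets zero. Hence truthful bidding is dominant, $\sigma^\onCG$ is a user BNE, and on-chain user simplicity holds; this is the same argument used for C$(k+1)$PA. For off-chain influence proofness, regularity of $\TypeDistr$ together with unlimited supply makes posting the monopoly reserve $a_\mi$ (with $\phi(a_\mi)=0$) the revenue-optimal BIC mechanism (\autoref{thm:Regular-RevOpt}, applied with each user effectively treated independently since supply exceeds demand). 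Any off-chain mechanism the miner commits to induces some user BNE whose revenue is at most this optimum, by revenue equivalence and the virtual-welfare identity (\autoref{thm:RevenueEquivalence}, \autoref{thm:revenue-equals-virtual-welfare}); as noted for P$(k+1)$PA in \autoref{sec:P2PA}, this reasoning is insensitive to the cryptographic model.

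The heart of the proof is the failure of on-chain miner simplicity, which is precisely where the plaintext model bites. I would fix a distribution with enough spread (e.g.\ $U[0,1]$ with $n \ge 2$) and consider the miner deviation that, upon observing the truthful bids $b = v$, posts the ex-post monopolistic price $\tilde a_\mi \in \argmax_{p}\; p \cdot |\{i : b_i \ge p\}|$. Since it suffices to search over realized bids, this yields ex-post revenue $\max_i\, i \cdot b^{(i)}$, where $b^{(i)}$ is the $i$-th largest bid; this pointwise dominates the compliant revenue $a_\mi \cdot |\{i : b_i \ge a_\mi\}|$ and strictly exceeds it on a positive-measure set of realizations, so taking expectations the deviation strictly raises revenue. (Fabricating bids is revenue-neutral here, as any fabricated included bid merely pays the price back to the miner, and censoring only loses revenue, so neither aids the compliant strategy.) Thus $\sigma^\onCG$ is not on-chain miner simple; this is exactly the credibility-style manipulation of \citep{AkbarpourL20} and parallels the P$(k+1)$PA deviation of \autoref{sec:P2PA}.

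Finally, the lack of strong collusion proofness is identical to the cryptographic case: for any positive monopoly reserve $a_\mi$, the miner colludes with a user $i$ whose value satisfies $0 < v_i < a_\mi$ by instructing $i$ to bid $a_\mi$ on-chain. The resulting payment of $a_\mi$ is internal to the cartel, so the joint utility becomes $v_i > 0$ rather than the $0$ obtained under $\sigma^\onCG$, while revenue from the remaining users is unchanged; any regular $\TypeDistr$ placing mass strictly below a positive reserve witnesses the failure. I expect the main obstacle to be the on-chain miner simplicity step—specifically, arguing rigorously that ex-post price-setting gives a strict expected gain and that fabrication and censoring cannot substitute for it—whereas every other component transfers directly from earlier results.
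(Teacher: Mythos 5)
Your proposal is correct and follows essentially the same route as the paper: the paper also inherits on-chain user simplicity and off-chain influence proofness from the cryptographic case (noting the latter is insensitive to the cryptographic model), establishes the failure of on-chain miner simplicity via the ex-post monopolistic price $\argmax_p p\cdot|\{i: b_i\ge p\}|$ after observing plaintext bids (it phrases this as the plaintext $(k+1)$\textsuperscript{th}-price auction with $k=\infty$), and breaks strong collusion proofness by having a colluding user with $v_i < a_\mi$ bid $a_\mi$ so the payment stays internal to the cartel. Your write-up is simply a more explicit version of the paper's argument.
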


Similar to the connection between the cryptographic posted-price mechanism and C$(k+1)$PA, the plaintext posted-price mechanism is the plaintext $(k+1)\textsuperscript{th}$-price auction (\autoref{sec:P2PA}) for $k = \infty$.
Thus, $\sigma^\onCG$ is on-chain user simple and off-chain influence proof but not on-chain miner simple.
The proof that $\sigma^\onCG$ is not collusion proof is identical to the analogous claim for the cryptographic posted-price mechanism.

\subsection{Bonus-on-Matching-Bids (BoMB) Auction} \label{sec:BOMB}

The \emph{Bonus-on-Matching-Bids} (BoMB) auction (a variant of an auction originally proposed in \citet{chawlaH13}) allows two different looking equilibria--- one similar to the winner-pays-bid mechanism and the other similar to the posted-price mechanism.

At a high level, when the largest bid is unique, the BoMB auction is identical to the winner-pays-bid mechanism for $k = 1$ (a.k.a the first-price auction).
However, if the largest bid is not unique, all users that placed the largest bid get included.
Even though the mechanism is similar to the winner-pays-bid mechanism and all included users pay their bid, we argue that every user with a value greater than the reserve price would bid the reserve price in equilibrium, inducing an equilibrium similar to the posted-price mechanism with an unlimited supply.

\begin{definition}[Bonus-on-Matching-Bids Auction] \label{def:BOMB}
        The block building process $\bBuild$ of the bonus-on-matching-bids auction is as follows:
    \begin{itemize}
        \item $\bBuild$ takes no advice from the miner and thus, $\mathsf{Adv}_{\mi} = \{\emptyset\}$.
        We will assume that $\bBuild$ knows the monopoly reserve $\mathsf{r}$ of the regular distribution $\TypeDistr$ of user values.
        \item $\mathsf{Bid}_{\usr} = \R_{\ge 0}$. Bids are submitted in the miner-gatekeeper model.
        \item Allocate all users with the largest bid $b_{\max}$, if $b_{\max} \ge \mathsf{r}$ .
        Formally, for a set of bids $b$ such that $b_{\max} = \max_i b$ is the highest bid,
        \begin{equation}
            \notag
            \begin{split}
                X_i(\emptyset; b_i, b_{-i}) =
                \begin{cases}
                    1 & \text{if } b_i \geq \mathsf{r} \text{ and } b_i = b_{\max} \\
                    0 & \text{otherwise}
                \end{cases}
            \end{split}
        \end{equation}
        \item All included users pay their bid. For a set of bids $b$,
        \begin{equation}
            \notag
            \begin{split}
                P_i(\emptyset; b_i, b_{-i}) =
                \begin{cases}
                    b_i & \text{if } b_i \geq \mathsf{r} \text{ and } b_i = b_{\max} \\
                    0 & \text{otherwise}
                \end{cases}                
            \end{split}
        \end{equation}
        \item All collected payments are transferred to the miner.
        \[
        \mathsf{Rev}(\emptyset; b_i, b_{-i}) = \sum_i P_i(\emptyset; b_i, b_{-i})
        \]
    \end{itemize}
\end{definition}

\subsubsection{The Posted-Price Equilibrium} \label{sec:BOMBPP}
First, we consider user strategies that will induce an equilibrium similar to the posted-price mechanism with an unlimited supply.
Consider user $i$ playing the strategy $s_{\usr, \mathsf{pp}}^{\onCG, \mathsf{r}}$ given by
\[
s_{\usr, \mathsf{pp}}^{\onCG, \mathsf{r}}(v_i) = \begin{cases}
    \mathsf{r} & \text{if } v_i \geq \mathsf{r} \\
    0 & \text{otherwise}
\end{cases}
\]
We analyze the strategy profile $\sigma^{\onCG}_{\mathsf{pp}} = (s_{\mi}^{\mathsf{comp}}(\emptyset), s_{\usr, \mathsf{pp}}^{\onCG, \mathsf{r}}, \dots, s_{\usr, \mathsf{pp}}^{\onCG, \mathsf{r}})$.
Getting included would yield a negative utility to users with a value smaller than the reserve $\mathsf{r}$.
All users with value at least $\mathsf{r}$ get included in $\sigma^{\onCG}_{\mathsf{pp}}$.
Bidding higher would only result in a larger payment while bidding any lower than $\mathsf{r}$ will get them excluded.
Thus, $\sigma^{\onCG}_{\mathsf{pp}}$ is a user BNE.

Since users are not bidding truthfully, $\sigma^{\onCG}_{\mathsf{pp}}$ is not on-chain user simple. From the miner's viewpoint, fabricating a bid larger than $\mathsf{r}$ drive the miner revenue to zero, while censoring bids or fabricating bids below the reserve does not help the miner's revenue either.
Thus, $\sigma^{\onCG}_{\mathsf{pp}}$ is on-chain miner simple.

The proof of off-chain influence proofness is identical to the cryptographic posted-price mechanism, since the miner already receives the optimal revenue over all possible BNEs and mechanisms for an unlimited supply. A manipulation by a colluding cartel similar to the one described in the cryptographic posted-price mechanism (\autoref{sec:CPostedPrice}) shows $\sigma^{\onCG}_{\mathsf{pp}}$ is not strongly collusion proof.
The colluding user with value less than $\mathsf{r}$ can bid $\mathsf{r}$ instead of zero to increase the joint utility of the cartel.

\begin{proposition} \label{thm:BombPPSummary}
    The on-chain strategy profile $\sigma^{\onCG}_{\mathsf{pp}} = (s_{\mi}^{\mathsf{comp}}(\emptyset), s_{\usr, \mathsf{pp}}^{\onCG, \mathsf{r}}, \dots, s_{\usr, \mathsf{pp}}^{\onCG, \mathsf{r}})$ in $\onCG$ induced by the BoMB auction satisfies on-chain miner simplicity, off-chain influence proofness but not on-chain user simplicity or strong collusion proofness.
\end{proposition}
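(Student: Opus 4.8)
The plan is to verify the four assertions separately, handling the two positive properties (on-chain miner simplicity and off-chain influence proofness) and the two failures (on-chain user simplicity and strong collusion proofness), after first confirming that $\sigma^{\onCG}_{\mathsf{pp}}$ is a user BNE. For the BNE check I would fix the miner at $s_{\mi}^{\mathsf{comp}}(\emptyset)$ and all other users at $s_{\usr, \mathsf{pp}}^{\onCG, \mathsf{r}}$, and argue \emph{pointwise} over $v_{-i}$ (which then implies the best-response inequality in expectation). A user with $v_i \ge \mathsf{r}$ who bids $\mathsf{r}$ ties at the maximum bid $\mathsf{r} \ge \mathsf{r}$, is included at price $\mathsf{r}$, and earns $v_i - \mathsf{r} \ge 0$; bidding below $\mathsf{r}$ forfeits inclusion (utility $0$) and bidding above $\mathsf{r}$ wins uniquely but pays strictly more, so $\mathsf{r}$ is weakly optimal. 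A user with $v_i < \mathsf{r}$ earns $0$ by bidding $0$, whereas any inclusion requires a bid $\ge \mathsf{r}$ and nets $v_i - \mathsf{r} < 0$, so $0$ is weakly optimal. The failure of on-chain user simplicity is then immediate from \autoref{def:OnChainSimple}, since users do not report their values: $s_{\usr, \mathsf{pp}}^{\onCG, \mathsf{r}}(v_i) = \mathsf{r} \ne v_i$ for $v_i > \mathsf{r}$.

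For on-chain miner simplicity I would show $s_{\mi}^{\mathsf{comp}}(\emptyset)$ best-responds to the users. Since the advice space is trivial, the only deviations are censoring and fabricating bids (conditioned only on identities, in the miner-gatekeeper model). The equilibrium revenue is $\mathsf{r}$ times the number of users with $v_i \ge \mathsf{r}$, all of whom tie at the maximum bid $\mathsf{r}$. The observation genuinely specific to BoMB is that fabricating any bid strictly above $\mathsf{r}$ makes that fake bid the \emph{unique} maximum, thereby excluding every real bidder and collapsing net revenue to zero (the fake bid's own payment is subtracted in $\rev^\onCG$). Fabricating a bid equal to $\mathsf{r}$ adds a tying bid whose payment is exactly cancelled in $\rev^\onCG$, hence is revenue-neutral; fabricating below $\mathsf{r}$ has no effect; and censoring any bidder can only remove a paying inclusion. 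Thus no deviation helps, establishing on-chain miner simplicity.

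For off-chain influence proofness I would follow the reduction already used for the cryptographic posted-price mechanism (\autoref{thm:CPostedSummary}) and C$(k+1)$PA (\autoref{thm:C2PAPosSummary}). In any off-chain mechanism $\widetilde{\mathcal{M}}_{\mathsf{off}}$ with an induced user BNE, the direct revelation principle (\autoref{thm:DirectRevelation}) yields a BIC mechanism with the same interim rules, whose expected total user payment equals the miner's total on- plus off-chain revenue (BoMB transfers all on-chain payments to the miner), and by \autoref{thm:revenue-equals-virtual-welfare} this equals the expected virtual welfare of the induced allocation. Because $\TypeDistr$ is regular with $\phi(\mathsf{r}) = 0$, virtual welfare is maximized pointwise by including exactly the users with $v_i \ge \mathsf{r}$; since every allocation realized by BoMB is a subset of $[n]$, relaxing feasibility to \emph{all} subsets gives a valid upper bound equal to the posted-price-at-$\mathsf{r}$ revenue (equivalently, invoke \autoref{thm:Regular-RevOpt} in the unlimited-supply, $k=\infty$ regime). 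As $\sigma^{\onCG}_{\mathsf{pp}}$ attains exactly this revenue on-chain, no off-chain mechanism can beat it, giving off-chain influence proofness. Finally, for the failure of strong collusion proofness I would mimic \autoref{sec:CPostedPrice}: take a user $i$ with $0 < v_i < \mathsf{r}$, and (holding all other users and the miner's on-chain play fixed) instruct $i$ to bid $\mathsf{r}$ rather than $0$. Every other bidder's inclusion and payment is unchanged, the miner's revenue rises by $\mathsf{r}$, and $i$'s utility changes by $v_i - \mathsf{r}$, for a net cartel gain of $v_i > 0$, violating \autoref{def:strong-collusion-proof}.

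The main obstacle I anticipate is the on-chain miner simplicity step, specifically the care needed around fabricated bids: unlike the posted-price mechanism, BoMB's rule of allocating \emph{all} bids tied at the maximum means a single fabricated over-bid is not merely revenue-neutral but actively destroys all real revenue by breaking the tie. The argument therefore cannot be delegated to a generic posted-price appeal and must treat the three fabrication regimes ($>\mathsf{r}$, $=\mathsf{r}$, $<\mathsf{r}$) together with censoring explicitly; everything else reduces cleanly to already-established Myersonian facts and the prior posted-price analysis.
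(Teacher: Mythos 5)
Your proposal is correct and takes essentially the same route as the paper's proof: the same pointwise user-BNE check, the same key observation that a fabricated bid strictly above $\mathsf{r}$ becomes the unique maximum and collapses net revenue to zero (with ties at $\mathsf{r}$ revenue-neutral and censoring only losing paying bidders), the same reduction of off-chain influence proofness to Myersonian revenue optimality of the posted price $\mathsf{r}$ for regular distributions with unlimited supply, and the same cartel deviation (a user with $v_i < \mathsf{r}$ bidding $\mathsf{r}$ instead of $0$) to refute strong collusion proofness. The only difference is one of presentation: you spell out the Myerson machinery and the fabrication case analysis explicitly, whereas the paper delegates these steps to its C$(k+1)$PA and cryptographic posted-price analyses.
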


\subsubsection{The Winner-Pays-Bid Equilibrium} \label{sec:BOMBWPB}
In the section on first-price auctions (winner-pays-bid mechanisms for $k = 1$; \autoref{sec:FPA}), we discussed the existence of an equilibrium strategy for users $s_{\usr, \mathsf{val}}^{\onCG, \TypeDistr, \mathsf{r}}$, for a given number of users $n$ and the distribution $\TypeDistr$ with monopoly reserve $\mathsf{r}$.
We discuss properties satisfied by the strategy profile $\sigma^\onCG_{\mathsf{wpb}} = (s_{\mi}^{\mathsf{comp}}(\emptyset), s_\usr^{\onCG, \TypeDistr, \mathsf{r}}, \allowbreak \dots, \allowbreak s_\usr^{\onCG, \TypeDistr, \mathsf{r}})$ in the on-chain game of the BoMB auction.

We will argue that  $\sigma^\onCG_{\mathsf{wpb}}$ is a user BNE for most common regular distributions like $U[0, 1]$ which do not have point masses.
The strategy $s_{\usr, \mathsf{val}}^{\onCG, \TypeDistr, \mathsf{r}}$ is injective when the distribution $\TypeDistr$ has no point mass \citep{chawlaH13}. Thus, the probability of any arbitrary bid $b_i$ of user $i$ matching the bid $b_j$ equals zero (particularly in the miner-gatekeeper model, when users cannot see each other's bids).
Any deviation by users will fetch the user same utility as the deviation in the first-price auction.
Since $\sigma^\onCG_{\mathsf{wpb}}$ is a user BNE in the first-price auction with reserve $\mathsf{r}$, it is also a user BNE in the BoMB auction.

Arguing other properties are fairly straightforward.
$\sigma^\onCG_{\mathsf{wpb}}$ is not on-chain user simple, but satisfies on-chain user simplicity.
Censoring bids will only decrease the miner's revenue, while fabricating bids will not extract a larger payment from included users either.
$\sigma^\onCG_{\mathsf{wpb}}$ is not strongly collusion proof for the same reason as the first-price auction (\autoref{thm:FPAValueEquilibriumSummary}).

Finally, $\sigma^\onCG_{\mathsf{wpb}}$ is not off-chain influence proof.
The miner only needs to communicate to the users to steer them towards $\sigma^\onCG_{\mathsf{pp}}$, which yields a larger expected revenue. We summarize the properties in the proposition below.

\begin{proposition} \label{thm:BOMBFPASummary}
    The strategy profile $\sigma^\onCG_{\mathsf{wpb}} = (s_{\mi}^{\mathsf{comp}}(\emptyset), s_{\usr, \mathsf{val}}^{\onCG, \TypeDistr, \mathsf{r}}, \allowbreak \dots, \allowbreak s_{\usr, \mathsf{val}}^{\onCG, \TypeDistr, \mathsf{r}})$ induced by the BoMB auction is on-chain miner simple but does not satisfy on-chain user simplicity, strong collusion proofness and off-chain influence proofness.
\end{proposition}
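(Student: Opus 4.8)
The plan is to verify each of the four claims about $\sigma^\onCG_{\mathsf{wpb}} = (s_{\mi}^{\mathsf{comp}}(\emptyset), s_{\usr, \mathsf{val}}^{\onCG, \TypeDistr, \mathsf{r}}, \dots, s_{\usr, \mathsf{val}}^{\onCG, \TypeDistr, \mathsf{r}})$ in turn, after first confirming it is a genuine user BNE so that the simplicity notions apply. The crux of the BNE check is that the BoMB allocation and payment rules agree with those of the winner-pays-bid mechanism with reserve $\mathsf{r}$ (\autoref{def:FPA} with $k=1$) on every bid profile whose top bid is \emph{unique}; the two differ only when the maximum bid is attained by more than one bidder. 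Since $s_{\usr, \mathsf{val}}^{\onCG, \TypeDistr, \mathsf{r}}$ is injective for atomless regular $\TypeDistr$ \citep{chawlaH13}, two honest bids coincide with probability zero, and moreover for any fixed unilateral deviation bid $b_i$ the event that some honest bid $s_{\usr,\mathsf{val}}^{\onCG,\TypeDistr,\mathsf{r}}(v_j)$ equals $b_i$ also has probability zero (a single preimage under an injective map applied to an atomless law). Hence each user's interim allocation and payment, and therefore their best-response problem, are identical in BoMB and in the winner-pays-bid mechanism, so $\sigma^\onCG_{\mathsf{wpb}}$ inherits the BNE property from \citet{chawlaH13}. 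Because users shade their bids rather than report $v_i$, condition (1) of on-chain user simplicity in \autoref{def:OnChainSimple} fails immediately, giving the lack of on-chain user simplicity.

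For on-chain miner simplicity, the key structural feature is that BoMB takes no miner advice ($\mathsf{Adv}_\mi = \{\emptyset\}$), so the reserve $\mathsf{r}$ is hard-coded and unavailable for manipulation, and bids arrive in the miner-gatekeeper model, so the miner may condition only on the \emph{set} of bidders, never on values. Fixing users at $s_{\usr,\mathsf{val}}^{\onCG,\TypeDistr,\mathsf{r}}$, I would check that the only remaining miner deviations---blind censoring and fabrication---cannot raise revenue. Censoring a bidder either removes the (a.s.\ unique) winner, lowering the winning bid, or removes a non-winner, changing nothing; either way revenue weakly drops. A fabricated bid $\tilde b$ contributes its own payment to $\rev$, but this is exactly netted against the payments-on-fabricated-bids term, so a fabricated winning bid yields zero net while displacing the genuine winner (a strict loss whenever $\tilde b$ exceeds the real maximum with positive probability), and a fabricated losing bid has no effect. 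Thus the compliant strategy is a best response and $\sigma^\onCG_{\mathsf{wpb}}$ is on-chain miner simple. It is worth emphasizing the contrast with \autoref{thm:FPAValueEquilibriumSummary}: there the miner \emph{does} set the reserve and can profitably lower it, whereas here that lever is removed, which is exactly why miner simplicity now holds.

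Strong collusion proofness fails by the same buy-in idea as in \autoref{thm:FPAValueEquilibriumSummary}, adapted to a fixed reserve. I would take the cartel of the miner and one user $i$ with value $v_i \in (0,\mathsf{r})$ and let $i$ deviate to bid $\mathsf{r}$, with the miner staying compliant. On the positive-probability event that all other users draw values below $\mathsf{r}$ (and hence bid below $\mathsf{r}$), user $i$'s bid of $\mathsf{r}$ is the unique maximum and clears the reserve, so $i$ is included and pays $\mathsf{r}$ to the miner. The cartel's joint utility on this event moves from $0$ (no one wins under honest play) to $v_i - \mathsf{r} + \mathsf{r} = v_i > 0$, while on every other event the joint utility is unchanged (the within-cartel transfer cancels). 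Hence the expected joint utility strictly increases, violating \autoref{def:strong-collusion-proof}.

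The main obstacle---and the most conceptually interesting step---is showing $\sigma^\onCG_{\mathsf{wpb}}$ is \emph{not} off-chain influence proof, because naively it looks revenue-optimal: its interim allocation matches the single-item second-price-with-reserve auction, so by revenue equivalence (\autoref{thm:RevenueEquivalence}) it attains the Myerson-optimal single-item revenue. The resolution is precisely the ``bonus-on-matching-bids'' twist: BoMB allocates to \emph{every} bidder tied at the top, so it is not truly a single-item auction. I would exhibit the second equilibrium $\sigma^\onCG_{\mathsf{pp}}$ of \autoref{thm:BombPPSummary}, in which every user with value at least $\mathsf{r}$ bids exactly $\mathsf{r}$; all of them then tie at the maximum and are all included at price $\mathsf{r}$, yielding the effectively unlimited-supply posted-price revenue $\mathsf{r}\cdot n\cdot(1-F_{\TypeDistr}(\mathsf{r}))$. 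Since the single-item revenue of $\sigma^\onCG_{\mathsf{wpb}}$ is sublinear in $n$ (bounded by $\Es{v\sim\TypeDistr^n}{\max_i v_i}$) while the $\sigma^\onCG_{\mathsf{pp}}$ revenue grows linearly, the latter is strictly larger for $n$ large, and a short computation confirms the strict inequality already at $n \ge 2$ (for $\TypeDistr = U[0,1]$ and $\mathsf{r}=1/2$ one gets $1/2 > 5/12$). A revenue-maximizing miner with commitment power can therefore announce an off-chain mechanism $\widetilde{\mathcal{M}}_{\mathsf{off}}$ (staying compliant on-chain, recommending the posted-price bid, with no off-chain payments) whose user BNE coordinates users onto $\sigma^\onCG_{\mathsf{pp}}$; this is a valid alternative off-chain mechanism with strictly higher revenue, so by \autoref{def:OffChainInfluenceProof}, $\sigma^\onCG_{\mathsf{wpb}}$ is not off-chain influence proof.
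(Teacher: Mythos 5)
Your proof is correct and follows the same overall decomposition as the paper's argument in \autoref{sec:BOMBWPB} (user-BNE via injectivity of $s_{\usr,\mathsf{val}}^{\onCG,\TypeDistr,\mathsf{r}}$ plus atomlessness of $\TypeDistr$ making ties measure-zero; non-truthful bids killing on-chain user simplicity; censoring/fabrication accounting for on-chain miner simplicity; and steering users to the posted-price equilibrium $\sigma^{\onCG}_{\mathsf{pp}}$ for the failure of off-chain influence proofness). Two places where you genuinely improve on the paper's write-up are worth noting. First, for strong collusion proofness the paper simply defers to \autoref{thm:FPAValueEquilibriumSummary}, but the deviation used there has the miner lowering the reserve to zero---a lever that does not exist in BoMB, where $\mathsf{Adv}_{\mi} = \{\emptyset\}$ and $\mathsf{r}$ is hard-coded. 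Your buy-in deviation, in which a user with $v_i \in (0,\mathsf{r})$ bids exactly $\mathsf{r}$ while the miner stays compliant, is the correct adaptation; it is in fact the same manipulation the paper itself uses for $\sigma^{\onCG}_{\mathsf{pp}}$ (borrowed from the cryptographic posted-price mechanism, \autoref{sec:CPostedPrice}), so you have repaired a loose cross-reference. Second, for off-chain influence proofness the paper merely asserts that $\sigma^{\onCG}_{\mathsf{pp}}$ ``yields a larger expected revenue''; you substantiate this with the linear-in-$n$ posted-price revenue $\mathsf{r}\cdot n\cdot(1-F_{\TypeDistr}(\mathsf{r}))$ versus the sublinear single-item Myerson revenue of $\sigma^{\onCG}_{\mathsf{wpb}}$ (which follows from revenue equivalence, exactly as you say), together with the explicit check $1/2 > 5/12$ for $\TypeDistr = U[0,1]$, $\mathsf{r}=1/2$, $n=2$; your observation that even the trivial off-chain mechanism admits the $\sigma^{\onCG}_{\mathsf{pp}}$-type user BNE (via abstaining bids) also makes the appeal to \autoref{def:OffChainInfluenceProof} airtight, consistent with the paper's remark that the miner's optimal off-chain play here is pure equilibrium selection rather than a separate auction. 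In short: same route, with the two informal steps of the paper's sketch correctly and explicitly filled in.
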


\begin{remark}
    Even though $\sigma^\onCG_{\mathsf{wpb}}$ is not off-chain influence proof, the miner's optimal strategy in the off-chain game is to steer users away from $\sigma^\onCG_{\mathsf{wpb}}$ towards $\sigma^{\onCG}_{\mathsf{pp}}$, which is also an entirely on-chain equilibrium.
    $\sigma^\onCG_{\mathsf{wpb}}$ is therefore, not on-chain-miner-optimal.
    The miner's optimal off-chain strategy does not involve payments and only assists in selecting the optimal amongst all equilibria in the on-chain game.
    Thus, the ``consequence'' of having a off-chain deviation from the miner is not as bad as the ``consequence'' in EIP-1559, where the miner runs a separate auction altogether.
    On-chain optimality helps tease apart this distinction of coordinating equilibrium through off-chain messages and running an entire auction off-chain.
    
\end{remark}

\subsection{The Squared Revenue Second-Price Auction, Revisited} \label{sec:SR2PA}

We revisit the squared revenue second-price auction (SR2PA) described in \autoref{sec:ModelExample}.
As a quick recap, (assuming the distribution $\TypeDistr$ is supported on $[0, 1]$), the SR2PA implements the plaintext second-price auction, allowing the miner to set the reserve.
However, instead of transferring the entire payment made by the included user to the miner, the SR2PA only transfers the square of the payment (which is smaller than the payment since users have values and therefore bids supported in $[0, 1]$) and burns the rest.

We consider two strategy profiles. The first where the miner sets the monopoly reserve and users bid truthfully and the second where the miner fabricates a bid just below the highest bid, reducing the on-chain allocation and payments rule into a first-price auction (winner-pays-bid mechanism with $k = 1$).

\subsubsection{The Second-Price equilibrium} \label{sec:SR2PA2PA}

We analyze the properties satisfied by the strategy profile $\sigma^\onCG_{\mathsf{2PA}} = (s_{\mi}^{\mathsf{comp}}(a_\mi), v_1, \dots, v_n)$, where $a_\mi$ is the monopoly reserve of the distribution $\TypeDistr$.

Conditioned on the miner playing ${\mathsf{comp}}(a_\mi)$, users respond to a second-price auction with a reserve on-chain and thus, truthful bidding is a user BNE and the on-chain mechanism induced is also DSIC, and thus, on-chain user simple.
Indeed, it is not on-chain miner simple since the miner can increase her revenue by injecting a fake bid just below the largest bid.
It is also not collusion proof due to the simple deviation described next.
For the monopoly reserve $a_\mi$ and a user $i$ with value $v_i \in [a_\mi - a_\mi^2, a_\mi]$, the miner advices the block-building process to set a reserve zero. and $i$ bids $a_\mi$.
When a different user has a value larger than $a_\mi$, the allocation and payments are not different between the miner setting a reserve $a_\mi$ or faking a reserve $a_\mi$ by having user $i$ bid $a_\mi$.
However when all users have a value less than $a_\mi$, instead of not including any user and obtaining a zero joint utility, the user $i$ gets included to bag a utility $v_i - a_\mi \geq a_\mi - a_\mi^2 - a_\mi = -a_\mi^2$.
The miner gets a revenue $a_\mi^2$, and thus, the joint utility of the cartel is at least zero.

We argued that $\sigma^\onCG_{\mathsf{2PA}}$ is not off-chain influence proof in \autoref{sec:ModelExample}.
At a high level, the miner runs a second-price auction off-chain and includes the winner in the block (censoring all other bids).
The winner only has to bid and pay zero on-chain.
Thus, even though the total expected payments made by the users remains the same, the miner increases her revenue by saving on the money that gets burnt on-chain.

\begin{proposition} \label{thm:SR2PA2PASummary}
    Given a distribution supported over $[0, 1]$ with monopoly reserve $a_\mi$, the user BNE $\sigma^\onCG_{\mathsf{2PA}} = (s_{\mi}^{\mathsf{comp}}(a_\mi), v_1, \dots, v_n)$ in the on-chain game $\onCG$ is on-chain user simple, but not on-chain miner simple or collusion proof. The corresponding user BNE $\sigma^\offCG_{\mathsf{2PA}} = (\Moff^{\mathsf{triv}, a_\mi}, \bot_\mathsf{off}(v_1), \allowbreak \dots, \allowbreak \bot_\mathsf{off}(v_n))$ in the off-chain game is not off-chain influence proof.
\end{proposition}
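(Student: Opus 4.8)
The plan is to reduce everything to the observation that, conditioned on the miner playing $s_{\mi}^{\mathsf{comp}}(a_\mi)$, the allocation and payment rules \emph{faced by the users} are exactly those of a single-item second-price auction with reserve $a_\mi$: the money-burning in $\rev$ rescales only the miner's revenue and never touches a user's payment. With this reduction in hand, on-chain user simplicity is immediate. I would note that a second-price auction with reserve is DSIC and individually rational, so truthful bidding is a dominant strategy yielding non-negative utility for every user; hence $\sigma^\onCG_{\mathsf{2PA}}$ is a user BNE satisfying all conditions of \autoref{def:OnChainSimple}, and money-burning is irrelevant to this step.

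Next, to show $\sigma^\onCG_{\mathsf{2PA}}$ is not on-chain miner simple, I would exhibit a strictly profitable miner deviation against truthful bidders. Since bids are submitted in the plaintext model, the miner sees the realized bids; writing $v_{(1)}\ge v_{(2)}$ for the top two values, she fabricates a single bid at $v_{(1)}-\epsilon$ for small $\epsilon>0$. The fabricated bid never wins, so it costs the miner nothing, but the true winner's payment rises from $\max\{v_{(2)},a_\mi\}$ to $v_{(1)}-\epsilon$, and (because all values lie in $[0,1]$) the miner's revenue rises from $(\max\{v_{(2)},a_\mi\})^2$ to $(v_{(1)}-\epsilon)^2$. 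On the positive-probability event $v_{(1)}>\max\{v_{(2)},a_\mi\}$ this is a strict gain, so the compliant strategy is not a best response.

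For the failure of strong collusion proofness, I would take the cartel to be the miner together with a single user $i$ whose value satisfies $v_i\in(a_\mi-a_\mi^2,\,a_\mi)$, keep the miner compliant, and have user $i$ bid $a_\mi$ rather than $v_i$. The accounting is event-by-event: on the positive-probability event that every other user draws a value below $a_\mi$ (and bids truthfully), user $i$ now wins at price $a_\mi$, so the cartel's joint utility becomes $(v_i-a_\mi)+a_\mi^2>-a_\mi^2+a_\mi^2=0$, strictly beating the joint utility of $0$ in $\sigma^\onCG_{\mathsf{2PA}}$ (where no one clears the reserve); on the complementary event some other user exceeds $a_\mi$, and since $i$'s inflated bid of exactly $a_\mi$ equals the reserve it alters neither the winner nor any payment, leaving the joint utility unchanged. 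Hence the expected joint utility strictly increases. For the failure of off-chain influence proofness I would invoke the off-chain mechanism already built in \autoref{sec:ModelExample}: the miner runs a second-price auction with the monopoly reserve off-chain, forwards only the (single) highest bidder to be included on-chain at an on-chain reserve of $0$ with an instructed bid of $0$, and collects the second-price amount off-chain. The total user payments equal the optimal revenue $R^*$ (\autoref{thm:Regular-RevOpt}), all of which reaches the miner with nothing burned, whereas in $\sigma^\onCG_{\mathsf{2PA}}$ the miner receives only $\E{P^2}$ for the winner's payment $P\in[0,1]$; since $P^2<P$ whenever $0<P<1$, we get $\E{P^2}<\E{P}=R^*$ for any distribution placing positive mass on interior payments, so the off-chain mechanism strictly raises revenue.

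The hard part will be the collusion-proofness step, and specifically the event-by-event payment accounting: I must confirm that bidding $a_\mi$ charges the colluding winner exactly $a_\mi$ (not the smaller realized second price) when all rivals fall below the reserve, and that on the complementary event the substitution $v_i\mapsto a_\mi$ leaves the second-price payment genuinely invariant in every sub-case (whether the rival second-highest bid lies above or below $a_\mi$). Once the reduction to a second-price auction with reserve is established, the remaining three steps are essentially bookkeeping, and the only other subtlety is ensuring the interior-mass hypothesis on $\TypeDistr$ that makes $\E{P^2}<\E{P}$ strict in the off-chain comparison.
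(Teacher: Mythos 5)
Your proposal is correct, and for three of the four claims it follows essentially the paper's own route: on-chain user simplicity via the observation that, conditioned on the compliant miner, users face an ordinary second-price auction with reserve (the squaring touches only $\mathsf{Rev}$, never a user's payment); failure of on-chain miner simplicity via a fabricated bid just below the realized top bid in the plaintext model; and failure of off-chain influence proofness via exactly the off-chain second-price mechanism of \autoref{sec:ModelExample}, where your explicit comparison $\E{P^2} < \E{P}$ for interior payments makes precise the paper's phrase ``saving on the money that gets burnt on-chain.'' The genuine divergence is the collusion step. The paper's deviation has the miner set the on-chain reserve to \emph{zero} while the colluding user $i$ (with $v_i \in [a_\mi - a_\mi^2, a_\mi]$) bids $a_\mi$; you instead keep the miner compliant at reserve $a_\mi$. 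Your choice is the more robust one: under the paper's literal reserve-zero deviation, on the event that all rivals fall below $a_\mi$ the winner pays the realized second-highest bid $w < a_\mi$ rather than $a_\mi$, so the cartel's joint utility is $v_i - w + w^2$, which can be negative (take $a_\mi = 0.9$, $w = 0.5$, and $v_i$ just above $a_\mi - a_\mi^2 = 0.09$); the paper's stated accounting (payment $a_\mi$, miner revenue $a_\mi^2$) in fact matches only the compliant-miner deviation that you use. Your event-by-event analysis is sound --- when all rivals are below the reserve the colluder pays exactly $\max\{w, a_\mi\} = a_\mi$, and otherwise a bid of exactly $a_\mi$ changes neither the winner nor the payment $\max\{b^{(2)}, a_\mi\}$ --- and taking the open interval $v_i \in (a_\mi - a_\mi^2, a_\mi)$ yields the strict improvement needed to violate \autoref{def:strong-collusion-proof}, where the paper's closed interval only delivers joint utility ``at least zero.'' Both you and the paper implicitly assume $a_\mi > 0$ and that the event that all $n-1$ rivals draw values below $a_\mi$ has positive probability; this shared genericity assumption is harmless at the proposition's level of generality.
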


\subsubsection{The First-Price Equilibrium} \label{sec:SR2PA1PA}

Consider the miner setting the reserve to be just below the maximum bid $b_{\max}$ (for convenience, we pretend the miner sets the reserve to be equal to the maximum bid).
Thus, the largest bidders always gets included and has to pay his bid.
Thus, the on-chain game is reduced to the winner-pays-bid mechanism for $k = 1$ with reserve zero.
For the optimal strategy $s_\usr^{\onCG, \TypeDistr, 0}$ for users in the winner-pays-bid mechanism (\autoref{sec:FPA}), we consider the user BNE $\sigma^\onCG_{\mathsf{wpb}} = ((b_{\max}, H, \emptyset), s_\usr^{\onCG, \TypeDistr, 0}, \allowbreak \dots, \allowbreak s_\usr^{\onCG, \TypeDistr, 0})$.

$\sigma^\onCG_{\mathsf{wpb}}$ is neither on-chain user simple, nor is it on-chain miner simple (since the miner is not playing a compliant strategy). It is not collusion proof and off-chain influence proof for reasons identical to the second-price equilibrium $\sigma^\onCG_{\mathsf{2PA}}$.

\begin{proposition} \label{thm:SR2PAFPASummary}
    Given a distribution supported over $[0, 1]$, the user BNE $\sigma^\onCG_{\mathsf{wpb}} = ((b_{\max}, H, \emptyset), s_\usr^{\onCG, \TypeDistr, 0}, \allowbreak \dots, \allowbreak s_\usr^{\onCG, \TypeDistr, 0})$ in the on-chain game $\onCG$ and the corresponding strategy profile $\sigma^\offCG_{\mathsf{wpb}} = (\Moff^{\mathsf{triv}, (b_{\max}, H, \emptyset)}, \bot_\mathsf{off}(s_\usr^{\onCG, \TypeDistr, 0}), \allowbreak \dots, \allowbreak \bot_\mathsf{off}(s_\usr^{\onCG, \TypeDistr, 0}))$ satisfies none of the four properties: on-chain user simplicity, on-chain miner simplicity, collusion proofness and off-chain influence proofness.
\end{proposition}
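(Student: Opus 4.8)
The plan is to establish each of the four failures separately, building on the reduction noted just above: when the miner plays $(b_{\max}, H, \emptyset)$, the on-chain allocation and payment rules of the SR2PA (\autoref{sec:ModelExample}) collapse to those of the winner-pays-bid mechanism with reserve $0$, so the top bidder always wins and pays exactly their own bid, and the users' best responses are the first-price shading strategy $s_\usr^{\onCG, \TypeDistr, 0}$; this confirms that $\sigma^\onCG_{\mathsf{wpb}}$ is a genuine user BNE. The two on-chain failures are then immediate. Because $s_\usr^{\onCG, \TypeDistr, 0}(v_i) \neq v_i$ for $v_i > 0$, users strictly shade, so the truthful-reporting requirement of on-chain user simplicity (\autoref{def:OnChainSimple}) fails. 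Because the miner's advice $b_{\max}$ depends on the submitted bids rather than being constant, the miner's strategy is not compliant, so on-chain miner simplicity fails by definition.

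For strong collusion proofness I would exhibit a profitable joint deviation for the cartel consisting of the miner and a single user $i$, exploiting the fact that the SR2PA burns money (it transfers only the square $P^2 \le P$ of the on-chain payment $P \in [0,1]$). Fix $v_i$ in the interior of the support of $\TypeDistr$ and keep user $i$'s bid at $b_i := s_\usr^{\onCG, \TypeDistr, 0}(v_i)$. Let the miner deviate to the strategy $\widetilde{s}_\mi^\onCG$ that, after observing the submitted bids (the plaintext model permits this), censors every bid other than user $i$'s whenever $b_i$ is the unique largest bid and otherwise plays the baseline $(b_{\max}, H, \emptyset)$. In each realization of $v_{-i}$ where user $i$ wins under $\sigma^\onCG_{\mathsf{wpb}}$, user $i$ now wins as the sole uncensored bid at reserve $0$ and pays $0$, so the cartel's joint utility rises from $v_i - b_i + b_i^2 = v_i - b_i(1 - b_i)$ to $v_i$; in every other realization the outcome is unchanged. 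Since $b_i \in (0,1)$ on a positive-probability event over $v_{-i}$, the burned amount $b_i(1 - b_i)$ is strictly positive there, so the expected joint utility strictly increases, violating \autoref{def:strong-collusion-proof}.

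For off-chain influence proofness I would invoke the off-chain mechanism already built for the SR2PA in \autoref{sec:ModelExample}: the miner simulates a second-price auction with the monopoly reserve off-chain, collects the full (un-squared) payments there, sets an on-chain reserve of $0$, censors all but the winning bid, and instructs the winner to bid $0$ on-chain. This finalizes the block through $\bBuild$, makes truthful off-chain bidding dominant, burns nothing on-chain, and hence attains the full Myerson-optimal revenue for regular $\TypeDistr$ (\autoref{thm:Regular-RevOpt}). Under $\sigma^\onCG_{\mathsf{wpb}}$, by contrast, the winner pays $b = s_\usr^{\onCG, \TypeDistr, 0}(v_{\max})$ while the miner receives only $b^2 < b$ whenever $b \in (0,1)$, so the on-chain revenue is strictly below even the first-price revenue and a fortiori below the off-chain optimum. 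The miner thus strictly prefers the off-chain mechanism, so the revenue-optimality requirement of \autoref{def:OffChainInfluenceProof} fails.

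I expect the collusion argument to be the main obstacle, since it does not transfer verbatim from the second-price equilibrium of \autoref{thm:SR2PA2PASummary}: there the exploitable inefficiency is the reserve (a below-reserve colluder can be slotted into otherwise-empty blocks), whereas here the effective reserve is $0$ and the inefficiency is the money burning instead. The subtlety is to pick a deviation that weakly dominates the baseline realization-by-realization --- so that the expectation over $v_{-i}$ cannot decrease --- while being strictly better on a positive-measure event; the censor-when-$i$-is-top deviation is designed to have exactly this property. The remaining parts are routine given the reduction to the winner-pays-bid mechanism.
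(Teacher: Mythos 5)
Your proposal is correct, and on three of the four properties it matches the paper's (very terse) argument exactly: on-chain user simplicity fails because users shade rather than bid truthfully, on-chain miner simplicity fails immediately because the advice $b_{\max}$ depends on the submitted bids so the miner's strategy is non-compliant, and off-chain influence proofness fails via the off-chain second-price auction of \autoref{sec:ModelExample} that collects full (un-squared) payments off-chain while burning nothing on-chain. Where you genuinely depart from the paper is the collusion argument, and your instinct about it is exactly right: the paper disposes of both collusion proofness and off-chain influence proofness by saying the reasons are ``identical to the second-price equilibrium $\sigma^\onCG_{\mathsf{2PA}}$,'' but the collusion deviation given in \autoref{sec:SR2PA2PA} is reserve-based (slotting a below-reserve colluder with $v_i \in [a_\mi - a_\mi^2, a_\mi]$ into otherwise-empty blocks), and it has no verbatim analogue here since the effective reserve under $\sigma^\onCG_{\mathsf{wpb}}$ is zero. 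Your replacement deviation --- the miner censors all rival bids and sets reserve $0$ precisely when user $i$'s bid is the unique maximum, so the cartel recaptures the burned amount $b_i(1-b_i) > 0$ while leaving every other realization untouched --- is a valid cartel deviation in the plaintext model under \autoref{def:strong-collusion-proof}, weakly dominates realization-by-realization, and strictly improves on a positive-probability event. In that sense your write-up fills a gap the paper papers over: what is truly ``identical'' between the two equilibria is only the high-level source of inefficiency (money burning for off-chain influence, and for collusion here), not the specific deviation, and your proof makes that precise. One cosmetic caveat: your revenue comparisons invoking \autoref{thm:Regular-RevOpt} implicitly assume $\TypeDistr$ is regular with a monopoly reserve, which the proposition's statement omits but the surrounding section (and its companion \autoref{thm:SR2PA2PASummary}) clearly intends.
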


\section{The Deferred-Revelation Auction (DRA)}
\label{sec:DRA}

Thus far, we have focused on just one idealized type of cryptographic auction, namely, the miner-gatekeeper model.
In the miner-gatekeeper model, we assume that the commitments that are forwarded to the miner are always revealed at a later stage.
In other words, users (or for that matter, the miner for fabricated bids) cannot be strategic by concealing their bids after submitting a commitment to the block-building process.
Such as assumption is compatible with a few cryptographic primitives such as secure multi-party computations \citep{ShiCW23} or verifiable delay functions.
However, such an assumption might be unreasonable when using more lightweight cryptographic machinery such as commitment schemes via digital signatures.
If cryptographic commitment schemes are used, a concealed commitment cannot be recovered by the miner.
In such a setting, \citet{FerreiraW20} introduce the \emph{deferred-revelation auction} (henceforth, DRA), which was further refined by \citet{ChitraFK23} for blockchain environments.

The DRA happens over two stages. First, users submit encrypted bids in the commitment phase. Second, users can choose to either decrypt (``reveal'') their bids, or leave their bids concealed in the decryption phase.
The auction is executed on all bids that were revealed in the decryption phase.
To disincentivize bids from being withdrawn midway, all concealed bids are penalized with a non-trivial fine (which in practice would be implemented via a deposit during the commitment phase).

Note that the above cryptographic model is neither the plaintext nor the miner-gatekeeper model,
e.g., all bids are assumed to be revealed in the miner-gatekeeper model, while users (and the miner for fabricated bids) can choose to strategically forgo revealing bids.
We formally describe the \emph{deferred-revelation model} and capture the differences from the miner-gatekeeper model below.

\subsection{Deferred-Revelation Model}

We now describe the deferred-revelation model, which captures the framework proposed by \cite{ChitraFK23} (also closely related to the model of \cite{FerreiraW20}).
The primary difference between the deferred-revelation model and the miner-gatekeeper model is the bidding process, which happens across two stages---the commitment phase and the decryption phase.
In the commitment phase, encrypted bids are submitted from a bid space $\mathsf{Bid}_{\usr}$.
During the decryption phase, bid are appended with a signal from the revelation space $\{\mathsf{reveal}, \mathsf{conceal}\}$ which corresponds to revelation and concealment of their bids respectively.

The action space and information sets available to the miner is tailored to semantically match the dynamic nature of the deferred-revelation model.
The miner plays an advice $a_\mi$ from an advice space $\mathsf{Adv}_\mi$ during the commitment phase.
Along with deciding the advice $a_\mi$, the miner also decides the set of bids $J = (\hat{\beta}_j)_{1 \leq j \leq |J|}$ she wants to fabricate, and the set of bids $I$ that she wants to submit to the block-building process $\bBuild$ during the commitment phase.\footnote{
    Note that the miner chooses the set of encrypted bids to forward in the commitment phrase, but does not choose a set of `reveals' to forward in the revelation phase.
    This is because
    we want to avoid the miner coercing users by threatening to not forward their `reveals' in the decryption phase. If their bids are not revealed, users would have to pay a penalty $P_{\mathsf{conceal}}$. Thus, users are better off paying any ransom less than $P_{\mathsf{conceal}}$ demanded by the miner if at all they decide to bid during the commitment phase. This might nudge users to either not submit bids or to be strategic and bid something other than their value keeping the ransom in mind. We avoid this issue and stick to the framework proposed by \citet{ChitraFK23} by allowing the miner to censor bids only in the commitment phase. \label{fn:Ransom}
}
While the miner is oblivious to the contents of the bids during the commitment phase (the bids are encrypted), she can see the contents of all revealed bids during the decryption phase.
Thus, like in the miner-gatekeeper model, the set $J$ can depend only on the identifiers of the submitted bids and not the set $H$ of bids themselves.
However, deciding which subset of fabricated bids $J_{\mathsf{reveal}}\subseteq J$ to reveal can depend on the contents of all revealed bids $H_{\mathsf{reveal}}$.

Formally, the miner's strategy $s^\onCG_{\mi, \mathsf{commit}} \in S^\onCG_{\mi, \mathsf{commit}}$ in the commitment phase maps the set of all users who have submitted a bid to three pieces of information: an advice $a_\mi$, a subset $I \subseteq H$ of user-submitted bids to pass on to $\bBuild$, and a set of fabricated bids $J$.
The miner's strategy $s^\onCG_{\mi, \mathsf{decrypt}} \in S^\onCG_{\mi, \mathsf{decrypt}}$ in the decryption phase takes as input the (contents of) set of revealed bids $H_{\mathsf{reveal}}$ and a set of fabricated bids $J$ and outputs a subset $J_{\mathsf{reveal}} \subseteq J$ to reveal (i.e, play $\mathsf{reveal}$ in the decryption phase for all bids in $J_{\mathsf{reveal}}$ and $\mathsf{conceal}$ for all bids in $J_{\mathsf{conceal}} = J \setminus J_{\mathsf{reveal}}$).

Finally, we have to constrain the block-building process to be invariant to the set of bids concealed during the decryption phase.
For a set of revealed bids $\beta_{\mathsf{reveal}}$ and a set of concealed bids $\beta_{\mathsf{conceal}}$ passed on to $\bBuild$, the allocation rule is identical to passing on just the set of revealed bids $\beta_{\mathsf{reveal}}$, i.e,
\[X(a_\mi; \beta_{\mathsf{reveal}}; \beta_{\mathsf{conceal}}) = X(a_\mi; \beta_{\mathsf{reveal}}; \emptyset).\]
(And in particular, in both auction outcomes, the users who conceal their bids are not allocated.)
Similarly, the payments made by all users that revealed their bids should not depend on the set of concealed bids. Formally, for a user $i$ who revealed his bid,
\[P(a_\mi; \beta_{\mathsf{reveal}}; \beta_{\mathsf{conceal}}) = P(a_\mi; \beta_{\mathsf{reveal}}; \emptyset).\]
A user $i$ who chooses to conceal their bid pay a penalty $P_i(a_\mi; \beta_{\mathsf{reveal}}; \beta_{\mathsf{conceal}}) = P_{\mathsf{conceal}} $ for some constant $ P_{\mathsf{conceal}} \geq 0$.

Simplicity definitions carry-over naturally to the deferred-revelation model.
For instance, on-chain miner simplicity corresponds to the miner playing a compliant strategy, i.e, playing an advice, not censoring any bids and not fabricating bids in the commitment phase (and thus, the miner has no non-trivial action in the decryption phase), while on-chain user simplicity corresponds to users best responding to a compliant miner strategy by bidding truthfully in the commitment phase and revealing their bids in the decryption phase.
Off-chain influence proofness and strong collusion proofness also have similar extensions.

\subsection{Second-Price Auction in the Deferred-Revelation Model}

\citet{FerreiraW20} and \citet{ChitraFK23} consider the second-price auction with reserve in the deferred-revelation model.
They assume that the block-building process knows the monopoly reserve $\mathsf{r}$ of the distribution $\TypeDistr$ of values and thus, do not receive any advice from the miner.
In particular, unlike other auctions throughout our paper, the block-building process of this auction is prior-dependent.

\begin{definition}[\citealp{FerreiraW20, ChitraFK23}, Deferred-Revelation (Second-Price) Auction] \label{def:DRA}
    The block building process $\bBuild$ of the deferred revelation auction (DRA) is as follows:
    \begin{itemize}
        \item $\mathsf{Adv}_\mi = \{\emptyset\}$. $\bBuild$ takes no inputs from the miner. $\bBuild$ is exogenously informed of the monopoly reserve $\mathsf{r}$ of the distribution $\TypeDistr$.
        \item $\mathsf{Bid}_{\usr} = \R_{\ge 0}$. The commitment phase is a standard sealed-bid auction.
        \item Let $\beta_{\mathsf{reveal}}$ be the set of all revealed bids in the decryption phase.
        The largest bid in $\beta_{\mathsf{reveal}}$ conditioned on being larger than the monopoly $\mathsf{r}$ is included.
        Formally, for a set of revealed bids $\beta_{\mathsf{reveal}}$ and a set of concealed bids $\beta_{\mathsf{conceal}}$,
        \begin{equation}
            \notag
            \begin{split}
                X_i(\emptyset; \beta_{\mathsf{reveal}}, \beta_{\mathsf{conceal}}) =
                \begin{cases}
                    1 & \text{if } \beta_i \in \beta_{\mathsf{reveal}}, \beta_i \geq \mathsf{r} \text{ and } \beta_i = \arg \max_{j} \{\beta_j | \beta_j \in \beta_{\mathsf{reveal}}\} \\
                    0 & \text{otherwise}
                \end{cases}
            \end{split}
        \end{equation}
        \item If any user is included, the users pays the second highest bid $\beta^{(2)}_\mathsf{reveal}$ amongst all revealed bids, or the reserve $\mathsf{r}$, whichever is higher. All concealed bids on the other hand, pay a penalty $P_\mathsf{conceal}$. In other words, for a set of bids $b$ of which $\beta_{\mathsf{reveal}}$ are revealed and $\beta_{\mathsf{conceal}}$ are concealed,
        \begin{equation}
            \notag
            \begin{split}
                P_i(\emptyset; \beta_{\mathsf{reveal}}; \beta_{\mathsf{conceal}}) =
                \begin{cases}
                    \max \{\beta^{(2)}_{\mathsf{reveal}}, \mathsf{r}\} & \text{if } \beta_i \in \beta_{\mathsf{reveal}} \text{ and } \beta_i = \arg \max_{j} \{\beta_j | b_j \in b_{\mathsf{reveal}}\} \cup \{\mathsf{r}\} \\
                    P_\mathsf{conceal} & \text{if } b_i \in \beta_\mathsf{conceal} \\
                    0 & \text{otherwise}
                \end{cases}                
            \end{split}
        \end{equation}
        \item All collected payments that are not penalties are transferred to the miner.
        \[
        \mathsf{Rev}(\emptyset; \beta_{\mathsf{reveal}}; \beta_\mathsf{conceal}) = \sum_i P_i(\emptyset; \beta_{\mathsf{reveal}}; \beta_\mathsf{conceal}) \cdot \1{\beta_i \in \beta_\mathsf{reveal}}
        \]
    \end{itemize}
\end{definition}

We analyze the simplicity properties satisfied by the DRA.
Conditioned on the miner playing a compliant strategy, the deferred-revelation auction is only a different implementation of the second-price auction for all users who reveal their bids.
Thus, it is DSIC for users to bid truthfully and reveal their bids, and is on-chain user simple.

On-chain miner simplicity on the other hand is not as straightforward.
The most natural argument would claim that the miner best responds by revealing all fabricated bids in the decryption phase irrespective of the bids submitted by the users, followed by claiming that the miner optimizes her revenue by injecting no fake bids in the commitment phase conditioned on always revealing all her bids (the latter claim is similar to on-chain miner simplicity of C$(k+1)$PA in the miner-gatekeeper model).
However, the former of the two claims is not true, and proving on-chain miner simplicity requires a bit more subtlety.\footnote{
    In fact, it is not DSIC for users to bid truthfully and reveal irrespective of the miner's strategy (similar to \autoref{fn:OneBid}).
    For example, the miner can inject a fake bid each for all non-negative reals and reveal only the fabricated bids below the largest revealed user bid.
    Thus, the user with the highest bid pays (just below) his bid and the auction reduces to a first-price auction.
    However, the above strategy is not profitable for the miner since, if $P_{\mathsf{conceal}} > 0$ and many fake bids are placed, she pays more in penalties for concealing bids than she gains in revenue.
    \label{fn:DRANotDSIC}}
\citet{ChitraFK23} show that for any $\alpha$-regular distribution $\TypeDistr$ for any $\alpha > 0$ (formally,the virtual value function $\phi$ satisfies $\phi(v_i) - \phi(\Tilde{v}_i) \geq \alpha \, (v_i - \Tilde{v}_i)$ for all $v_i \geq \Tilde{v}_i$) there exists a large enough fine $P_{\mathsf{conceal}}$ such that the strategy profile $\sigma^\onCG = (s_{\mi}^{\mathsf{comp}}(\emptyset), (v_1, \mathsf{reveal}), \dots, (v_n, \mathsf{reveal}))$ in the DRA is on-chain miner simple.\footnote{
    The penalty $P_{\mathsf{conceal}}$ prescribed by \citet{ChitraFK23} is a function of the monopoly reserve $\mathsf{r}$. Thus, it is important that the block-building process knows $\mathsf{r}$ and is not advised by the miner. If instead the miner could set the penalty, she would conveniently set it to zero, thus incurring no penalty for concealing bids, and deviate according to the strategy sketched in \autoref{fn:DRANotDSIC}.
    This is the reason we assume that the deferred-revelation auction is not prior-independent. } 

\begin{theorem}[\citealp{ChitraFK23}] \label{thm:OriginalDRA}
    For all $\alpha$-regular distributions $\TypeDistr$ ($\alpha > 0$), there exists a penalty $P_{\mathsf{conceal}}$ such that the miner's optimal strategy in the DRA is playing the compliant strategy $(\emptyset, H, \emptyset)$ and conditioned on the miner playing $(\emptyset, H, \emptyset)$, the DRA is DSIC for users, i.e, user $i$'s optimizes his utility by bidding $(v_i, \mathsf{reveal})$ regardless of the actions played by the other users.
\end{theorem}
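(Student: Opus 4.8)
The plan is to establish the two halves of the statement separately, since this is the result of \citet{ChitraFK23} and I would follow their overall strategy. First I would dispatch the easy direction: conditioned on the miner playing the compliant strategy $s_{\mi}^{\mathsf{comp}}(\emptyset)$, bidding $(v_i, \mathsf{reveal})$ is a dominant strategy for each user $i$. When the miner is compliant, no bids are censored or fabricated, so among the users who reveal, the block-building process of \autoref{def:DRA} is literally a single-item second-price auction with reserve $\mathsf{r}$; this is DSIC, and being individually rational, a truthful revealer never obtains negative utility. Concealing, by contrast, guarantees utility $-P_{\mathsf{conceal}} \le 0$ together with no allocation. Hence revealing and bidding truthfully weakly dominates any conceal strategy and any non-truthful reveal, establishing on-chain user simplicity.

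The substance is the second half: that, against users fixed at $(v_i, \mathsf{reveal})$, the miner's revenue-optimal deviation is to play compliant. I would first prune the deviation space. Censoring a user's bid in the commitment phase can only remove a revealed bid and therefore only weakly lowers the second-highest revealed bid (hence the winner's payment), so censoring never helps. It then remains to rule out fabricated bids $J$ together with an arbitrary decryption-phase reveal policy $J_{\mathsf{reveal}}$ that may depend on all revealed user bids. The key structural observation is that a fabricated bid is useful only to \emph{raise the price}: revealing a fabricated bid above the top user bid $v^{(1)}$ makes that fabricated bid win, and since payments made by the miner's own fabricated bids are subtracted from her revenue, this yields net revenue $0$ on the winning slot while displacing the real payment, and is thus strictly harmful. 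Consequently, in any realization the only profitable action is to reveal a single fabricated bid $\hat b$ with $\max\{v^{(2)},\mathsf{r}\} < \hat b < v^{(1)}$, raising the winner's payment from $\max\{v^{(2)},\mathsf{r}\}$ to $\hat b$, and to conceal every other fabricated bid.

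From here I would argue in two steps. For a \emph{single} fabricated bid $\hat b$, the expected net gain over $v \sim \TypeDistr^n$ equals $\E{(\hat b - \max\{v^{(2)},\mathsf{r}\})\cdot \1{\max\{v^{(2)},\mathsf{r}\} < \hat b < v^{(1)}}} - P_{\mathsf{conceal}}\cdot \Pr[\hat b \ge v^{(1)}]$, because the miner conceals $\hat b$ (paying the burned penalty) exactly when $\hat b \ge v^{(1)}$ and otherwise reveals it harmlessly or profitably. I would show this quantity is non-positive for every $\hat b$ once $P_{\mathsf{conceal}}$ is large enough: the first term is controlled using that $\hat b \cdot \Pr[v^{(1)} > \hat b]$ is governed by the tail of $\TypeDistr$, which is exactly where $\alpha$-regularity enters (virtual values increasing at rate at least $\alpha$ force the tail, and hence the attainable first-price premium, to decay), while $\Pr[\hat b \ge v^{(1)}]$ is bounded away from $0$ over the relevant range of $\hat b$. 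For \emph{multiple} fabricated bids, note that only the single highest revealed fabricated bid below $v^{(1)}$ ever affects the price, so in every realization all but at most one fabricated bid must be concealed; the expected penalty therefore grows linearly in the number of fabricated bids while the attainable premium is still capped by the single-bid bound, so any deviation with two or more fabricated bids is dominated by the single-bid case and hence unprofitable. Choosing $P_{\mathsf{conceal}}$ as the threshold emerging from the single-bid analysis (a function of $\alpha$ and $\mathsf{r}$) then makes the compliant strategy the miner's best response.

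The main obstacle I anticipate is the single-fabricated-bid tail estimate: bounding $\E{(\hat b - \max\{v^{(2)},\mathsf{r}\})\1{v^{(2)} < \hat b < v^{(1)}}}$ uniformly in $\hat b$ for distributions of possibly unbounded support, which is precisely why the hypothesis is strengthened from regularity to $\alpha$-regularity and why the prescribed $P_{\mathsf{conceal}}$ must depend on the (prior-known) reserve $\mathsf{r}$ rather than being set by the miner (cf.\ \autoref{fn:DRANotDSIC} and the surrounding discussion). A secondary subtlety is that one cannot shortcut the second half by invoking Myerson's characterization directly, since truthful revealing is \emph{not} a dominant strategy against arbitrary miner deviations; the argument must instead fix the users at $(v_i,\mathsf{reveal})$ and reason about the miner's pointwise-optimal reveal policy as above.
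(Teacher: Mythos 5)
A preliminary remark: the paper does not actually prove \autoref{thm:OriginalDRA}; it imports the result from \citet{ChitraFK23}, and the surrounding text (\autoref{fn:DRANotDSIC} in particular) only explains why the naive argument fails. So your proposal must be judged on its own correctness. Its first half is fine: under a compliant miner, the DRA restricted to revealed bids is literally a second-price auction with reserve, and truthful-bid-plus-reveal weakly dominates. Your pruning of censoring is also fine.

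The second half has a genuine gap: you misidentify the miner's optimal decryption-phase policy. You assume the miner ``conceals $\hat b$ (paying the burned penalty) exactly when $\hat b \ge v^{(1)}$,'' which yields the deterrence term $P_{\mathsf{conceal}}\cdot\Pr[\hat b \ge v^{(1)}]$. But the miner has a bail-out option you miss: she can \emph{reveal} an over-the-top fabricated bid, which then wins and pays the price to the miner herself; since payments made by fabricated bids are netted out of $\rev^\onCG$ (\autoref{def:model}), this realization costs her exactly the forgone user payment $\max\{v^{(2)},\mathsf{r}\}$ and no penalty at all. Hence the per-realization deterrence is $\min\{P_{\mathsf{conceal}},\max\{v^{(2)},\mathsf{r}\}\}$, bounded uniformly in $P_{\mathsf{conceal}}$; the same bail-out (reveal all surplus fabricated bids at net zero) defeats your claim that the expected penalty grows linearly in the number of fabricated bids. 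Consequently, choosing $P_{\mathsf{conceal}}$ large enough to make \emph{your} expression non-positive does not bound the miner's actual optimal deviation; what must instead be shown is
\[
\Es{v \sim \TypeDistr^n}{(\hat b - \max\{v^{(2)},\mathsf{r}\})\cdot\1{\max\{v^{(2)},\mathsf{r}\} \le \hat b < v^{(1)}}}
\le
\Es{v \sim \TypeDistr^n}{\min\{P_{\mathsf{conceal}},\max\{v^{(2)},\mathsf{r}\}\}\cdot\1{\hat b \ge v^{(1)} \ge \mathsf{r}}},
\]
whose right-hand side saturates as $P_{\mathsf{conceal}}\to\infty$. This is exactly where $\alpha$-regularity must do real structural work---comparing the price-raising gain against the \emph{bounded} forgone-revenue loss, in the style of \citet{FerreiraW20}---rather than, as in your sketch, controlling a tail against an arbitrarily large constant. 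A tell-tale symptom of the error: under your accounting, essentially any distribution with finite mean and positive mass below the reserve would satisfy the theorem for large enough $P_{\mathsf{conceal}}$, rendering the $\alpha$-regularity hypothesis superfluous, whereas credibility of deferred-revelation auctions is known to fail for irregular distributions no matter how large the penalty is.
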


Based on the above discussions, we conclude \autoref{thm:DRASummary}.
The proofs for on-chain user simplicity, off-chain influence proofness, and strong collusion non-proofness hold for exactly the same reason as in the C$(k+1)$PA (\autoref{thm:C2PAPosSummary} and \autoref{thm:C2PACol}). \autoref{thm:OriginalDRA} shows the result for on-chain miner simplicity.

\begin{proposition} \label{thm:DRASummary} 
    For any $\alpha$-regular distributions $\TypeDistr$ with $\alpha > 0$, consider the DRA with a high enough fine $P_{\mathsf{conceal}}$ as given by \autoref{thm:OriginalDRA}.
    Then, 
    the strategy profile $\sigma^\onCG = (s_{\mi}^{\mathsf{comp}}(\emptyset), (v_1, \mathsf{reveal}),\allowbreak \dots,\allowbreak (v_n, \mathsf{reveal}))$ is on-chain user and miner simple and off-chain influence proof, but not strong collusion proof.
\end{proposition}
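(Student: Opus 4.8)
The plan is to leverage the observation that, conditioned on the miner playing the compliant strategy $s_{\mi}^{\mathsf{comp}}(\emptyset)$ and every user revealing, the DRA is merely an alternate implementation of the single-item second-price auction with reserve $\mathsf{r}$ equal to the monopoly reserve $a_\mi$ of $\TypeDistr$. Consequently three of the four claims reduce almost verbatim to the corresponding arguments for C$(k+1)$PA with $k=1$ (\autoref{thm:C2PAPosSummary}, \autoref{thm:C2PACol}), and only on-chain miner simplicity requires genuinely new input, which I would import from \autoref{thm:OriginalDRA}.

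First I would establish on-chain user simplicity and off-chain influence proofness. Conditioned on $s_{\mi}^{\mathsf{comp}}(\emptyset)$, each user faces a second-price auction with reserve $\mathsf{r}$ run over the revealed bids; since the penalty $P_{\mathsf{conceal}}\ge 0$ makes concealing weakly worse once a bid is committed, the dominant strategy is to bid $v_i$ and reveal, exactly as in the Vickrey analysis (\autoref{sec:C2PADSIC}), so $\sigma^\onCG$ is a truthful, individually rational user BNE. For off-chain influence proofness I would note that in $\sigma^\onCG$ everyone reveals, so no penalties are burned and the miner's revenue equals the total user payment, which is precisely that of the second-price auction with monopoly reserve, i.e.\ the revenue-optimal BIC mechanism for a single item and regular i.i.d.\ values (\autoref{thm:Regular-RevOpt} with $k=1$). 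Fixing any off-chain mechanism $\widetilde{\mathcal{M}}_{\mathsf{off}}$ and any induced user BNE, the direct revelation principle (\autoref{thm:DirectRevelation}) shows this BNE implements some BIC interim rule whose total expected user payment is bounded by the optimal revenue (\autoref{thm:RevenueEquivalence}); since the miner's total take (on-chain fees plus off-chain transfers) never exceeds what users pay — penalties and burns only subtract — it cannot beat $\sigma^\onCG$.

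The main obstacle is on-chain miner simplicity, which does not reduce to the C$(k+1)$PA argument: the ``modified environment'' trick fails because the miner can fabricate bids in the commitment phase and then selectively reveal them in the decryption phase, so truthful-and-reveal is not a dominant response to every miner strategy (\autoref{fn:DRANotDSIC}). I would therefore invoke \autoref{thm:OriginalDRA} directly: for an $\alpha$-regular $\TypeDistr$ and sufficiently large $P_{\mathsf{conceal}}$, any gain from such selective-reveal manipulations is outweighed by the penalties the miner pays on concealed fabricated bids, so her revenue-maximizing response is exactly $s_{\mi}^{\mathsf{comp}}(\emptyset)$. This gives both compliance and best-response, hence on-chain miner simplicity.

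Finally, to show failure of strong collusion proofness I would mirror \autoref{thm:C2PACol} with $k=1$. Treating the miner together with a colluding user $i$ of value $v_i$ as a grand auctioneer, the cartel maximizes joint virtual welfare by including $i$ whenever the top competing value lies below $\phi^{-1}(v_i)$, which $i$ achieves by committing and revealing the overbid $\phi^{-1}(v_i)$ rather than $v_i$ while the miner stays compliant. Since $\phi^{-1}(v_i) > v_i$ strictly for every interior $v_i$ and $\phi^{-1}(v_i) > \mathsf{r}$ for every $v_i > 0$, there is positive probability that the top competitor's value falls in $(v_i, \phi^{-1}(v_i))$, on which the cartel strictly gains; hence $\sigma^\onCG$ is not strong collusion proof for any such distribution.
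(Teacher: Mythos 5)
Your proposal is correct and follows essentially the same route as the paper: on-chain user simplicity, off-chain influence proofness (via \autoref{thm:Regular-RevOpt} and revenue optimality over all user BNEs), and the failure of strong collusion proofness (the $\phi^{-1}(v_i)$ overbid, mirroring \autoref{thm:C2PACol}) are all reduced to the C$(k+1)$PA arguments, while on-chain miner simplicity is imported directly from \autoref{thm:OriginalDRA}. You also correctly pinpoint the one genuine subtlety---that selective revelation of fabricated bids (\autoref{fn:DRANotDSIC}) breaks the modified-environment trick, which is exactly why the paper defers to \citet{ChitraFK23} for that piece.
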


\section{Omitted Proofs} \label{sec:ExcProofs}

\subsection{On-Chain Simplicity and Collusion Proofness in EIP-1559} \label{sec:ExcEIP1559}

We now prove \autoref{thm:EIP1559Rou}, directly adapting the arguments from \cite{Roughgarden20} to our setting.

As mentioned in \autoref{sec:EIP1559}, the first step is to verify that $\sigma^\onCG$ is indeed a user BNE for all distributions $\TypeDistr$ in both the on-chain game $\onCG$ and the off-chain game $\offCG$. 
\begin{lemma} \label{thm:EIPUBNE}
    $\sigma^\onCG = (s_{\mi}^{\mathsf{comp}}(\emptyset), v_1, \dots, v_n)$ is a user BNE in both, the on-chain game $\onCG$ and off-chain game $\offCG$ induced by EIP-1559 for any distribution $\TypeDistr$.
\end{lemma}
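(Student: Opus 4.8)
The plan is to argue the claim separately for the two games, reducing both to a single observation: EIP-1559 played against a compliant miner is just a posted-price mechanism in which each user's allocation and payment depend only on their own bid.

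First I would analyze the on-chain game. Fix the miner strategy $s_{\mi}^{\mathsf{comp}}(\emptyset)$, so that every submitted bid is forwarded to $\bBuild$ with no censoring and no fabrication. Because supply is unlimited and the miner is compliant, the inclusion and payment of user $i$ are determined by $i$'s own bid alone: a bid $b_i \ge p$ is included and pays $p$, a bid $b_i < p$ is excluded and pays $0$, independent of $b_{-i}$. I would then check that a single truthful bid $b_i = v_i$ maximizes user $i$'s utility pointwise in $b_{-i}$: it yields $v_i - p$ when $v_i \ge p$ and $0$ when $v_i < p$, and nothing does better, since any included bid yields $v_i - p$, any excluded bid yields $0$, and submitting several bids at or above $p$ only multiplies the burned payment $p$ while the additive value $v_i$ is collected once. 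Hence truthful bidding is a dominant strategy, so in particular a best response to $s_{\usr, -i}^\onCG$, establishing that $\sigma^\onCG$ is a user BNE in $\onCG$.

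Second I would transfer this to the off-chain game. The off-chain profile associated with $\sigma^\onCG$ is $(\Moff, \bot_{\mathsf{off}}(v_1), \dots, \bot_{\mathsf{off}}(v_n))$, where $\Moff$ is the trivial off-chain mechanism that always instructs the miner to play $s_{\mi}^{\mathsf{comp}}(\emptyset)$. By the definition of a trivial off-chain mechanism (\autoref{def:model}, \autoref{item:model-offCG}), its message space consists solely of abstaining bids, and each abstaining bid $\bot_{\mathsf{off}}(s_{\usr, i}^\onCG)$ simply commits user $i$ to playing $s_{\usr, i}^\onCG$ on-chain while charging no off-chain payment. Consequently a user's choice of message in $\Moff$ is nothing more than a choice of on-chain strategy against the fixed compliant miner, and $X_i^{\offCG}$, $P_i^{\offCG}$ coincide with $X_i^\onCG$, $P_i^\onCG$ (the off-chain payments being identically zero). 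The off-chain game with trivial $\Moff$ is therefore strategically identical to the on-chain game with fixed miner strategy $s_{\mi}^{\mathsf{comp}}(\emptyset)$, so the user-BNE property from the first step carries over verbatim.

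I expect the only delicate point to be the bookkeeping around the trivial off-chain mechanism, namely verifying that abstaining bids induce exactly the on-chain strategy with zero off-chain transfer, so that no genuinely new deviation becomes available to a user in $\offCG$ that was unavailable in $\onCG$. Once that equivalence is made precise, both halves of the statement follow immediately from the posted-price structure, with no further computation required.
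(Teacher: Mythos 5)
Your proposal is correct and takes essentially the same approach as the paper: the same posted-price observation that each user's allocation and payment depend only on their own bid, followed by the same case analysis ($v_i \ge p$ versus $v_i < p$) showing truthful bidding is a pointwise best response. The paper dismisses the off-chain half with ``the proof for $\offCG$ is analogous,'' whereas you spell out the reduction via the trivial off-chain mechanism and also explicitly handle multi-bid deviations (which the paper's footnoted conventions cover implicitly); these are welcome elaborations, not a different argument.
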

\begin{proof}
    We verify that $\sigma^\onCG$ is a user Bayes-Nash equilibrium in $\onCG$. The proof for $\offCG$ is analogous.

    Observe that the allocation and payments made by user $i$ is independent of the other bids submitted to the mechanism.
    Consider a value $v_i$ larger than the price $p$.
    Placing any bid larger than the price $p$ will lead to getting included and making a payment equal to $p$.
    The user bags a utility equal to $v_i - p \geq 0$, which is larger than the zero utility from bidding below the price and getting excluded.
    Similarly, for a value $v_i$ smaller than the price $p$, bidding below the price leads to a zero utility, while bidding above the price leads to a utility $v_i - p < 0$.
    Bidding $v_i$ optimizes user $i$'s utility in both cases, as required.
\end{proof}

\noindent Next, we show that $\sigma^\onCG$ is on-chain simple.

\begin{lemma} \label{thm:EIPOnChain}
    Consider the on-chain game $\onCG$ induced by EIP-1559 for some distribution $\TypeDistr$. The strategy profile $\sigma^\onCG$ is on-chain simple.
\end{lemma}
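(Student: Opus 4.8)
The plan is to verify the two components of on-chain simplicity (\autoref{def:OnChainSimple}) separately, exploiting the posted-price structure of EIP-1559. First I would invoke \autoref{thm:EIPUBNE} to record that $\sigma^\onCG$ is already a user BNE, so that it remains only to establish on-chain user simplicity and on-chain miner simplicity.

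For on-chain user simplicity, the key observation is that in EIP-1559 the allocation $X_i(\emptyset; b_i, b_{-i})$ and payment $P_i(\emptyset; b_i, b_{-i})$ of user $i$ depend only on $i$'s own bid $b_i$, and not at all on $b_{-i}$ (immediate from \autoref{def:EIP1559}). Consequently the argument of \autoref{thm:EIPUBNE} upgrades directly from a Bayes-Nash statement to a pointwise (dominant-strategy) one: for any fixed $b_{-i}$, a user with value $v_i \ge p$ earns $v_i - p \ge 0$ by bidding $v_i$ (indeed by any bid at least $p$) and $0$ otherwise, while a user with value $v_i < p$ earns $0$ by bidding truthfully and $v_i - p < 0$ if instead included. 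Hence truthful bidding is a dominant strategy (establishing items 1 and 2 of on-chain user simplicity) and guarantees nonnegative utility (item 3, individual rationality).

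For on-chain miner simplicity, compliance of $s_{\mi}^{\mathsf{comp}}(\emptyset)$ holds by definition, so the crux is showing it is a revenue best-response against truthful users. Here I would rely on the miner's revenue accounting from \autoref{def:model}: since $\mathsf{Rev}(\emptyset; b) = 0$ identically, the miner's on-chain utility $\rev^\onCG$ equals the negative of the burned payments made by her own fabricated bids. Under the compliant strategy the miner fabricates nothing and collects revenue exactly $0$. For an arbitrary deviation $\widetilde{s}_\mi^\onCG$, censoring user bids and fabricating bids strictly below $p$ both leave the revenue at $0$ (sub-$p$ fabricated bids are never included, and censoring does not change the fact that the miner keeps nothing), whereas any fabricated bid at or above $p$ is included and forces the miner to burn $p$, strictly lowering her revenue below $0$. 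Thus $\Es{v \sim \TypeDistr}{\rev^\onCG(s_{\mi}^{\mathsf{comp}}(\emptyset); s_\usr^\onCG(v))} = 0 \ge \Es{v \sim \TypeDistr}{\rev^\onCG(\widetilde{s}_\mi^\onCG; s_\usr^\onCG(v))}$ for every deviation, completing on-chain miner simplicity.

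The main obstacle is the miner-side accounting: because the on-chain revenue is identically zero, the compliant strategy is ``best'' only once one correctly charges the miner for the burned payments of her own fabricated bids. Being precise that a fabricated bid exceeding $p$ is a \emph{strict} cost (rather than revenue-neutral) is exactly what makes the compliant strategy a genuine best response, so I would handle all three deviation types—censoring, sub-$p$ fabrication, and $\ge p$ fabrication—explicitly rather than merely asserting that ``revenue is always zero.''
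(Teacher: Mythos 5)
Your proof is correct and follows essentially the same route as the paper's: on-chain user simplicity via the observation that each user's allocation and payment in EIP-1559 depend only on their own bid (so the user-BNE argument of \autoref{thm:EIPUBNE} is already a dominant-strategy argument), and on-chain miner simplicity via the accounting that $\mathsf{Rev} \equiv 0$ makes the compliant strategy's revenue of $0$ an upper bound, with fabricated bids at or above $p$ strictly burning the miner's own money. Your write-up is somewhat more explicit than the paper's (which compresses the deviation cases into two sentences), but the underlying argument is identical.
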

\begin{proof}
    We have verified that $\sigma^\onCG$ is a user Bayes-Nash equilibrium in \autoref{thm:EIPUBNE}. On-chain user simplicity follows, since all users bid truthfully in $\sigma^\onCG$. The miner will not be able to receive any revenue on-chain irrespective of the set of fake bids or censored bids.
    The miner could potentially end up burning her own money if she fabricates bids larger than the price posted by the mechanism.
    Thus, the miner optimizes her revenue by playing $s_{\mi}^{\mathsf{comp}}(\emptyset)$.
\end{proof}

\noindent Finally, we verify that $\sigma^\onCG$ is collusion proof.

\begin{lemma} \label{thm:EIPCollusion}
    The strategy profile $\sigma^\onCG  = (s_{\mi}^{\mathsf{comp}}(\emptyset), v_1, \dots, v_n)$ is collusion proof for the on-chain game $\onCG$ induced by EIP-1559 for all distributions $\TypeDistr$.
\end{lemma}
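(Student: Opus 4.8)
The plan is to exploit the two structural features of EIP-1559 with unlimited supply: (i) the block-building process returns zero revenue to the miner, so the only effect of the miner fabricating bids is to burn her own money; and (ii) each user's allocation and payment depend solely on whether her own bid clears the exogenous price $p$, independently of all other (real or fabricated) bids. Together these imply that the cartel's joint utility decomposes cleanly and that the compliant profile $\sigma^\onCG$ already maximizes it.

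First I would fix a user $i$ and a value $v_i$, and consider an arbitrary joint deviation $(\widetilde{s}_\mi^\onCG, b_i)$. I would write the cartel's joint utility as the sum $\rev^\onCG(\widetilde{s}_\mi^\onCG;\,b_i, s_{\usr,-i}^\onCG(v_{-i})) + \big(v_i X_i^\onCG - P_i^\onCG\big)$ and bound each piece separately. For the revenue term, since $\mathsf{Rev}(\emptyset;b)=0$ identically, the reward returned by $\bBuild$ is always $0$; the miner's utility is therefore exactly the negative of the burned payments of any fabricated bids she injects above $p$, so $\rev^\onCG(\widetilde{s}_\mi^\onCG;\cdot)\le 0$ for every miner strategy.

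Second, I would bound user $i$'s term. Because supply is unlimited and each included bid pays exactly $p$, user $i$ is allocated if and only if her bid satisfies $b_i \ge p$ and the miner does not censor it; in that event $X_i^\onCG = 1$ and $P_i^\onCG = p$, so her contribution is $v_i - p$, and otherwise it is $0$. Hence $v_i X_i^\onCG - P_i^\onCG \le \max\{v_i - p,\,0\}$, and crucially this bound is independent of the fabricated bids in $\widetilde{s}_\mi^\onCG$ and of $v_{-i}$ (the miner's only lever on user $i$ is censoring). Combining the two bounds, every joint deviation attains joint utility at most $\max\{v_i-p,\,0\}$.

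Finally I would evaluate $\sigma^\onCG$ itself: the miner plays $s_{\mi}^{\mathsf{comp}}(\emptyset)$ and fabricates nothing, so $\rev^\onCG = 0$, while user $i$ bidding $v_i$ truthfully is included (paying $p$) exactly when $v_i \ge p$, contributing $\max\{v_i-p,\,0\}$. Thus the joint utility under $\sigma^\onCG$ equals $\max\{v_i-p,\,0\}$, matching the upper bound, so the collusion-proofness inequality of \autoref{def:strong-collusion-proof} holds after taking the (trivial) expectation over $v_{-i}$. The only mildly delicate point — which I would state explicitly rather than gloss over — is verifying that fabricated bids can never raise the joint utility: they cannot change user $i$'s own-bid-determined outcome under unlimited supply, and they can only decrease miner revenue through burning, so they are strictly useless to the cartel.
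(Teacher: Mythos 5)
Your proposal is correct and follows essentially the same route as the paper's proof: both arguments rest on the facts that the miner's on-chain utility is at most zero under any strategy (fabricated bids can only burn her own money) and that user $i$'s allocation and payment depend only on whether her own bid clears $p$ (plus censorship), so the cartel's joint utility is pointwise at most $\max\{v_i - p, 0\}$, which the compliant truthful profile already attains. The only cosmetic difference is that you phrase it as an explicit upper bound matched by $\sigma^\onCG$, whereas the paper phrases it as a without-loss-of-generality reduction to no fabrication/censoring followed by the observation that joint utility equals user $i$'s utility.
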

\begin{proof}
    Suppose that the miner colludes with a user $i$ with value $v_i$.
    Observe that user $i$'s inclusion in the block and utility is invariant to any fake bids injected by the miner or any bids censored by the miner (as long as the miner does not censor $i$'s bid).
    As argued in the proof of on-chain miner simplicity (\autoref{thm:EIPOnChain}), the miner will not be able to increase her revenue by injecting fake bids, nor would it increase user $i$'s utility.
    Thus, without loss of generality, assume that the miner does not fabricate bids or censor bids of all users apart from $i$.

    Irrespective of whether the miner includes or censors $i$'s bid, the miner always gets a zero revenue. Thus, the joint utility of the coalition equals user $i$'s utility.
    If the user has a value $v_i$ larger than the price $p$, $i$'s utility is optimized by bidding truthfully and getting included.
    Similarly, if the user has a value smaller than $p$, $i$'s utility is optimized by bidding truthfully and getting excluded.

    Playing $s_{\mi}^{\mathsf{comp}}(\emptyset)$ and $v_i$ optimizes the joint utility of the coalition, implying $\sigma^\onCG$ is collusion proof for $\onCG$ for all distributions $\TypeDistr$.
\end{proof}

\autoref{thm:EIPUBNE}, \autoref{thm:EIPOnChain} and \autoref{thm:EIPCollusion} conclude the proof of \autoref{thm:EIP1559Rou}.

\subsection{The C$(k+1)$PA is DSIC} \label{sec:C2PADSIC}

\begin{lemma}
    For a given reserve $a_\mi$, bidding truthfully is utility optimizing for users irrespective of the strategies employed by the other users.
\end{lemma}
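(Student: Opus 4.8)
The plan is to reduce, for each fixed user $i$ and each fixed profile $b_{-i}$ of the other users' bids, the C$(k+1)$PA to a simple posted-price mechanism faced by user $i$, and then invoke the elementary optimality of truthful reporting against a posted price. This is the classical Vickrey argument adapted to the $(k+1)$th-price auction with reserve, so I expect the conceptual content to be light and the work to be mostly bookkeeping.

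First I would fix user $i$ and the bids $b_{-i}$ of everyone else. Let $\beta_k$ denote the $k$th largest bid among $b_{-i}$ (with the convention $\beta_k = 0$ if fewer than $k$ other bids are present), and define the threshold $\tau = \max\{a_\mi, \beta_k\}$. I would then extract two facts directly from \autoref{def:C2PA}: (i) the reserve condition $b_i \geq a_\mi$ together with the top-$k$ condition $|\{j \neq i : b_j > b_i\}| \leq k-1$ is equivalent to $b_i \geq \tau$, because at most $k-1$ other bids strictly exceed $b_i$ exactly when $b_i \geq \beta_k$; and (ii) whenever $i$ is allocated, $i$'s bid lies among the top $k$ overall, so the $(k+1)$th-largest overall bid $b^{(k+1)}$ equals $\beta_k$, whence the payment $\max\{b^{(k+1)}, a_\mi\}$ equals $\tau$ and is independent of $b_i$. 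Thus, from $i$'s viewpoint, the mechanism behaves exactly as a posted price $\tau$: win and pay $\tau$ if $b_i \geq \tau$, and lose and pay $0$ otherwise.

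Given this reduction, the conclusion is immediate. If $v_i > \tau$, bidding $b_i = v_i$ wins and yields utility $v_i - \tau > 0$, while any deviation either also wins (paying the same $\tau$, hence identical utility) or loses (utility $0 < v_i - \tau$). If $v_i < \tau$, bidding $b_i = v_i$ loses and yields utility $0$, while any winning deviation must bid at least $\tau > v_i$ and yields $v_i - \tau < 0$. If $v_i = \tau$, the user is indifferent: winning and losing both give utility $0$. In every case, $b_i = v_i$ is a best response against every $b_{-i}$, which is precisely dominant-strategy incentive compatibility.

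The only delicate points are bookkeeping rather than conceptual: handling ties (where the lexicographic tie-breaking of \autoref{def:C2PA} must be shown not to disturb the equivalence in (i) at the boundary $b_i = \tau$) and the boundary case of fewer than $k$ competing bids (dispatched by the conventions $\beta_k = 0$ and $b^{(k+1)} = 0$). I expect the tie-handling at $b_i = \tau$ to be the main, though still minor, obstacle, since one must confirm that the allocation there matches the posted-price description regardless of how ties are resolved. This is resolved by the observation above that a user whose value equals the threshold is exactly indifferent, so the tie-break choice never makes truthful bidding strictly suboptimal.
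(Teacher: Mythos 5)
Your proof is correct and is essentially the paper's own argument: both rest on the two facts that, with $b_{-i}$ fixed, user $i$ is allocated precisely when $b_i \ge \max\{a_\mi, \beta_k\}$ and that the payment upon allocation equals this threshold independently of $b_i$; your explicit posted-price reduction merely packages this more uniformly, covering all value cases where the paper works out one case (value above the reserve and among the top $k$) and calls the others analogous. One small correction: \autoref{def:C2PA} contains no lexicographic tie-breaking clause (that clause appears only in the SR2PA of \autoref{sec:ModelExample}), since the allocation rule simply includes every bid meeting the stated condition, so the tie-handling obstacle you anticipate never arises.
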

\begin{proof}
    Fix the bids $b_{-i}$ of users other than $i$.
    For a value $v_i$ for user $i$ larger than the reserve $a_\mi$, and amongst the $k$ largest bids, we will prove that $i$ optimizes his utility by bidding $v_i$.
    The proof of the other cases are analogous.

    By being amongst the $k$ largest bids and the reserve, user $i$ will get included by bidding $v_i$.
    He would need to make payment $P_i(v_i, b_{-i}) = \max \{a_\mi, b^{(k+1)}\}$ and therefore, bag a utility $v_i - \max \{a_\mi, b^{(k+1)}\} \ge 0$.

    By placing a bid $b_i$ larger than $v_i$, user $i$ continues to get allocated and still have to pay $P_i(b_i, b_{-i}) = \max \{a_\mi, b^{(k+1)}\}$, resulting in the same utility as bidding $v_i$.
    Under-bidding by placing a bid $b_i$ less than $v_i$ can result in two outcomes --- where $i$ continues winning the good and where $i$ does not win the good.
    If $i$ wins the good, he still has to make a payment equal to $P_i(b_i, b_{-i}) = \max \{a_\mi, b^{(k+1)}\}$.
    If $i$ does not win the good anymore, he gets a utility equal to zero, which is worse than the utility from truthful bidding.

    Thus, each user optimizes his utility by bidding truthfully.
\end{proof}

\end{document}